\documentclass[pra,aps,10pt,notitlepage,twocolumn,nofootinbib,superscriptaddress]{revtex4-2}

\usepackage{amsfonts,amsmath,amssymb,mathtools,mathrsfs,bbm,float}
\usepackage[ruled, vlined, linesnumbered]{algorithm2e}
\usepackage{graphicx,epic,eepic,epsfig,latexsym,verbatim,color}
\usepackage[inline,shortlabels]{enumitem}
\usepackage[caption=false]{subfig}
\usepackage[dvipsnames]{xcolor}
 
\usepackage{tikz}
\usetikzlibrary{chains}
\usetikzlibrary{fit}

\usepackage{epsfig}
\usetikzlibrary{shapes.symbols,patterns} % for source symbols
\usepackage{pgfplots}

\usepackage[strict]{changepage}

\usepackage[marginal]{footmisc}
\usepackage{url}
\usepackage{theorem}
\usepackage{thmtools}
\usepackage[ISO]{diffcoeff}
\usepackage{xfrac}

\newtheorem{proposition}{Proposition}
\newtheorem{lemma}[proposition]{Lemma}

\newtheorem{theorem}[proposition]{Theorem}

\newenvironment{proof}{\noindent \textbf{{Proof~} }}{\hfill $\blacksquare$}
\def\squareforqed{\hbox{\rlap{$\sqcap$}$\sqcup$}}
\def\qed{\ifmmode\squareforqed\else{\unskip\nobreak\hfil
\penalty50\hskip1em\null\nobreak\hfil\squareforqed
\parfillskip=0pt\finalhyphendemerits=0\endgraf}\fi}
\def\endenv{\ifmmode\;\else{\unskip\nobreak\hfil
\penalty50\hskip1em\null\nobreak\hfil\;
\parfillskip=0pt\finalhyphendemerits=0\endgraf}\fi}

\newcounter{example}

\mathchardef\ordinarycolon\mathcode`\:
\mathcode`\:=\string"8000
\def\vcentcolon{\mathrel{\mathop\ordinarycolon}}
\begingroup \catcode`\:=\active
  \lowercase{\endgroup
  \let :\vcentcolon
  }

\usepackage{hyperref}
\hypersetup{colorlinks=true,citecolor=blue,linkcolor=blue,filecolor=blue,urlcolor=blue,breaklinks=true}
\usepackage{footnotebackref}
\usepackage{cleveref}
\usepackage{graphicx}

\definecolor{darkblue}{RGB}{0,76,156}
\definecolor{darkkblue}{RGB}{0,0,153}
\definecolor{blue2}{RGB}{102,178,255}
\definecolor{darkred}{RGB}{195,0,0}

\RequirePackage[framemethod=default]{mdframed}
\newmdenv[skipabove=7pt,
skipbelow=7pt,
backgroundcolor=darkblue!15,
innerleftmargin=5pt,
innerrightmargin=5pt,
innertopmargin=5pt,
leftmargin=0cm,
rightmargin=0cm,
innerbottommargin=5pt,
linewidth=1pt]{tBox}

\newmdenv[skipabove=7pt,
skipbelow=7pt,
backgroundcolor=blue2!25,
innerleftmargin=5pt,
innerrightmargin=5pt,
innertopmargin=5pt,
leftmargin=0cm,
rightmargin=0cm,
innerbottommargin=5pt,
linewidth=1pt]{dBox}

\newmdenv[skipabove=7pt,
skipbelow=7pt,
backgroundcolor=darkred!15,
innerleftmargin=5pt,
innerrightmargin=5pt,
innertopmargin=5pt,
leftmargin=0cm,
rightmargin=0cm,
innerbottommargin=5pt,
linewidth=1pt]{rBox}

\newcommand{\nc}{\newcommand}
\nc{\bra}[1]{\langle#1|}
\nc{\ket}[1]{|#1\rangle}
\nc{\ketbra}[2]{\lvert#1\rangle\!\langle#2\rvert}
\nc{\braket}[2]{\langle#1|#2\rangle}

\DeclarePairedDelimiter{\norm}{\lVert}{\rVert}
\DeclarePairedDelimiter{\abs}{\lvert}{\rvert}

\DeclarePairedDelimiterX{\infdivx}[2]{(}{)}{%
  #1\;\delimsize\|\;#2%
}

\nc{\proj}[1]{| #1\rangle\!\langle #1 |}
\nc{\avg}[1]{\langle#1\rangle}
\nc{\smfrac}[2]{\mbox{$\frac{#1}{#2}$}}
\nc{\tr}{\operatorname{Tr}}
\nc{\ox}{\otimes}
\nc{\dg}{\dagger}
\nc{\dn}{\downarrow}
\nc{\cA}{{\cal A}}
\nc{\cB}{{\cal B}}
\nc{\cC}{{\cal C}}
\nc{\cD}{{\cal D}}
\nc{\cE}{{\cal E}}
\nc{\cF}{{\cal F}}
\nc{\cG}{{\cal G}}
\nc{\cH}{{\cal H}}
\nc{\cI}{{\cal I}}
\nc{\cJ}{{\cal J}}
\nc{\cK}{{\cal K}}
\nc{\cL}{{\cal L}}
\nc{\cM}{{\cal M}}
\nc{\cN}{{\cal N}}
\nc{\cO}{{\cal O}}
\nc{\cP}{{\cal P}}
\nc{\cQ}{{\cal Q}}
\nc{\cR}{{\cal R}}
\nc{\cS}{{\cal S}}
\nc{\cT}{{\cal T}}
\nc{\cU}{{\cal U}}
\nc{\cV}{{\cal V}}
\nc{\cX}{{\cal X}}
\nc{\cY}{{\cal Y}}
\nc{\cZ}{{\cal Z}}
\nc{\cW}{{\cal W}}
\nc{\csupp}{{\operatorname{csupp}}}
\nc{\qsupp}{{\operatorname{qsupp}}}
\nc{\var}{{\operatorname{var}}}
\nc{\rar}{\rightarrow}
\nc{\lrar}{\longrightarrow}
\nc{\polylog}{{\operatorname{polylog}}}
\nc{\wt}{{\operatorname{wt}}}
\nc{\supp}{{\operatorname{supp}}}

\nc{\argmin}{{\operatorname{argmin}}}

\makeatletter
\newcommand{\tpmod}[1]{{\@displayfalse\pmod{#1}}}
\makeatother

\def\i{\mathbf{i}}

\def\x{\xi}

\nc{\RR}{{{\mathbb R}}}
\nc{\CC}{{{\mathbb C}}}
\nc{\FF}{{{\mathbb F}}}
\nc{\NN}{{{\mathbb N}}}
\nc{\ZZ}{{{\mathbb Z}}}
\nc{\PP}{{{\mathbb P}}}
\nc{\QQ}{{{\mathbb Q}}}
\nc{\UU}{{{\mathbb U}}}
\nc{\EE}{{{\mathbb E}}}
\nc{\id}{{\operatorname{id}}}

\nc{\CHSH}{{\operatorname{CHSH}}}

\nc{\rU}{\mbox{U}}

\nc{\ob}[1]{#1}

\nc{\SEP}{{\text{\rm SEP}}}
\nc{\NS}{{\text{\rm NS}}}
\nc{\LOCC}{{\text{\rm LOCC}}}
\nc{\PPT}{{\text{\rm PPT}}}
\nc{\EXT}{{\text{\rm EXT}}}
\nc{\Sym}{{\operatorname{Sym}}}

\nc{\ERLO}{{E_{\text{r,LO}}}}
\nc{\ERLOCC}{{E_{\text{r,LOCC}}}}
\nc{\ERPPT}{{E_{\text{r,PPT}}}}
\nc{\ERLOCCinfty}{{E^{\infty}_{\text{r,LOCC}}}}
\nc{\Aram}{{\operatorname{\sf A}}}

\makeatletter
\def\grd@save@target#1{%
  \def\grd@target{#1}}
\def\grd@save@start#1{%
  \def\grd@start{#1}}
\tikzset{
  grid with coordinates/.style={
    to path={%
      \pgfextra{%
        \edef\grd@@target{(\tikztotarget)}%
        \tikz@scan@one@point\grd@save@target\grd@@target\relax
        \edef\grd@@start{(\tikztostart)}%
        \tikz@scan@one@point\grd@save@start\grd@@start\relax
        \draw[minor help lines,magenta] (\tikztostart) grid (\tikztotarget);
        \draw[major help lines] (\tikztostart) grid (\tikztotarget);
        \grd@start
        \pgfmathsetmacro{\grd@xa}{\the\pgf@x/1cm}
        \pgfmathsetmacro{\grd@ya}{\the\pgf@y/1cm}
        \grd@target
        \pgfmathsetmacro{\grd@xb}{\the\pgf@x/1cm}
        \pgfmathsetmacro{\grd@yb}{\the\pgf@y/1cm}
        \pgfmathsetmacro{\grd@xc}{\grd@xa + \pgfkeysvalueof{/tikz/grid with coordinates/major step}}
        \pgfmathsetmacro{\grd@yc}{\grd@ya + \pgfkeysvalueof{/tikz/grid with coordinates/major step}}
        \foreach \x in {\grd@xa,\grd@xc,...,\grd@xb}
        \node[anchor=north] at (\x,\grd@ya) {\pgfmathprintnumber{\x}};
        \foreach \y in {\grd@ya,\grd@yc,...,\grd@yb}
        \node[anchor=east] at (\grd@xa,\y) {\pgfmathprintnumber{\y}};
      }
    }
  },
  minor help lines/.style={
    help lines,
    step=\pgfkeysvalueof{/tikz/grid with coordinates/minor step}
  },
  major help lines/.style={
    help lines,
    line width=\pgfkeysvalueof{/tikz/grid with coordinates/major line width},
    step=\pgfkeysvalueof{/tikz/grid with coordinates/major step}
  },
  grid with coordinates/.cd,
  minor step/.initial=.2,
  major step/.initial=1,
  major line width/.initial=2pt,
}
\makeatother

\usepackage{thmtools}
\usepackage{thm-restate}
\usepackage{etoolbox}
\makeatletter
\def\problem@s{}
\newcounter{problems@cnt}

\newcommand{\allproblems}{\problem@s}
\makeatother

\usepackage{times}
\usepackage{amsmath,amsfonts,amssymb,graphicx,mathtools,bm,tcolorbox,relsize}
\usepackage[utf8]{inputenc}
\usepackage[T1]{fontenc}
\pgfplotsset{compat=1.18}
\usepackage{natbib}
\usepackage{multirow}
\usepackage{booktabs}
\usepackage{array}
\usepackage{cleveref}
\usepackage{subfig}

\usepackage{tabularx}
\usepackage{soul}
\usepackage[export]{adjustbox}

\definecolor{colortwo}{rgb}{0.4,0.77,0.17}
\definecolor{colorthree}{rgb}{0.01,0.51,0.93}
\definecolor{darkgray}{rgb}{0.3,0.3,0.3}

\usepackage{tikz}
\usetikzlibrary{quantikz2}
\usetikzlibrary{external} 
\usepackage{duckuments} % used for placeholders
\usepackage{soul}
\soulregister\cite7
\soulregister\ref7
\soulregister\pageref7

\nc{\add}[1]{{\color{BrickRed} #1}}
\nc{\remove}[1]{{\color{lightgray} \st{#1}}}
\nc{\authorcomment}[1]{{\begin{tcolorbox}[colback=green!5!white,colframe=green!10!gray,title=Comments from authors] #1 \end{tcolorbox}}}
\nc{\replace}[2]{ #2}

\newcommand{\trace}[2][]{\tr_{#1}\!\left[ #2 \right]}
\newcommand{\ddx}[1]{\frac{\textup{d}}{\textup{d}{\kern .08333em}#1}}
\newcommand{\integral}[4]{\int_{#1}^{#2} #4 {\kern .16666em}\textup{d}{#3}}

\newcommand{\set}[1]{ \left\{{\kern .02083em} #1 {\kern .02083em}\right\} }
\newcommand{\setcond}[2]{ \left\{{\kern .02083em} #1 :{\kern .08333em} #2 {\kern .02083em}\right\} }

\newcommand{\bmh}{{h}}
\newcommand{\bmj}{{\bm j}}
\newcommand{\bmm}{{\bm m}}

\newcommand{\fH}{{\mathfrak{H}}}

\newcommand{\SWAP}{{\mathtt{SWAP}}}
\newcommand{\CNOT}{{\mathtt{CNOT}}}
\newcommand{\CZ}{{\mathtt{CZ}}}

\newcommand{\gibbs}[2]{G_{#1, #2}}
\newcommand{\pauli}[2]{{\mathcal{E}_{#2}^{#1}}}
\newcommand{\pauliselect}[1]{\mathcal{N}_{#1}}
\newcommand{\measureselect}[1]{\mathcal{M}_{#1}}
\newcommand{\densityspace}[1]{\textup{D}\!\left(\mathcal{H}_{#1}\right)}

\newcommand{\dataset}{S(\beta, \Sigma, \bmh)}
\newcommand{\datasetN}{S_N(\beta, \Sigma, \bmh)}

\newcommand{\tnorm}[1]{\lVert #1 \rVert_1}
\newcommand{\inorm}[1]{\lVert #1 \rVert_\infty}
\newcommand{\comm}[2]{\left[#1,{\kern .16666em}#2\right]}

\newcommand{\prob}[1]{\textup{Pr}\!\left( #1 \right)}
\newcommand{\expect}[2][]{\mathbb{E}_{#1}\!\left[ #2 \right]}
\newcommand{\expectcond}[2]{\mathbb{E}\!\left[ #1 {\kern .16666em}|{\kern .16666em} #2 \right]}

\newcommand{\idop}{{\mathbbm{1}}}
\newcommand{\parity}[1]{p(#1)}

\newcommand{\poly}{\operatorname{poly}}
\newcommand{\BigO}[1]{{\mathcal{O}\!\left( #1 \right)}}
\newcommand{\BigTO}[1]{{\widetilde{\mathcal{O}}\!\left( #1 \right)}}
\newcommand{\BigOmega}[1]{{\Omega\!\left( #1 \right)}}

\newcommand{\hsum}{\lambda}

\newcommand{\loss}{\mathcal{L}}

\newcommand{\bfx}{\mathbf{x}}
\newcommand{\walk}[1]{\bfx^{(#1)}}
\newcommand{\walkH}[1]{H^{(#1)}}
\newcommand{\probQ}[1]{\textup{Pr}_{Q}\!\left( #1 \right)}

\newcommand{\specialcell}[2][c]{%
  \begin{tabular}[#1]{@{}l@{}}#2\end{tabular}}

\allowdisplaybreaks

\begin{document}
\title{Thermal-Drift Sampling: Generating Thermal Ensembles for Learning Many-Body Systems}

\author{Jiyu Jiang}
\author{Mingrui Jing}
\author{Jizhe Lai}
\author{Xin Wang}
\email{felixxinwang@hkust-gz.edu.cn}
\author{Lei Zhang}

\affiliation{Thrust of Artificial Intelligence, Information Hub,\\
The Hong Kong University of Science and Technology (Guangzhou), Guangzhou 511453, China}

\date{\today}

\begin{abstract}
Thermal equilibrium states of many-body Hamiltonians are essential for probing quantum chaos, finite-temperature phases of matter, and training quantum machine learning models, yet generating large collections of such states across different Hamiltonians remains costly with existing methods.
We introduce a powerful operation, the quantum thermal-drift channel, to construct a measurement-controlled sampling algorithm that autonomously generates thermal states together with their system Hamiltonians as labels for general physical models. 
We prove that our algorithm is efficient: the total gate count scales polynomially with system size and quadratically with inverse temperature, providing the first polynomial resource bound for random thermal state generation.
We characterize the distribution of sampled Hamiltonians as a normal distribution reweighted by partition functions, which quantifies a trade-off between sampling accuracy and effective label range.
Level-spacing statistics computed from sampled thermal states of a 2D transverse-field Ising model show a crossover to Wigner--Dyson universality, confirming that the sampler captures nontrivial chaotic correlations. 
Finally, a variational quantum classifier trained on the generated dataset achieves near-optimal accuracy in predicting Hamiltonian properties of unseen states. 
These results establish a scalable, quantum-native route for thermodynamic simulation and labeled quantum data generation in many-body systems.
\end{abstract}

\maketitle

%%%%%%%%%%%%%%%%%%%%%%%%%%%%%%%%%%%%%%%%%%%%%%%%%%%%%%%%%%%%%%%%%%%%%%%%%%%
\paragraph*{Introduction.---}
Random quantum states are a fundamental resource in quantum information science~\cite{khemani2017critical} which exhibit universal entanglement and spectral properties predicted by random matrix theory, making them indispensable for benchmarking quantum devices~\cite{dankert2009exacta,magesan2011scalable} and probing the limits of classical simulability~\cite{maziero2015random,odavic2023random}. 
They are essential for understanding quantum chaos~\cite{choi2023preparing} and black-hole physics~\cite{hayden2007blacka}, and serve as testbeds for quantum supremacy~\cite{boixo2018characterizing,arute2019quantum} and device characterization~\cite{liu2023observation}. 
Recently, the rise of quantum machine learning has created a new demand for these states~\cite{cerezo2022challenges}. Large collections of quantum states are treated as ``quantum data'' to learn the underlying patterns of many-body systems~\cite{jager2023universala,flam2022learning}.

\begin{figure}[t]
    \centering
    \includegraphics[width=\linewidth]{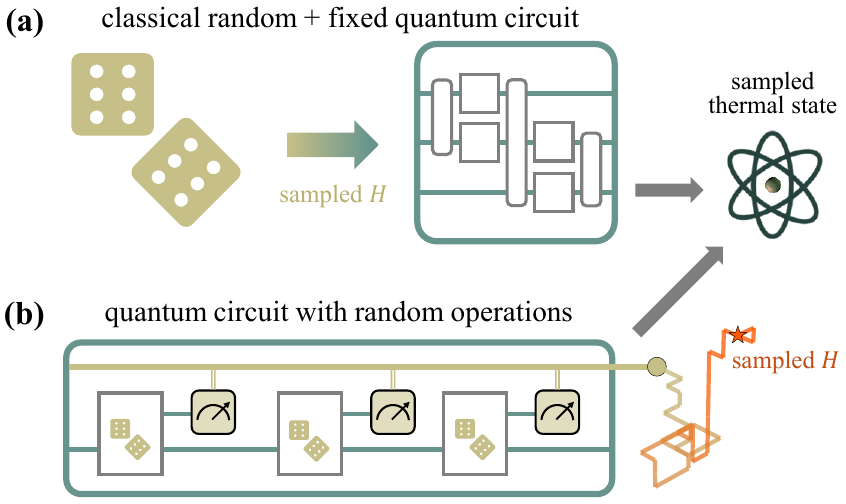}
    \caption{Difference between classical sampling and measurement-controlled sampling. The task is to sample a thermal state together with its Hamiltonian label.
    \textbf{(a)} Sampling the label via classical computers. The thermal state is prepared using existing thermal-state preparation circuits.
    \textbf{(b)} Sampling the thermal state and label via one structured random circuit. The label is generated by a lattice walk based on outcomes of mid-circuit measurements.
    }
    \label{fig:c vs q}
\end{figure}

The random-state families studied so far, such as Haar-random states~\cite{ji2018pseudorandom}, unitary $t$-designs~\cite{emerson2003pseudo}, and random stabilizer states~\cite{garcia2017geometry}, consist of pure states whose statistical properties are determined by the geometry of Hilbert space alone, without reference to a specific physical Hamiltonian.
Thermal states describe realistic many-body systems at finite temperature, exhibiting highly nontrivial correlations and often volume-law entanglement, making them representative of the ``typical'' mixed states encountered in quantum chaos and finite-temperature phases~\cite{liu2023observation}. They allow us to probe the eigenstate thermalization hypothesis~\cite{d2016quantum} and spectral statistics governed by random matrix theory~\cite{khemani2017critical}. 
Random thermal states therefore constitute a physically grounded family for probing universal many-body phenomena, motivating the development of efficient generation methods that do not rely on detailed spectral knowledge nor classical simulation.
Beyond preparing a single thermal state, data-driven applications such as Hamiltonian learning and quantum device benchmarking demand a large, labeled collection: many thermal states, each paired with the Hamiltonian from which it was generated. Producing such paired data $(\gibbs{\beta}{H}, H)$ at scale is a important challenge for supervised learning with quantum data.

With quantum processors advancing toward the early fault-tolerant regime, efficient algorithms for thermal state preparation on quantum hardware are increasingly sought after.
Practical generation of random thermal states, however, remains costly with existing both classical and quantum algorithms. 
Classical methods either suffer from the sign problem~\cite{troyer2005computational}  or exhibit volume-law entanglement that makes classical contraction costs grow exponentially with system size~\cite{verstraete2006matrix}.
Current quantum methods rely on quantum phase estimation combined with imaginary-time evolution~\cite{poulin2009sampling, chowdhury2017quantum, andras2019quantumsingular, motta2020determining, rouze2024optimal}, fluctuation theorems~\cite{Holmes2022quantumalgorithms}, or quantum Metropolis and Lindbladian thermalization algorithms~\cite{temme2011quantum, yung2012quantumquantum, moussa2019low, Rall2023thermalstate, shtanko2023preparing, wocjan2023szegedy, chen2023quantumthermal, chen2025efficient, chen2025efficientquantum, ding2025efficient, zhang2023dissipative, brandao2019finite} that couple the system to an effective bath. 
These instance-oriented methods can be effective for preparing one thermal state, but do not directly address the task of sampling thermal states from a family of Hamiltonians. Extending them to random thermal ensembles requires reconfiguring the circuit for each Hamiltonian instance, incurring substantial classical overhead that limits scalability.

In this Letter, we solve this critical problem by introducing a quantum algorithm for the scalable generation of random thermal states with provable polynomial efficiency, as illustrated in Figure~\ref{fig:c vs q}. 
The device is programmed solely with a target set of $L$ Pauli operators $\Sigma = \set{\sigma_1, \ldots, \sigma_L}$ representing the interaction topology. 
Our algorithm autonomously generates the thermal state $\gibbs{\beta}{H} = e^{-\beta H} / \trace{e^{-\beta H}}$ together with its label $H = \sum c_j \sigma_j$, with coefficients $c_j$ drawn from a thermal-weighted normal distribution, effectively functioning as a stochastic subroutine. 
Conceptually, the circuit performs a measurement-driven random walk in Hamiltonian space while approximately thermalizing the system state; the same measurement record that steers the evolution also specifies the Hamiltonian label.

Our algorithm operates as a measurement-controlled sampling protocol that applies a sequence of thermal-drift channels to generate the target thermal state. Unlike previous instance-oriented approaches~\cite{andras2019quantumsingular, chen2025efficient}, it provides the first explicit polynomial scaling of resource complexity with respect to all problem parameters in an elementary-gate model, as summarized in Table~\ref{tab:comparison}. 
The thermal-drift channel introduced here converts mid-circuit measurement randomness directly into physically meaningful Hamiltonian parameters, unifying state preparation and Hamiltonian sampling in a single quantum protocol.
Systematic benchmarking validates the predicted $\beta$-dependence and sampling efficiency. We further present two complementary applications of the sampler. First, level-statistics analysis confirms that the sampled states capture the spectral correlations characteristic of a chaotic regime. Second, as a proof of concept for quantum data generation, a variational quantum classifier trained on the sampled dataset achieves near-optimal accuracy in classifying unseen thermal states, illustrating that the sampler can directly provide labeled quantum datasets for learning tasks.

\begin{table}[t]
\centering
\caption{Resource complexity for generating a thermal state together with its Hamiltonian label $H$.
Here $n$, $\beta$, $\hsum$, $\epsilon$, and $\delta$ denote the number of qubits, inverse temperature, sum of coefficient bounds, target precision, and failure probability. $t_\textrm{mix}$ denotes the mixing time of a $\beta$-dependent Markov chain determined by the sampled $H$. Note that the prior methods were designed for preparing the thermal state of a single given Hamiltonian; the listed complexities reflect the cost of combining them with a classical label-sampling step, rather than producing $(\gibbs{\beta}{H},H)$ as an ensemble-level generation algorithm.}\label{tab:comparison}
\setlength{\tabcolsep}{1em}
\resizebox{\linewidth}{!}{
\begin{tabular}{llc}
    \toprule
    Workflow  & Query model & Complexity \\
    \midrule
    \multirow{3}{*}{\specialcell{Sample label + \\ prepare state}} & \specialcell{Block encoding of \\ normalized $H$~\cite{andras2019quantumsingular}} & $\BigO{2^{n/2}\hsum \beta / \trace{e^{-\beta H}}}$  \\
    \cmidrule{2-3}
    {}  & \specialcell{Block encodings of \\ jump operators~\cite{chen2025efficient}} & $\BigO{\beta t_\textrm{mix}}$  \\
    \midrule
    \specialcell{Sample state \\ with label} & \specialcell{Elementary gates \\ (this work)} & $\BigO{n^2 \lambda^2 \beta^2\epsilon^{-2/3}\log(1/\delta)}$ \\
    \bottomrule
\end{tabular}
}
\end{table}

\paragraph*{Generation of random thermal states.---} 
The task is to sample a thermal ensemble which is composed of a thermal state $\gibbs{\beta}{H}$ as ``data'' together with its Hamiltonian $H$ as ``label'', and $H$ is from a structured family of Pauli Hamiltonians.
Given a set of $L$ distinct $n$-qubit Pauli operators $\Sigma$ and the inverse temperature $\beta>0$, such family is described by a set of all linear combinations of operators in $\Sigma$ with bounded coefficients as
\begin{equation}
    \fH = \setcond{\sum\nolimits_{j} c_j \sigma_j}{\abs{c_j} \leq \bmh_j, c_j \in \RR, \sigma_j \in \Sigma}
.\end{equation}
where $h_j$'s are the boundary values of the coefficients. 
Unlike the pure-state families discussed in the introduction,
our setting asks for generating a structured thermal state
that is linked to an explicit physical Hamiltonian. 
This provides a controllable generating set essential for physical benchmarking and Hamiltonian learning.

This setting imposes no restrictions on the locality or commutativity of the interactions.
Our verification experiments consider two physical models on a $3 \times 3$ two-dimensional grid, where two-local Pauli operators act on nearest-neighbor pairs. The first is a Heisenberg model with $\Sigma$ containing operators of the form $XX$, $YY$, and $ZZ$; the second is a transverse-field Ising model with $\Sigma$ containing operators of the form $ZZ$ and $X$ (or $XX$ and $Z$).
For both models, the coefficient bounds $h_j$ are uniform, and we denote $\hsum = \sum_j h_j$, which upper bounds the largest eigenvalue of any $H \in \fH$. 

\paragraph*{Quantum thermal-drift channel.---} 
Ground state preparation is known to be QMA-hard in the worst case~\cite{watrous2008quantum,chowdhury2020variationala}, and low-temperature thermal state preparation faces comparable complexity barriers~\cite{aharonov2013quantum}.
A standard `sample-then-prepare' pipeline, which first selects a Hamiltonian $H$ and subsequently attempts to prepare its thermal state, is therefore often prohibitively costly. To circumvent this, we implement quantum operations involving randomness directly on the quantum hardware.

We define the thermal-drift channel $\pauliselect{\sigma}$ for any Pauli operator $\sigma$, which is the core component of our algorithm. For any input state $\rho$, the channel is defined as
\begin{equation}~\label{eqn:thermal drift}
\pauliselect{\sigma}(\rho)
= \frac{\mu}{2}\big(  \ketbra{\uparrow}{\uparrow} \ox \pauli{\uparrow}{\sigma}(\rho) + 
 \ketbra{\downarrow}{\downarrow} \ox \pauli{\downarrow}{\sigma}(\rho) \big)
,\end{equation}
where $\mu < 1$ is a scaling factor. Here, for any state $\rho$, the maps $\pauli{\uparrow}{\sigma}$ and $\pauli{\downarrow}{\sigma}$ are two completely positive operations defined as $e^{+\tau\sigma/2} \rho\, e^{+\tau\sigma/2}$ and $e^{-\tau\sigma/2} \rho\, e^{-\tau\sigma/2}$, respectively,
with a thermal strength $\tau > 0$.
The first register of $\pauliselect{\sigma}(\rho)$ induces a random branch.
Measuring this flag register indicates which direction the state is thermalized toward, with probabilities proportional to the corresponding unnormalized traces.

The thermal-drift channel can be implemented using system-ancilla coupling in a dilated Hilbert space, followed by measuring and tracing out the flag register, such that the whole channel can be implemented within $\BigO{n}$ quantum elementary gates and $2n$ ancilla qubits.
We refer to the Supplemental Material for a detailed implementation.

\begin{figure*}[t]
    \centering
    \includegraphics[width=1\linewidth]{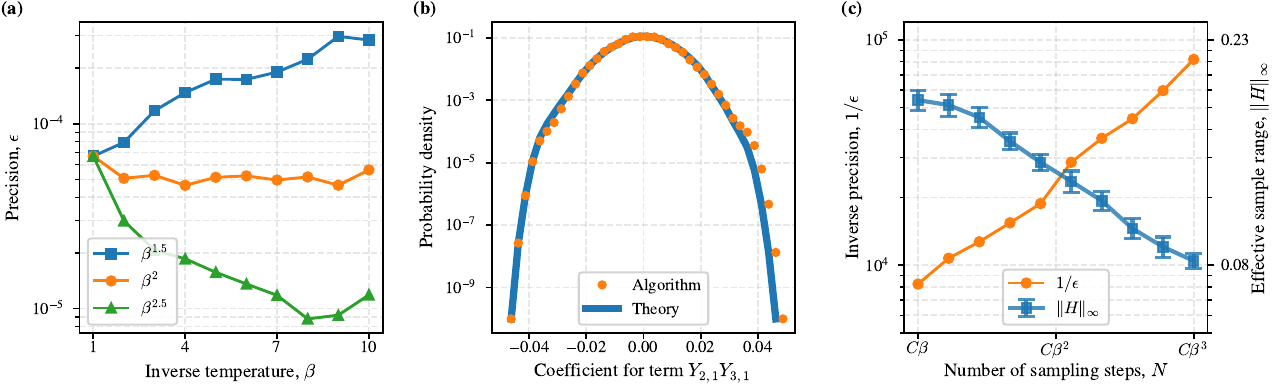}
    \caption{
Numerical verification of the theoretical predictions for the thermal-drift sampling algorithm applied to a $3\times 3$ two-dimensional Heisenberg model. 
See the Supplemental Material for additional experimental details.
\textbf{(a)} Precision $\epsilon$ between the output state and the ideal thermal state as a function of the inverse temperature $\beta$, for different step scalings. 
\textbf{(b)} Empirical marginal distribution of a nearest-neighbor $YY$ interaction coefficient, compared with the theoretical prediction from Theorem~\ref{thm:distribution}. Here $Y_{i,j}$ denotes the Pauli-$Y$ on the site at row $i$ and column $j$. 
\textbf{(c)} Trade-off between inverse precision and effective sample range at fixed $\beta$, as the number of steps $N$ increases from $C\beta$ to $C\beta^{3}$, where $C$ is a fixed constant.
    }
    \label{fig:num exp}
\end{figure*}

\paragraph*{Thermal state generation algorithm.---} Our algorithm is implemented as a circuit of $N$ blocks applied to an initial state $\rho_0$.
In block $k$, a thermal-drift channel $\pauliselect{\sigma_{j_k}}$ is applied, where $\sigma_{j_k}$ is selected independently from the set $\Sigma$. The probability of choosing $\pauliselect{\sigma_j}$ is weighted by the bound $\bmh_j$, so that the distribution is $\bmh_j/\hsum$. The thermal strength is fixed across all blocks to $\tau = \beta\hsum /N$.
After this thermal-drift channel, a measure-and-trace operation $\measureselect{m_k}$ is applied to the flag register, generating a measurement outcome $m_k$. This outcome is recorded as $m_k \in \set{\mathord{+}1,\mathord{-}1}$ corresponding to $\uparrow$ and $\downarrow$.

The full circuit implementation is labeled by an ordered list of values $\bmj = \set{j_1, \ldots, j_N}$ and an ordered list of directions $\bmm = \set{m_1, \ldots, m_N}$, that corresponds to a channel $\pauli{}{\bmj,\bmm}$.
In particular, one can collect the accumulated drift along each Pauli axis $\sigma_j$, which defines the overall Hamiltonian label
\begin{equation}
    H_{\bmj,\bmm} = \frac{\hsum}{N} \sum_{j=1}^N \sum_{k:j_k = j} m_k \sigma_j
.\end{equation}
When $\rho_0$ is the maximally mixed state, the output state $\pauli{}{\bmj,\bmm}(\rho_0)$ is an approximate copy of its thermal state $\gibbs{\beta}{H_{\bmj,\bmm}}$, thereby completing the task of random thermal state generation.

\begin{figure*}[t]
    \centering
    \includegraphics[width=\linewidth]{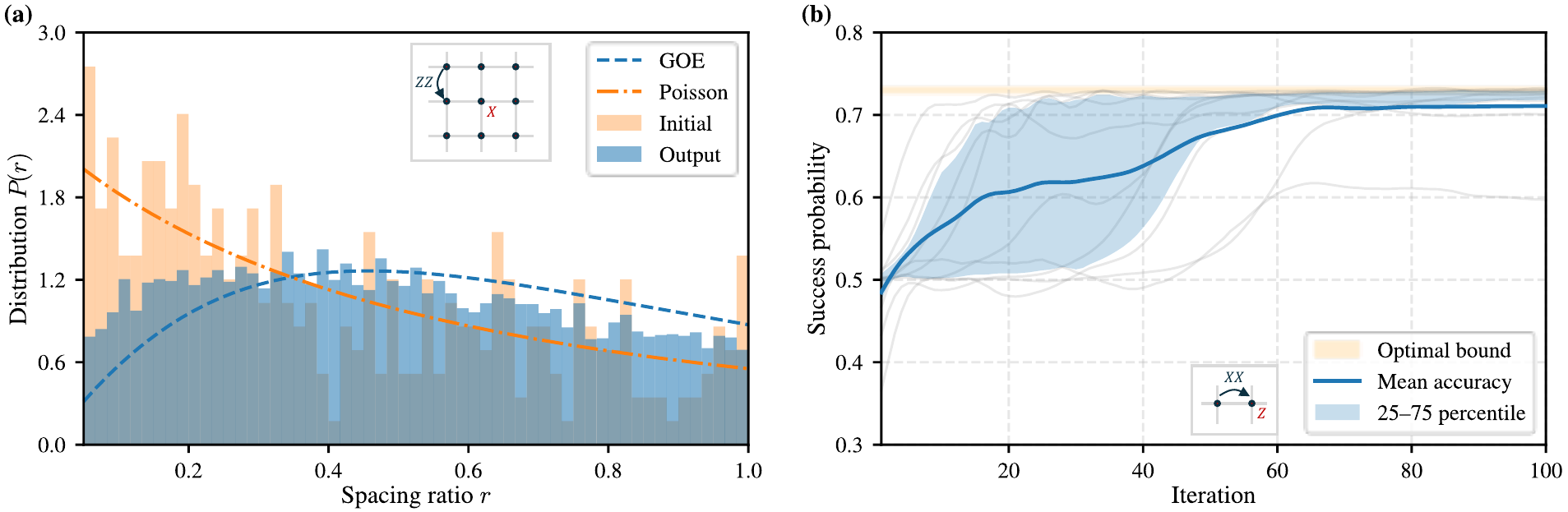}
    \caption{Applications of our thermal state sampler to level statistics and state classification. See the Supplemental Material for additional experimental details.
    \textbf{(a)} Level-spacing ratio statistics of the sampled thermal states for a $3 \times 3$ two-dimensional transverse-field Ising model.
    The orange histogram in the background corresponds to the initial state, while the blue histogram in the foreground corresponds to the output state generated by the sampling algorithm. The dashed curve indicates the GOE prediction, and the dash-dotted curve indicates the Poisson prediction. See the \emph{Application to signatures of quantum chaos} section for task details.
    \textbf{(b)} Classification accuracy of a quantum classifier trained on sampled thermal states of a two-qubit Ising model.
    Gray traces show individual trial runs, the blue curve shows the mean success probability across independent test sets over $10$ trials, the shaded region marks the $25$--$75$ percentile band, and the orange band indicates the range of theoretical upper bound on success probability. See the \emph{Application to state classification} section for task details.
    } 
    \label{fig:applications}
\end{figure*}

\paragraph*{Theoretical guarantees of sampling process.---} 
Unlike randomized algorithms such as QDRIFT~\cite{campbell2019random}, where the randomness is Markovian and can be analyzed directly in channel form, each measurement outcome in our algorithm depends on the evolving state, and hence the process is generally non-Markovian. 
We apply matrix-martingale concentration inequalities to bound the sampling error, and lattice random-walk theory to characterize the induced distribution over Hamiltonian labels.

We start with the sampling error $\epsilon$, quantified by the trace distance between the sampled state and the ideal thermal state. 
Given the sampled outcomes $\bmj$ and $\bmm$, we consider the effective Hamiltonian $H_\approx$ such that $\pauli{}{\bmj,\bmm}(\rho_0) = e^{-\beta H_\approx}/\trace{e^{-\beta H_\approx}}$.
Then the sampling error is bounded by $\epsilon \leq \beta\inorm{ H_{\bmj,\bmm} - H_\approx}$.
When the number of steps $N$ is not large enough, the sampling error behaves like the discretization error of a second-order Trotter product formula~\cite{childs2018toward}, scaling as $\BigO{\beta^3}$. However, when $N$ is sufficiently large so that Pauli operators are likely repetitive, we treat $\bmm$ as random variables conditional on a fixed $\bmj$. In this regime, the Freedman inequality for matrix martingales~\cite{tropp2011freedman} implies an $\BigO{\beta^2}$ scaling almost surely. 
This is numerically verified in Figure~\ref{fig:num exp}(a). For increasing inverse temperature starting at $\beta = 1$, we test our algorithm with number of steps $N = \BigO{\beta^{k}}$ for $k = 1.5$, $2$ and $2.5$, and observe that the average sampling error scales as $\BigO{\beta^{2-k}}$, which is consistent with the following result. Here $\BigTO{\cdot}$ omits the logarithm terms. 

\begin{theorem}[Sampling precision]~\label{thm:error}
    For any system size $n$, inverse temperature $\beta$, and sufficiently large $N$, our algorithm uses $\BigO{nN \log(1/\delta)}$ gates to sample a thermal state and its label, with precision $\BigTO{n^{3/2} \hsum^3 \beta^3 N^{-3/2}}$ and failure probability at most $\delta$.
\end{theorem}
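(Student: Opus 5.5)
We will prove the theorem by tracking the unnormalized output operator. After applying the blocks with outcomes $\bmj,\bmm$ to $\rho_0 = \idop/2^n$, the output is proportional to the palindromic product
\begin{equation*}
    P = e^{m_N\tau\sigma_{j_N}/2}\cdots e^{m_1\tau\sigma_{j_1}/2}\, e^{m_1\tau\sigma_{j_1}/2}\cdots e^{m_N\tau\sigma_{j_N}/2},
\end{equation*}
with $\tau=\beta\hsum/N$. We write $P = e^{-\beta H_\approx}$ and bound $\inorm{H_{\bmj,\bmm}-H_\approx}$; by the stated bound $\epsilon\le\beta\inorm{H_{\bmj,\bmm}-H_\approx}$, this controls the trace distance.

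\textbf{Gate count.} Each block costs $\BigO{n}$ gates, by the dilation construction of the Supplemental Material. Each block's measurement is heralded and may fail with constant probability, because the channel is trace non-increasing with factor $\mu$. Standard repetition or amplification therefore boosts the overall success probability to $1-\delta$ at cost $\log(1/\delta)$. This gives $\BigO{nN\log(1/\delta)}$ gates.

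\textbf{Error expansion.} Because the product is palindromic (symmetric), a BCH/Magnus expansion of $\log P$ has no second-order commutator terms. It leaves
\begin{equation*}
    -\beta H_\approx = \tau\sum_k m_k\sigma_{j_k} + R_3 + \BigO{\text{higher}},
\end{equation*}
where the leading term equals $-\beta H_{\bmj,\bmm}$ up to the sign convention. The remainder $R_3$ is a sum of nested commutators of the form $\tau^3\, m_am_bm_c[\sigma_{j_a},[\sigma_{j_b},\sigma_{j_c}]]$ over ordered triples. Naively $\inorm{R_3}=\BigO{N^3\tau^3}=\BigO{\beta^3\hsum^3}$, which is the Trotter-like regime. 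To beat this we use cancellation in the signs $m_k$.

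\textbf{Martingale step.} Conditional on $\bmj$ and the past, each $m_k$ has conditional mean proportional to $\tr[\sigma_{j_k}\rho_{k-1}]\tanh$-type bias of order $\tau$; the bias is small because $\tau\ll1$. We decompose $m_k=\bar m_k + \xi_k$, with $\xi_k$ a bounded martingale difference. We then organize $R_3$ as iterated sums $\sum_c \tau\, m_c\, X_{c}$, where $X_c=\tau^2\sum_{a<b<c}m_am_b[\cdot,[\cdot,\cdot]]$ is predictable. Applying matrix Freedman~\cite{tropp2011freedman} to the $\xi$-parts at each nesting level (dimension factor $2^n$ giving $\log 2^n = n$) gives $\sqrt{N\log(2^n/\delta)}$ in place of $N$ at each of the three levels. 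This yields $\inorm{R_3}\lesssim\tau^3(nN\log)^{3/2}$. The drift parts $\bar m_k=\BigO{\tau}$ contribute higher-order terms that are subleading for sufficiently large $N$. Hence
\begin{equation*}
    \beta\inorm{H-H_\approx} = \BigTO{n^{3/2}\hsum^3\beta^3N^{-3/2}}
\end{equation*}
after substituting $\tau=\beta\hsum/N$ and absorbing the normalization $\beta$ into the effective-Hamiltonian definition. A union bound over the nesting levels and failure events completes the argument.

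\textbf{Main obstacle.} The hardest part is the nested-martingale step: the cubic commutator sum is not itself a martingale. We plan to handle it by iteratively conditioning — bounding inner partial sums uniformly with high probability, then using them as predictable coefficients for the outer Freedman application. We must also check that the state-dependent bias of $m_k$ and the BCH tail stay below the claimed order for sufficiently large $N$.
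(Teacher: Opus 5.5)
Your proposal follows essentially the same route as the paper. The paper uses the odd-order BCH expansion of the palindromic product and splits the outcome signs into drift plus martingale parts. It applies matrix Freedman (dimension factor $2^{n+1}$) to the partial sums $S_k$, then uses these as predictable coefficients in a second Freedman application to $\sum_k [S_{k-1},[M_k,S_{k-1}]]$, with $M_k$ your $\xi$-part of $B_k$, to get $\BigO{\tau^3 (NK)^{3/2}}$ with $K\sim n+\log(N/\delta)$; it closes with the normalization bound $\tnorm{\rho_N-\gibbs{\beta}{H}}\le 2\inorm{E}$ and a union bound.

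The BCH tail you leave open is handled in the paper by expanding incrementally, $Z_k=\log(e^{B_k}e^{Z_{k-1}}e^{B_k})$, so the order-$r$ contributions are $\operatorname{ad}^{r-1}_{2S_{k-1}}(B_k)$ and are controlled by the same high-probability bound $\inorm{S_{k-1}}\lesssim\tau\sqrt{NK}\ll 1$; a single global expansion is instead governed by $N\tau=\hsum\beta$, which need not be small. Also, your ``$\sqrt{N}$ at each of three levels'' does not cover the repeated-index terms $\sum_k[B_k,[B_k,S_{k-1}]]$, which get no cancellation since $m_k^2=1$, but these are only $\BigO{\tau^3 N\sqrt{NK}}$ and hence subleading.
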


This result guarantees that the gate resources required by the sampler scale polynomially in all problem parameters. 
The cubic dependence on $\beta$ reflects the accumulation of discretization error in the second-order Trotter product formula.
Then as the number of thermal-drift steps $N$ increases, the trace-distance error between the sampled state and the target thermal state decreases as $N^{-3/2}$, providing a route to higher-fidelity sampling.

To obtain a theoretical analysis of the sample distribution, 
we consider an asymptotic regime where the number of sampling steps $N$ tends to infinity, so that every possible label $H \in \fH$ can be sampled.
Then our algorithm can be modeled as a random walk $(\walk{k})_k$ on an $L$-dimensional integer lattice $\ZZ^L$, such that $\walk{0}$ and $\walk{N}$ correspond to the identity matrix and the sampled label $H$, respectively.
Such a walk sequence satisfies for all $k \geq 1$,
\begin{equation}
    \prob{\walk{k} - \walk{k-1} = \pm e_j} = \prob{m_k = \pm 1} \cdot \bmh_j/\hsum
,\end{equation}
where $e_j$ denotes the $j$-th unit vector. Applying lattice-based random walk theory~\cite{spitzer2001principles} yields a Gaussian approximation to the endpoint distribution under appropriate scaling. Combined with the thermal reweighting from $\prob{m_k = \pm 1}$, the sampling rule takes a product form involving the partition function and a normal density.

\begin{theorem}[Label distribution]~\label{thm:distribution}
    For any inverse temperature $\beta$, a label $H \in \fH$ can be asymptotically sampled by our algorithm with probability proportional to $\trace{e^{-\beta H}} D(\bfx)$, where $D$ is an $L$-dimensional multivariate normal distribution and $\bfx$ satisfies $\bfx_j = \trace{H \sigma_j}/\hsum \cdot N$.
\end{theorem}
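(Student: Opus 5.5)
We compute the joint probability of a full measurement record $(\bmj,\bmm)$ and show that it factorizes into a thermal factor and a purely combinatorial factor. Take $\rho_0 = \idop/2^n$. Each block picks $j_k$ with probability $\bmh_j/\hsum$. Conditional on the past, the outcome $m_k$ occurs with probability equal to the trace of the unnormalized branch $\pauli{m_k}{\sigma_{j_k}}$ applied to the current normalized state, divided by the total trace of $\pauliselect{\sigma_{j_k}}$ applied to it. The first step is to observe that $\tr[e^{\tau\sigma/2}\rho e^{\tau\sigma/2}] + \tr[e^{-\tau\sigma/2}\rho e^{-\tau\sigma/2}] = \tr[\rho(e^{\tau\sigma}+e^{-\tau\sigma})] = 2\cosh(\tau)$, because $\sigma^2=\idop$. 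This total is independent of $\rho$. Consequently the normalizations telescope, and we obtain
\begin{equation*}
\Pr(\bmj,\bmm) = \prod_k \frac{\bmh_{j_k}}{\hsum}\cdot \frac{\tr\big[\pauli{}{\bmj,\bmm}^{\rm unnorm}(\idop)\big]}{2^n (2\cosh\tau)^N},
\end{equation*}
where $\pauli{}{\bmj,\bmm}^{\rm unnorm}$ is the composition of the unnormalized branch maps. So the record law is an explicit product, and the non-Markovian dependence is absorbed exactly.

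The second step replaces the trace of the ordered product $\prod_k e^{-m_k\tau\sigma_{j_k}/2}\cdots$ by $\tr[e^{-\beta H_{\bmj,\bmm}}]$, up to sign conventions for $m_k$. Here we invoke the Trotter/martingale estimate underlying Theorem~\ref{thm:error}. That estimate gives $\tr[\cdot] = \tr[e^{-\beta H_\approx}]$ with $\beta\inorm{H_{\bmj,\bmm}-H_\approx}\to 0$ as $N\to\infty$ (with high probability). Hence the ratio to $\tr[e^{-\beta H_{\bmj,\bmm}}]$ tends to $1$ uniformly on the records that matter. We also have $(2\cosh\tau)^N\to 2^N$ times a constant independent of the record, since $\tau=\beta\hsum/N$.

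The third step marginalizes over records with a given endpoint $\walk{N}=\bfx$, where $\bfx_j$ is the net signed count along axis $j$. The resulting probability is $\tr[e^{-\beta H}]$ (constant on the fiber) times $\Pr_0(\walk{N}=\bfx)$. Here $\Pr_0$ is the law of the \emph{unbiased} lattice walk, in which axis $j$ is chosen with probability $\bmh_j/\hsum$ and the sign is $\pm1$ with probability $1/2$ each. The local central limit theorem for lattice walks~\cite{spitzer2001principles} then gives $\Pr_0(\walk{N}=\bfx)\approx D(\bfx)$. Here $D$ is the centered normal density with covariance $N\,\mathrm{diag}(\bmh_j/\hsum)$, taking care of the parity (periodicity) constraint $\sum_j \bfx_j\equiv N \pmod 2$, which only affects the constant. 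Since $H=\frac{\hsum}{N}\sum_j \bfx_j\sigma_j$, orthogonality of Paulis gives $\bfx_j=\tr[H\sigma_j]N/(2^n\hsum)$, which matches the statement up to the trace normalization convention.

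The main obstacle is the second step. The Trotter error must be controlled uniformly enough that replacing the ordered-product trace by $\tr[e^{-\beta H}]$ does not distort the local CLT asymptotics. Note also that the relevant records are now weighted by the thermal factor rather than by the unbiased walk. I would first try to bound the ratio deterministically using $\tr[e^{A}e^{B}]$-type inequalities (Golden–Thompson in the log-Trotter form) with second-order error $\BigO{N\tau^3}=\BigO{\beta^3\hsum^3/N^2}$, which vanishes without any probabilistic input. Only if that fails would I fall back on the high-probability martingale bound.
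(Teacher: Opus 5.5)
Your argument is sound in outline, but it obtains the likelihood ratio by a different route than the paper.

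\textbf{How the two routes differ.} The paper writes $L_N(\cP)=\prod_t\bigl(1-m_t\tanh\tau\,M_{j_t}(\walk{t-1})\bigr)$, where $M_j(\bfx)$ is the expectation of $\sigma_j$ in the thermal state of the \emph{intermediate label}. It expands $\log(1+y)$ and telescopes via $\partial_{\bfx_j}\log\cZ=-\tau M_j$, obtaining $\log L_N=\log\bigl(2^{-n}\tr e^{-\beta H}\bigr)+\BigO{\hsum^2\beta^2/N}$ for every path. The fiber sum and local CLT then proceed as in your third step.

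You instead observe that the two branch weights sum to $2\cosh\tau$ for every state. This makes the normalizations telescope exactly, giving $L_N(\cP)=\tr[A^\dagger A]/(2^n\cosh^N\tau)$ with $A=e^{B_1}\cdots e^{B_N}$. The only approximation left is then comparing this palindromic trace with $\tr e^{-\beta H}$.

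Your decomposition is the more honest one. The paper's replacement of the true state $\rho_{t-1}$ by the thermal state of $\walk{t-1}$ is the same Trotter approximation in disguise, and its pathwise error claim cannot hold for every record (see below). What the paper's route buys is a transparent picture: the measurement bias is the gradient of $\log\cZ$, a discrete thermodynamic integration, with no nested-commutator bookkeeping. Your normalization remarks (the missing $2^{-n}$ in $\bfx_j$, and covariance $N\,\mathrm{diag}(\bmh_j/\hsum)$) agree with the paper's lattice-walk lemma.

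\textbf{The deterministic bound fails.} There is no record-uniform $\BigO{N\tau^3}$ estimate. The cubic BCH term pairs each $B_k$ with the accumulated $S_{k-1}$, not with its neighbours, so it can be of order $(\hsum\beta)^3$ regardless of $N$.

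Concretely, take $n=1$, $\Sigma=\{X,Z\}$, $\bmh=(1,1)$, and the record with all $X$-steps first, then all $Z$-steps, all with the same outcome. Then $A^\dagger A=e^{-aZ/2}e^{-aX}e^{-aZ/2}$ with $a=\hsum\beta/2$. So $\tr[A^\dagger A]=2\cosh^2 a$, while $\tr e^{-\beta H}=2\cosh(\sqrt2\,a)$, a ratio independent of $N$.

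The same happens in the central fiber $\bfx=0$. Take four blocks of $N/4$ steps, $X^+,Z^+,X^-,Z^-$. This gives $A=e^{-cX}e^{-cZ}e^{cX}e^{cZ}$ with $c=\hsum\beta/8$, $\det A=1$ and $A$ non-unitary. Hence $\tr[A^\dagger A]>2=\tr e^{0}$ by an $N$-independent amount. Golden--Thompson is one-sided and cannot rescue a two-sided ratio bound.

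\textbf{How to close the fallback.} The high-probability route is therefore necessary. Your concern about the thermal reweighting is resolved by a uniform bound. Each $e^{B_k}$ has eigenvalues $e^{\pm\tau/2}$, so $L_N(\cP)\in[e^{-\hsum\beta},e^{\hsum\beta}]/\cosh^N\tau$ for every record. Consequently, bad records in a fiber contribute at most $e^{\hsum\beta}\probQ{\mathrm{bad}}$.

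Under $Q$ the increments $B_k$ are exactly centered, so the Freedman/BCH estimates behind Theorem~\ref{thm:error} apply with no drift term. Choose failure probability $\delta=o(N^{-L/2})$. This costs only $\BigO{L\log N}$ in $K_0$ and keeps the error $\BigO{K_0^{3/2}\hsum^3\beta^3N^{-3/2}}\to 0$. It also makes bad records negligible against $\probQ{\walk{N}=\bfx}\asymp N^{-L/2}$ for $\norm{\bfx}=\BigO{\sqrt N}$. With that addition your proof goes through, and it is more careful than the paper's.
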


The sampling probability is governed by two factors. The partition-function weight $\trace{e^{-\beta H}}$ biases the sampler toward Hamiltonians with more thermally accessible states, i.e., those with lower free energy at the given temperature. The normal density $D(\bfx)$ arises from the central-limit behavior of the underlying lattice random walk and concentrates the sampled coefficients near the origin. Together, they define a physically motivated measure over the Hamiltonian family.

We numerically show that a similar result holds in non-asymptotic cases. Taking $\beta = 2$, we sample from our algorithm and visualize one-dimensional slices of the marginal distribution of a chosen Pauli coefficient. As shown in Figure~\ref{fig:num exp}(b), the empirical distribution closely matches the ideal one. Note that the difference mainly occurs in both ends of the distribution, which will get smaller as the number of sampling step further increases.

We apply Theorems~\ref{thm:error} and~\ref{thm:distribution} to identify a trade-off between the sampling accuracy and the effective range of sampled Hamiltonians.
Due to the nature of random walks, higher accuracy comes with a more concentrated label distribution, and vice versa. As shown in Figure~\ref{fig:num exp}(c), for $\beta = 2$, as the number of steps $N$ increases from $\BigO{\beta}$ to $\BigO{\beta^3}$, the inverse precision $1/\epsilon$ increases while the effective sample range $\inorm{H}$ decreases at a similar rate. Therefore, despite the efficiency of a single execution, it is not recommended to use our algorithm for preparing thermal states of one fixed Hamiltonian. However, this characteristic makes our algorithm particularly well-suited for ensemble-level applications.

\paragraph*{Application to signatures of quantum chaos.---} 
Level statistics characterizes thermodynamic phases in quantum many-body systems~\cite{potter2015universal} by probing spectral correlations of quantum states. Applying this analysis to thermal states across different Hamiltonians requires generating those states efficiently, a capability that our sampler can provide. 
Given any random density operator with an ordered many-body eigenvalue spectrum $\set{\lambda_j}_j$, one defines the adjacent logarithmic level spacings $\delta_j = \log(\lambda_j/\lambda_{j+1})$ and the dimensionless spacing (gap) ratio~\cite{oganesyan2007localization,sierant2020model},
\begin{equation}
    r_j = \frac{\min\{\delta_j, \delta_{j+1}\}}{\max\{\delta_j, \delta_{j+1}\}} \in [0,1],
\end{equation}
whose distribution is insensitive to spectral unfolding and thus widely used in finite-size studies~\cite{serbyn2016spectral,buijsman2019random,odavic2023random}. In ergodic systems satisfying the eigenstate thermalization hypothesis, strong level repulsion leads to Wigner--Dyson statistics~\cite{wigner1967random,buijsman2019random}, whereas in the many-body localization phase, emergent local integrals of motion suppress correlations and produce Poisson level statistics~\cite{pal2010manybody,khemani2017critical}. The Wigner--Dyson distribution describes the probability of the spacing ratio in the Gaussian orthogonal ensemble (GOE), serving as the universal signature of chaos for quantum systems that preserve time-reversal symmetry~\cite{rabson2004crossover,serbyn2016spectral}.

To assess the chaotic nature of the generated states, we apply our algorithm to the level statistics of the sampled thermal state ensemble. An initial reference state whose spectrum exhibits near-Poisson statistics is compared with the output states produced by the sampler.
As shown in Figure~\ref{fig:applications}(a), aggregating $\set{r_j}_j$ over samples yields a robust, unfolding-free diagnostic. The level statistics shift from a Poissonian distribution in the reference state to a Wigner--Dyson distribution in the output thermal ensemble, confirming that the sampler captures the spectral correlations of the ergodic regime. 
The thermal-drift mechanism thus drives the system into a substantially more chaotic sector of Hilbert space than the structured input, producing states whose spectral correlations are consistent with physical thermalization.

\paragraph*{Application to state classification.---} 
Our thermal state sampler provides a scalable data source for quantum supervised learning algorithms~\cite{artymowicz2024efficient, chen2025learning, gu2024practical} that learn properties of quantum systems from their thermal states.
Consider classifying a black-box quantum system with an unknown Hamiltonian $H \in \fH$ by a label function $f(H) \in \set{0, 1}$, given access to only a single copy of its thermal state.
A natural approach is to deploy a variational quantum classifier~\cite{jager2023universala,flam2022learning,benedetti2019parameterized} (VQC) on a large labeled dataset of thermal states. Theorem~\ref{thm:error} implies that our sampler can generate such a dataset with cost polynomial in system size and inverse temperature, bypassing the need for individual state preparation. In the expectation-value-based formulation, training typically requires repeatedly measuring the same input state to estimate observables; here, we instead use single-shot measurements on freshly regenerated batches from the sampler, which is naturally compatible with a stochastic data source.

As a proof of concept, a VQC is trained to classify two-qubit Ising models described by $\Sigma = \set{XX, ZI, IZ}$ and $\bmh = (1, 1, 1)$ according to the sign of the $XX$ coefficient, where the optimal success probability is upper bounded by the Helstrom limit~\cite{helstrom1969quantum}. The classifier is optimized via a stochastic variant of the parameter-shift rule~\cite{mitarai2018quantum,schuld2019evaluating}. In our implementation, each gradient evaluation uses single-shot measurements on a freshly regenerated batch of $10^3$ approximate thermal states from our sampler. The classifier is evaluated, in each trial, on an independently generated test set of $3 \times 10^3$ exact thermal states. 
As shown in Figure~\ref{fig:applications}(b), the prediction accuracy approaches the Helstrom limit in most trials, with noticeable trial-to-trial variability.
In particular, the classifier trains on approximate states sampled from the thermal-weighted normal distribution (Theorem~\ref{thm:distribution}) yet generalizes to test states drawn from a uniform distribution. This distribution-shift robustness is enabled by the polynomial sampling cost (Theorem~\ref{thm:error}), which allows generating sufficiently large and diverse training sets without individual state preparation. Combined with the built-in Hamiltonian labeling, the sampler serves as an end-to-end data source for quantum supervised learning in this setting. See the Supplemental Material for training details.

\paragraph*{Conclusion and outlook.---}
We have introduced the quantum thermal-drift channel and used it to develop a measurement-controlled algorithm that generates random thermal states for Hamiltonians drawn from general Pauli models, with sampling cost polynomial in both the system size and the inverse temperature. This efficiency bypasses the high overhead of instance-oriented preparation methods and provides, to our knowledge, the first polynomial-cost quantum protocol for random thermal state generation. Two applications demonstrate the versatility of this approach: level-statistics analysis confirms that the sampled states exhibit Wigner--Dyson universality, a regime associated with strong level repulsion and quantum chaos; and a proof-of-concept classification experiment shows that the sampled dataset can serve as a data source for quantum learning tasks.

Several avenues for future research remain open. Our protocol relies on elementary gates and mid-circuit measurements, and can be viewed as a stochastic subroutine for generating labeled quantum data on early fault-tolerant devices. 
Improving the system-size scaling, potentially by exploiting structure in physically motivated Hamiltonian families, remains an important direction. We note that the polynomial efficiency achieved here contrasts with the known limitations of leading classical approaches for generic models: quantum Monte Carlo encounters the sign problem for frustrated systems~\cite{troyer2005computational}, and tensor-network contractions incur costs that grow exponentially with the volume-law entanglement present in generic thermal states~\cite{verstraete2006matrix}. Formalizing this contrast into a rigorous quantum advantage proof remains an interesting open question. 
On the machine learning side, extending the classification framework to more complex Hamiltonians and scalable optimization strategies would further establish the practical utility of quantum-generated thermal state datasets.

\paragraph*{Note added.---} We recently became aware of related independent works~\cite{gomez2026pauli,rudolph2026thermal} that simulate thermal states by applying sequences of Pauli-based nonunitary updates starting from the maximally mixed state. These works focus on thermal state preparation for a given Hamiltonian via classical algorithms, whereas we focus on thermal state generation for random Hamiltonians via quantum circuits.

\paragraph*{Acknowledgment.---}
The authors are listed in alphabetical order. 
%The authors would like to thank xxx for thoughtful and insightful comments. 
This work was partially supported by the National Key R\&D Program of China (Grant No.~2024YFB4504004), the National Natural Science Foundation of China (Grant No.~12447107, 92576114), the Guangdong Provincial Quantum Science Strategic Initiative (Grant No.~GDZX2403008, GDZX2503001), the CCF-Tencent Rhino-Bird Open Research Fund, and the Guangdong Provincial Key Lab of Integrated Communication, Sensing and Computation for Ubiquitous Internet of Things (Grant No.~2023B1212010007).

\paragraph*{Code availability.---} Numerical experiments are based on an open-source Python research software for quantum computing~\cite{quairkit}. Code and data used in the numerical experiments are available on \url{https://github.com/QuAIR/SamplingThermalState-Codes}.

%%%%%%%%% SUPPLEMENTAL MATERIAL %%%%%%%%%

\clearpage
\appendix
\setcounter{subsection}{0}
\setcounter{table}{0}
\setcounter{figure}{0}

\vspace{3cm}
\onecolumngrid
% \vspace{2cm}

\begin{center}
\Large{\textbf{Supplemental Material for } \\ \textbf{Thermal-Drift Sampling: Generating Thermal Ensembles for Learning Many-Body Systems}}
\end{center}

% Appendix numbering
% \renewcommand{\theequation}{S\arabic{equation}}
\numberwithin{equation}{section}
\renewcommand{\theproposition}{S\arabic{proposition}}
\renewcommand{\thelemma}{S\arabic{proposition}}
\renewcommand{\thetheorem}{S\arabic{proposition}}
\renewcommand{\thedefinition}{S\arabic{definition}}
\renewcommand{\thefigure}{S\arabic{figure}}

\providecommand{\theHlemma}{}
\renewcommand{\theHlemma}{S\arabic{proposition}}

\renewcommand{\theHequation}{S\arabic{equation}}
\renewcommand{\theHtheorem}{S\arabic{proposition}}
\renewcommand{\theHtable}{S\arabic{table}}
\renewcommand{\theHfigure}{S\arabic{figure}}

% Reset counters
\setcounter{equation}{0}
\setcounter{table}{0}
\setcounter{section}{0}
\setcounter{proposition}{0}
\setcounter{definition}{0}
\setcounter{figure}{0}

% \tableofcontents
% %%%%%%%%%%%%%%%%%%%%%%%%%%%%%%%%%%%%%%%%%%%%%%%%%%%%%%%%%%%%%%%%%%%%%%%%%%%

%%%%%%%%%%%%%%%%%%%%%%%%%%%%%%%%%%%%%%%%%%%%%%%%%%%%%%%%%%%%%%%%%%%%%%%%%%%%

In this supplementary material, we offer detailed proofs of the theorems and propositions in the manuscript `Thermal-Drift Sampling: Generating Thermal Ensembles for Learning Many-Body Systems.' In section~\ref{sec:preliminaries}, we will introduce the notations and the basic lemmas that we used in the proofs and the analysis. We then, introduce the details of our thermal-drift channel in section~\ref{sec:tdc} which serves as the core of our algorithm. After that, the fundamental setups of the random thermal state generation task and the algorithm statements are clarified in section~\ref{sec:sample proof}. We also provide performance and error analysis of our algorithm in the subsections of the same section. The applications towards the level statistics and the state classification are discussed in section~\ref{sec:apps}. The corresponding experimental details are provided in the followed section~\ref{sec:exp_setting} for board interests.

\section{Notations and Preliminaries}\label{sec:preliminaries}
\paragraph*{Asymptotic notations.} Let $[N] = \{1,2,\dotsc,N\}$ for any positive integer $N$.
We use asymptotic notations to describe the scaling of complexity and error bounds. 
For two non-negative functions $f(x)$ and $g(x)$, we write $f(x) = \BigO{g(x)}$ if there exist constants $C > 0$ and $x_0$ such that $f(x) \leq C g(x)$ for all $x \geq x_0$. 
Conversely, $f(x) = \BigOmega{g(x)}$ indicates that $f(x) \geq c g(x)$ for some constant $c > 0$ and sufficiently large $x$. 
The notation $f(x) = o(g(x))$ implies that $\lim_{x \to \infty} f(x)/g(x) = 0$.
We also use the soft-O notation $\BigTO{g(x)}$ to suppress polylogarithmic factors, i.e., $\BigTO{g(x)} \equiv \BigO{g(x) \poly(\log g(x))}$.

\paragraph*{Operators.} We denote the single-qubit Pauli operators by the set $\set{I, X, Y, Z}$, defined as $I = \begin{psmallmatrix} 1 & 0 \\ 0 & 1 \end{psmallmatrix}$, $X = \begin{psmallmatrix} 0 & 1 \\ 1 & 0 \end{psmallmatrix}$, $Y = \begin{psmallmatrix} 0 & -i \\ i & 0 \end{psmallmatrix}$ and $Z = \begin{psmallmatrix} 1 & 0 \\ 0 & -1 \end{psmallmatrix}$.
The set of $n$-qubit Pauli operators is denoted as $\set{I, X, Y, Z}^{\ox n}$.
We denote the Hadamard gate and the single-qubit rotation around the $y$-axis by: $H = \frac{1}{\sqrt{2}}\begin{psmallmatrix} 1 & 1 \\ 1 & -1 \end{psmallmatrix}$ and $R_y(\theta) = e^{-i \theta Y / 2} = \begin{psmallmatrix} \cos(\theta/2) & -\sin(\theta/2) \\ \sin(\theta/2) & \cos(\theta/2) \end{psmallmatrix}$.
The commutator of operators $A$ and $B$ is $\comm{A}{B}=AB-BA$.
The adjoint action (Lie bracket) is defined recursively as $\operatorname{ad}_A(B) = \comm{A}{B}$ and $\operatorname{ad}^k_A(B) = \comm{A}{\operatorname{ad}^{k-1}_A(B)}$ for $k \geq 2$.

\paragraph*{Quantum systems.} Let $\cH$ denote a finite-dimensional complex Hilbert space. 
A density operator $\rho$ is a linear operator acting on $\cH$ that is positive semi-definite matrix with unit trace, and we write $\densityspace{A}$ for the set of density operators on a system $A$ with Hilbert space $\cH_A$.
For a composite system with Hilbert space $\cH_{AB} = \cH_A \ox \cH_B$, we write $X_A = X \ox I_B$ to denote an operator $X$ acting locally on subsystem $A$, where $I_B$ is the identity on $\cH_B$.

\paragraph*{Norms.} For a vector $\bfx=(\bfx_1, \dotsc, \bfx_L) \in \CC^L$, we define the Euclidean norm as $\norm{\bfx} = \sqrt{\sum_{i=1}^L \abs{\bfx_i}^2}$.
For a matrix $A \in \CC^{d \times d}$, we define the spectral norm $\inorm{A}$ as the largest singular value of $A$, and the trace norm $\tnorm{A}$ as the sum of the singular values of $A$.

\paragraph*{Parity functions.} For $N\in\NN$ and $\bfx=(\bfx_1,\ldots,\bfx_L)\in\ZZ^L$, define the \emph{reachability parity function} $a_N: \ZZ^L\to \set{0, 2}$ by
\begin{equation}~\label{eqn:reach parity def}
    a_N(\bfx) =
    \begin{cases}
    2, & \textrm{if }\sum_{j=1}^L |\bfx_j|\leq N
           \text{ and } N - \sum_{j=1}^L |\bfx_j| \textrm{ is even};\\
    0, & \textrm{otherwise.}
  \end{cases}
\end{equation}
Thus $a_N (\bfx)=0$ whenever $\bfx$ is not reachable from the origin in $N$ steps of the walk, and $a_N(\bfx)=2$ otherwise.
Define the \emph{bitwise parity function} for $x \in \NN$ as
\begin{equation}~\label{eqn:parity def}
\parity{x} = \begin{cases}
    0 , &\textrm{if $\sum_i x_i$ is even};\\
    1, &\textrm{otherwise,}
\end{cases}
\end{equation}
where $x_{N - 1} \cdots x_0$ is the binary representation of $x = \sum_i 2^ix_i$.
Denote the first $l-1$ bits of $x$ as $x_{1:l} \coloneq x_1 \cdots x_{l-1}$.

\paragraph*{Martingales.} A \emph{filtration} is a sequence of $\sigma$-algebras $\set{\cF_0, \cF_1, \dots}$ satisfying $\cF_0 \subseteq \cF_1 \subseteq \cdots$.
A sequence of random variables $\set{X_k}_{k\geq 0}$ is a \emph{martingale} with respect to $\{\cF_k\}$ if $X_k$ is $\cF_k$-measurable, $\expect{\abs{X_k}} < \infty$, and $\expectcond{X_{k+1}}{\cF_k} = X_k$.
A sequence $\set{Y_k}_{k\geq 1}$ is a \emph{martingale difference sequence} (MDS) if it is adapted to the filtration and satisfies $\expectcond{Y_{k+1}}{\cF_k} = 0$.
An MDS is typically constructed from a martingale $\set{X_k}_k$ by setting $Y_k = X_k - X_{k-1}$.

\subsection{Useful lemmas}

This section collects auxiliary lemmas used in later proofs.

\begin{lemma}[Fr\'{o}chet derivative of matrix exponential, \cite{Nicholas2008functions}]~\label{lem:frochet}
    Let $A$, $E$ be two matrices. Then
\begin{equation}
    \ddx{t} e^{A + tE} = \integral{0}{1}{s}{ e^{s(A + tE)} E e^{(1-s)(A + tE)} }
.\end{equation}
\end{lemma}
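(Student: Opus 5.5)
The plan is to prove the formula through the Duhamel identity, using the one-parameter family $F(u) = e^{u(A+tE)}$, differentiated in $t$, and to evaluate at $u=1$ at the end. The key object is $G(u) = \partial_t e^{u(A+tE)}$. Both sides are entire in $t$: the exponential series converges absolutely and uniformly on compact sets. So differentiating termwise in $t$, and swapping the $t$- and $u$-derivatives, is justified.

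\textbf{Step 1: an ODE for $G$.} Start from $\partial_u F(u) = (A+tE)F(u)$ and differentiate in $t$:
\begin{equation}
\partial_u G(u) = E\,F(u) + (A+tE)\,G(u), \qquad G(0)=0.
\end{equation}
This is a linear inhomogeneous matrix ODE.

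\textbf{Step 2: variation of constants.} Set $K(u) = e^{-u(A+tE)}G(u)$. Then
\begin{equation}
\partial_u K(u) = e^{-u(A+tE)}\,E\,e^{u(A+tE)}.
\end{equation}
Integrate from $0$ to $1$ and multiply on the left by $e^{A+tE}$:
\begin{equation}
G(1) = \int_0^1 e^{(1-r)(A+tE)}\,E\,e^{r(A+tE)}\,dr.
\end{equation}

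\textbf{Step 3: match the stated form.} Substitute $s = 1-r$ to obtain exactly $\int_0^1 e^{s(A+tE)}E e^{(1-s)(A+tE)}\,ds$. This is the claimed expression, since $G(1) = \ddx{t} e^{A+tE}$.

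\textbf{Alternative check via power series.} The same identity follows by expanding $e^{A+tE} = \sum_k (A+tE)^k/k!$ and differentiating. The $k$-th term contributes
\begin{equation}
\sum_{j=0}^{k-1} M^j E M^{k-1-j}, \qquad M = A+tE.
\end{equation}
The Beta integral $\int_0^1 s^j(1-s)^{k-1-j}\,ds = j!(k-1-j)!/k!$ then matches the coefficients term by term.

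\textbf{Main obstacle.} The only delicate point is justifying the interchange of derivatives and integrals. Here it is routine, because all the series converge absolutely in operator norm in finite dimensions. I therefore expect no real difficulty; the lemma is standard (Higham's book on matrix functions), and citing it as the paper does is also sufficient.
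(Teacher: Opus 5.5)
Your proof is correct: the variation-of-constants step with $K(u)=e^{-u(A+tE)}G(u)$ and $G(0)=0$ gives the integral formula after the substitution $s=1-r$, and the Beta-integral power-series check is a valid independent confirmation. The paper gives no proof of its own and only cites Higham's book, where the formula is obtained by this same Duhamel/variation-of-constants identity, so your proposal matches the intended justification.
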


\begin{lemma}~\label{lem:int norm inequality}
    For any Hermitian matrix $X$ and positive semidefinite matrix $R$,
\begin{equation}
    \tnorm{\integral{0}{1}{s}{R^s X R^{1-s}}} \leq \inorm{X} \tnorm{R}
\end{equation}
\end{lemma}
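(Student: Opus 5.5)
We need to prove Lemma (int norm inequality): for Hermitian $X$ and PSD $R$, $\tnorm{\int_0^1 R^s X R^{1-s}\,ds} \le \inorm{X}\tnorm{R}$.

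The plan is to reduce the trace norm to a variational formula and then use an operator-convexity-free argument based on the eigenbasis of $R$. We write $R=\sum_i r_i \ketbra{i}{i}$ with $r_i\ge 0$, so that
\begin{equation}
\int_0^1 R^s X R^{1-s}\,\textup{d}s=\sum_{i,k} L(r_i,r_k)\,X_{ik}\ketbra{i}{k},
\end{equation}
where $L(a,b)=\int_0^1 a^s b^{1-s}\,\textup{d}s$ is the logarithmic mean (with $L(a,a)=a$ and $L(a,0)=0$). Thus the integral equals a Schur product $K\circ X$ with $K_{ik}=L(r_i,r_k)$. We then use the trace-norm duality $\tnorm{A}=\sup_{\inorm{U}\le 1}\abs{\trace{UA}}$ and aim to bound each term directly.

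The cleaner route is a pointwise bound. For each fixed $s$, Hölder's inequality gives
\begin{equation}
\tnorm{R^s X R^{1-s}}\le \norm{R^s}_{1/s}\,\inorm{X}\,\norm{R^{1-s}}_{1/(1-s)}.
\end{equation}
Since $\norm{R^s}_{1/s}=(\trace{R})^s$ and $\norm{R^{1-s}}_{1/(1-s)}=(\trace{R})^{1-s}$, the product is $\inorm{X}\tnorm{R}$. Applying the triangle inequality for the Bochner integral, $\tnorm{\int_0^1 f(s)\,\textup{d}s}\le\int_0^1\tnorm{f(s)}\,\textup{d}s$, then finishes the proof. Continuity of $s\mapsto R^sXR^{1-s}$ on $(0,1)$ justifies the integral. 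The endpoints $s=0,1$ are handled by direct Hölder with the pair $(\infty,1)$, where we interpret $R^0$ as the support projector, which has operator norm at most $1$.

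The only delicate step is the generalized Hölder inequality for three factors with exponents $1/s$, $\infty$ and $1/(1-s)$, which sum to $1$ in reciprocal. I would cite it as standard (e.g., Bhatia). The Schur-product picture is only a sanity check, so the Hölder route will be the written proof. Hermiticity of $X$ is not even needed.
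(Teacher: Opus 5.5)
Your proof is correct, but it is not the paper's route. The paper diagonalizes $R=\sum_j p_j\ketbra{j}{j}$ and writes $\int_0^1R^sXR^{1-s}\,\textup{d}s$ as the Schur product of $X$ with the logarithmic-mean matrix $[L(p_i,p_j)]_{ij}$. It then uses positive semidefiniteness of that kernel and the Schur product theorem to see that $X\mapsto\int_0^1R^sXR^{1-s}\,\textup{d}s$ is a positive map with $I\mapsto R$. Next it converts $-\inorm{X}I\preceq X\preceq\inorm{X}I$ into $-\inorm{X}R\preceq\int_0^1R^sXR^{1-s}\,\textup{d}s\preceq\inorm{X}R$, and the trace-norm bound follows. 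So the Schur-product picture you kept only as a sanity check is in fact the paper's whole argument.

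Your pointwise bound $\tnorm{R^sXR^{1-s}}\le(\trace{R})^s\inorm{X}(\trace{R})^{1-s}$, followed by the triangle inequality for the integral, is shorter. It needs only the three-factor Schatten--H\"older inequality and avoids the nontrivial positivity of the logarithmic-mean kernel. It also drops the Hermiticity assumption on $X$, and since it holds for each $s$ separately, it works for any averaging measure on $[0,1]$.

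What the paper's approach buys in return is the stronger Loewner-order sandwich above. That is genuinely a property of the integrated, symmetrized kernel: for $s\neq 1/2$ the individual maps $X\mapsto R^sXR^{1-s}$ do not even preserve Hermiticity. It is not needed, however, for the lemma's only use, bounding $\tnorm{F(t)}$ in the normalization lemma.

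Incidentally, the paper's last step is not valid as written. It matches eigenvalues of $\int_0^1R^sXR^{1-s}\,\textup{d}s$ and of $D$ index by index, which fails already for $R=\mathrm{diag}(1,\epsilon)$, $X=-I$. The sandwich does imply the bound, but via splitting along the positive and negative spectral projectors of the integral. Your H\"older route sidesteps this issue entirely.
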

\begin{proof}
     Suppose $R$ has the spectral decomposition $R= UDU^\dagger = \sum_{j} p_j \ketbra{j}{j}$ for some orthonormal basis $\set{\ket{j}}_j$.
     Denote $\cL(X) = \integral{0}{1}{s}{D^s X D^{1-s}}$.
    Consider the $(i,j)$th element of $\cL(X)$:
\begin{align}
    \bra{i} \cL(X) \ket{j} & = \bra{i}\integral{0}{1}{s}{\sum_k p_k^s \ket{k}\bra{k} X \sum_{k'} p_{k'}^{1-s} \ket{k'}\bra{k'} }\ket{j}\\
    & = \bra{i}X\ket{j}\integral{0}{1}{s}{ p_i^s p_j^{1-s}}\\
    & = X_{i,j} L(p_i, p_j),
\end{align}
where 
\begin{equation}
L(a,b) = \begin{cases}
    a ,& \text{if}\; a=b;\\
    \frac{a-b}{\log a - \log b},& \text{if}\; a\neq b.
\end{cases}
\end{equation}
Denote the $(i,j)$th element of matrix $\mathfrak{L}$ as $L(p_i, p_j)$.
The matrix $\mathfrak{L}$ is the Loewner matrix of the Schlicht function $\log z$ mapping the upper half-plane into itself~\cite{Bhatia2007mean}, so $\mathfrak{L}$ is positive semidefinite (PSD).
Since $\cL(X)= X \odot \mathfrak{L}$, where $\odot$ is the Hadamard product, the Schur product theorem~\cite[Section 7.5]{horn2012matrix} implies that $\cL(X)$ is PSD whenever $X$ is PSD.
Therefore,
\begin{equation}
    \cL(\inorm{X} I)=\inorm{X} D\succeq\cL(X)\succeq-\inorm{X} D= \cL(-\inorm{X} I)
.\end{equation}
Denote $\lambda_1(X)$, $\lambda_2(X)$, $\dotsc$ as the eigenvalues of $X$ in ascending order.
$\cL(X)\succeq-\inorm{X} D$ means $-\lambda_j\bigl(\cL(X)\bigr)\leq \inorm{X} \lambda_j\bigl(D\bigr)$ for all $j$ satisfying $\lambda_j\bigl(\cL(X)\bigr)< 0$, while $\inorm{X} D\succeq\cL(X)$ means $\lambda_j\bigl(\cL(X)\bigr)\leq \inorm{X}\lambda_j\bigl(D\bigr)$ for all $j$ satisfying $\lambda_j\bigl(\cL(X)\bigr) \geq 0$.
Combining, we have 
\begin{equation}
\tnorm{\cL(X)}=\sum_j \abs{\lambda_j\bigl(\cL(X)\bigr)}\leq \inorm{X}\sum_j\lambda_j\bigl(D\bigr) = \inorm{X}\tnorm{D} = \inorm{X}\tnorm{R}
.\end{equation}
\end{proof}

\begin{lemma}[Baker--Campbell--Hausdorff formula for $e^Xe^Y e^X$,~\cite{hall2015lie}]~\label{lem:BCH}
    For matrices $X$ and $Y$,
\begin{equation}
    \log(e^X e^Y e^X) = 2X + Y - \frac{1}{6} \comm{X + Y}{\comm{X}{Y}} + \BigO{(\inorm{X} + \inorm{Y})^5}
.\end{equation}
\end{lemma}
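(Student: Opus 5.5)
The statement to prove is the last one displayed, Lemma~\ref{lem:BCH}: the symmetric Baker--Campbell--Hausdorff expansion
\[
\log(e^Xe^Ye^X) = 2X+Y-\tfrac16[X+Y,[X,Y]] + \BigO{(\inorm{X}+\inorm{Y})^5}.
\]
The plan is to derive it from the ordinary BCH series. Two structural facts will carry most of the work.

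\textbf{Step 1 (even-order terms vanish).} Set $Z(X,Y)=\log(e^Xe^Ye^X)$, which is well defined and analytic for $\inorm{X}+\inorm{Y}$ small. Then
\[
(e^Xe^Ye^X)^{-1}=e^{-X}e^{-Y}e^{-X},
\]
so $Z(-X,-Y)=-Z(X,Y)$. In the homogeneous Lie series of $Z$, the degree-$k$ part therefore satisfies $Z_k(-X,-Y)=(-1)^kZ_k(X,Y)$, and oddness of $Z$ forces $Z_2=Z_4=0$. Consequently the remainder after degree $3$ is already of order $5$, which explains the exponent $5$ rather than $4$.

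\textbf{Step 2 (compute $Z_1$ and $Z_3$).} Apply the standard BCH formula
\[
\log(e^Ae^B)=A+B+\tfrac12[A,B]+\tfrac1{12}\big([A,[A,B]]+[B,[B,A]]\big)+\cdots
\]
twice. First take $A=X$ and $B=Y$ to get $W=\log(e^Xe^Y)$ through third order. Then take $A=W$ and $B=X$ to get $\log(e^We^X)$, keeping only terms up to total degree $3$. Collect the linear part, which gives $2X+Y$. The quadratic part should cancel, as Step 1 guarantees: $\tfrac12[X,Y]+\tfrac12[X+Y,X]=0$, and this serves as a check. Then collect the cubic terms. An equivalent route is to write $Z=2X+Y+Z_3+\BigO{5}$ and match the cubic terms in $e^Xe^Ye^X$ against $e^{Z}$. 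The expected outcome is
\[
Z_3=-\tfrac16[X,[X,Y]]-\tfrac16[Y,[X,Y]]\;\Big(=\tfrac16[X,[Y,X]]+\tfrac16\big(-[Y,[X,Y]]\big)\Big),
\]
which equals $-\tfrac16[X+Y,[X,Y]]$. The known Strang-splitting identity $\log(e^{A/2}e^Be^{A/2})=A+B-\tfrac1{24}[A,[A,B]]+\tfrac1{12}[B,[B,A]]$, evaluated at $A=2X$, gives $-\tfrac16[X,[X,Y]]-\tfrac1{12}\cdot2[Y,[X,Y]]$. I will reconcile this against the direct computation, and fix the coefficient of the $[Y,[X,Y]]$ term by that direct expansion.

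\textbf{Step 3 (remainder bound).} Dynkin's form of the BCH series converges absolutely for $\inorm{X}+\inorm{Y}<\tfrac12\log 2$ (or apply the bound to each of the two nested BCH compositions). Each degree-$k$ homogeneous term is bounded by $C^k(\inorm{X}+\inorm{Y})^k$. Summing $k\ge5$ then gives the $\BigO{(\inorm{X}+\inorm{Y})^5}$ remainder, with the constant uniform on a small ball.

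\textbf{Main obstacle.} The delicate point is the cubic bookkeeping in Step 2, specifically the coefficient of $[Y,[Y,X]]$. The coefficient of $[X,[X,Y]]$ is pinned down by the Strang formula, but I will redo the $[Y,[Y,X]]$ coefficient carefully. If the direct expansion gives $\tfrac16$ rather than $\tfrac1{12}\cdot 2$, the stated form holds as written. Otherwise, the cubic term is in any case a fixed linear combination of $[X,[X,Y]]$ and $[Y,[Y,X]]$ with norm $\BigO{(\inorm{X}+\inorm{Y})^3}$. Only this magnitude is needed downstream, for the Trotter-type error in the proof of Theorem~\ref{thm:error}.
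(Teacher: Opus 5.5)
Your approach is sound, and the lemma is correct exactly as stated. The paper gives no derivation of its own; it only cites Hall's textbook. So your argument is a self-contained replacement rather than a competing route. Step~1 is the same oddness argument the paper later uses in the proof of Lemma~\ref{lem:BCH multi expansion}. Step~3 correctly supplies the degree-$5$ remainder on a small ball, for instance where $2\inorm{X}+\inorm{Y}<\log 2$: there the logarithm series of $e^Xe^Ye^X$, regrouped by degree, converges absolutely.

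The ``main obstacle'' you describe does not exist, because $\tfrac1{12}\cdot 2=\tfrac16$. The Strang identity you quote, evaluated at $A=2X$ and $B=Y$, already yields $-\tfrac16[X,[X,Y]]+\tfrac16[Y,[Y,X]]=-\tfrac16[X+Y,[X,Y]]$, so there is nothing to reconcile. Your direct two-step expansion confirms this. Write $a=[X,[X,Y]]$ and $b=[Y,[Y,X]]$. The cubic contributions are:
\begin{itemize}
\item $\tfrac1{12}(a+b)$ from $W$;
\item $-\tfrac14 a$ from $\tfrac12[W,X]$;
\item $\tfrac1{12}(b-a)$ from $\tfrac1{12}[W,[W,X]]$;
\item $\tfrac1{12}a$ from $\tfrac1{12}[X,[X,W]]$.
\end{itemize}
The total is $-\tfrac16 a+\tfrac16 b$. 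So simply carry out this tally and drop the hedge.

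The fallback you offer would not prove the lemma as stated. Its rationale is also inaccurate: the $N^{-3/2}$ rate in Theorem~\ref{thm:error} depends on the commutator structure of the cubic term, not just its size. Take $X=B_k$ and $Y\approx 2S_{k-1}$, with $\inorm{S_{k-1}}=\BigO{\tau\sqrt{NK}}$ as in the paper.
\begin{itemize}
\item The $[X,[X,Y]]$ part is quadratic in the small $B_k$.
\item The $[Y,[Y,X]]$ part is linear in $B_k$. This linearity is what lets the paper subtract the drift and apply Freedman's inequality.
\end{itemize}
A bare $\BigO{(\inorm{X}+\inorm{Y})^3}$ bound per step, summed over $N$ steps, would give only $\BigO{N^{5/2}K^{3/2}\tau^3}$ rather than $\BigO{N^{3/2}K^{3/2}\tau^3}$.
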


\begin{lemma}[Freedman Inequality for Hermitian matrices,~\cite{tropp2011freedman}]~\label{lem:freedman}
    Let $(\cF_k)_k$ be a filtration, and let $(X_k)_k$ be a martingale difference sequence of $d$-dimensional Hermitian matrices such that $\inorm{X_k} \leq R$ for some $R > 0$.
    Then the partial sum $Y_k = \sum_{i=1}^k X_k$ and the \emph{predictable quadratic variation process} $W_k = \sum_{i=1}^k \expectcond{X_i^2}{\cF_{i-1}}$ satisfies for all $t \geq 0$ and $\sigma^2 > 0$,
\begin{equation}
    \prob{\exists k > 0 \textrm{ : } \inorm{Y_k} > t \textup{ and } \inorm{W_k} \leq \sigma^2} \leq 2d \exp\!\bigg(-\frac{t^2/2}{\sigma^2 + Rt/3}\bigg)
.\end{equation}
\end{lemma}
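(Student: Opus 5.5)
Since this is the matrix Freedman inequality of Tropp, I would reproduce Tropp's supermartingale argument based on Lieb's concavity theorem. First I reduce to a one-sided bound: $\inorm{Y_k} > t$ means $\lambda_{\max}(Y_k) > t$ or $\lambda_{\max}(-Y_k) > t$. The sequence $(-X_k)_k$ is again a martingale difference sequence with the same bound $R$ and the same $W_k$. So it suffices to show
\begin{equation*}
\prob{\exists k : \lambda_{\max}(Y_k) \geq t,\ \inorm{W_k}\leq\sigma^2} \leq d\exp\!\Big(-\frac{t^2/2}{\sigma^2+Rt/3}\Big),
\end{equation*}
and a union bound over the two signs gives the factor $2d$.

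\emph{Step 1 (conditional moment generating function).} Fix $\theta>0$ and set $g(\theta) = (e^{\theta R}-\theta R-1)/R^2$. The scalar function $x \mapsto (e^{\theta x}-\theta x-1)/x^2$ is increasing, so $e^{\theta x} \leq 1 + \theta x + g(\theta) x^2$ for all $x \leq R$. By the transfer rule this holds as a semidefinite inequality for the Hermitian matrix $X_k$, whose spectrum lies in $[-R,R]$. Taking conditional expectations, using $\expectcond{X_k}{\cF_{k-1}}=0$, and then $I + A \preceq e^{A}$ gives
\begin{equation*}
\expectcond{e^{\theta X_k}}{\cF_{k-1}} \preceq \exp\!\big(g(\theta)\,\expectcond{X_k^2}{\cF_{k-1}}\big).
\end{equation*}
Since the logarithm is operator monotone, this becomes $\log \expectcond{e^{\theta X_k}}{\cF_{k-1}} \preceq g(\theta)\,\expectcond{X_k^2}{\cF_{k-1}}$.

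\emph{Step 2 (supermartingale).} Define $S_k = \tr \exp(\theta Y_k - g(\theta) W_k)$ with $S_0 = d$. Since $W_k$ is $\cF_{k-1}$-measurable, Lieb's theorem applies: $A \mapsto \tr\exp(H + \log A)$ is concave on positive definite $A$. Conditional Jensen then gives
\begin{equation*}
\expectcond{S_k}{\cF_{k-1}} \leq \tr\exp\!\big(\theta Y_{k-1} - g(\theta) W_k + \log \expectcond{e^{\theta X_k}}{\cF_{k-1}}\big).
\end{equation*}
By Step 1 and monotonicity of $\tr\exp$, the right-hand side is at most $S_{k-1}$. Hence $(S_k)_k$ is a positive supermartingale.

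\emph{Step 3 (stopping and optimization).} Let $\kappa$ be the first $k$ with $\lambda_{\max}(Y_k)\geq t$ and $\inorm{W_k}\leq\sigma^2$. On $\{\kappa<\infty\}$ we have $-g(\theta)W_\kappa \succeq -g(\theta)\sigma^2 I$. Using $\tr e^{A}\geq e^{\lambda_{\max}(A)}$ and Weyl's inequality, this gives $S_\kappa \geq e^{\theta t - g(\theta)\sigma^2}$. Optional stopping for the supermartingale then yields
\begin{equation*}
\prob{\kappa<\infty}\, e^{\theta t-g(\theta)\sigma^2} \leq d.
\end{equation*}
To justify this, I apply optional stopping to $\kappa\wedge n$ and let $n\to\infty$ using Fatou's lemma. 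Choosing $\theta = R^{-1}\log(1+Rt/\sigma^2)$ gives the bound $d\exp(-\tfrac{\sigma^2}{R^2} h(Rt/\sigma^2))$ with $h(u)=(1+u)\log(1+u)-u$. The elementary inequality $h(u)\geq \frac{u^2/2}{1+u/3}$ then yields the stated form.

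The main obstacle is Step 2. The conditional Jensen step needs the full strength of Lieb's concavity theorem, and one must check that $W_k$ is predictable so it can be held fixed inside the conditional expectation. Everything else is scalar calculus and standard stopping-time bookkeeping.
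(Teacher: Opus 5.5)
Your proposal is correct and is essentially Tropp's own proof. The paper states this lemma without proof and simply cites Tropp; your supermartingale $\tr\exp(\theta Y_k - g(\theta)W_k)$ built on Lieb's concavity theorem, the optional-stopping step, the Bennett-to-Bernstein relaxation, and the union bound over $\pm Y_k$ that produces the factor $2d$ all go through as you describe.
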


\begin{lemma}[Lattice-based random walk,~\cite{spitzer2001principles}]~\label{lem:reference distribution}
    Consider a random walk $(\walk{k})_k$ on a lattice $\set{-N, \ldots, N}^{\times L}$ with $\walk{0} = 0$, and there exists $p_1$, $\ldots$, $p_L$ such that for all $j$ and $k$, $ \prob{\walk{k} - \walk{k - 1} = \pm e_j} = p_j /2$.
    Then for any endpoint $\bfx \in \ZZ^{L}$ such that $\sum_j \bfx_j^2 = N$,
\begin{equation}
    \probQ{\walk{N} = \bfx} 
    = \frac{a_N(\bfx)}{(2\pi N)^{L/2}\sqrt{\prod_{j=1}^L p_j}}
    \exp\left(-\frac{1}{2N} \sum_j \frac{\bfx_j^2}{p_j}  \right)
    + o\bigl(N^{-L/2}\bigr)
,\end{equation}
    where $a_N$ is the reachable parity function in Equation~\eqref{eqn:reach parity def}.
\end{lemma}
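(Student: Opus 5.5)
The plan is to prove this local limit statement with the standard Fourier-inversion argument for a lattice walk of period two. I would assume all $p_j>0$ and $\sum_j p_j = 1$. The increments are i.i.d. with law $\pm e_j$, each with probability $p_j/2$, so the characteristic function of one step is
\begin{equation}
    \varphi(\theta) = \sum_{j=1}^L p_j \cos\theta_j, \qquad \theta \in [-\pi,\pi]^L .
\end{equation}
Fourier inversion on $\ZZ^L$ then gives the exact formula
\begin{equation}
    \probQ{\walk{N} = \bfx} = (2\pi)^{-L}\int_{[-\pi,\pi]^L} \varphi(\theta)^N e^{-i\theta\cdot\bfx}\,\textup{d}\theta .
\end{equation}
Since the claimed error $o(N^{-L/2})$ will come out uniform in $\bfx$, the constraint on the endpoint in the statement only matters through the Gaussian factor. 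I would read it as $\norm{\bfx}=\BigO{\sqrt N}$, which is the regime where the main term is of order $N^{-L/2}$.

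First I would locate where $\abs{\varphi}$ equals $1$. Because $\abs{\varphi(\theta)}\le\sum_j p_j\abs{\cos\theta_j}$, equality $\abs{\varphi}=1$ forces every $\cos\theta_j=\pm1$ with a common sign. On the torus this leaves only $\theta=0$ and $\theta=\pi\mathbf{1}=(\pi,\dots,\pi)$. Moreover $\varphi(\theta+\pi\mathbf 1)=-\varphi(\theta)$, so these two points carry identical local behaviour up to a sign.

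I would split the torus into three regions: a ball $B_0$ of radius $r_N=N^{-1/2+\eta}$ around $0$ (with small $\eta>0$), the translate $B_\pi=\pi\mathbf 1+B_0$, and the complement.

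On the complement, compactness gives $\abs{\varphi}\le 1-c\,r_N^2$ for some constant $c>0$. The integral there is therefore at most $\exp(-cN^{2\eta})$, which is $o(N^{-L/2})$.

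On $B_0$ I would Taylor expand. This gives $\log\varphi(\theta) = -\tfrac12\sum_j p_j\theta_j^2 + \BigO{\norm{\theta}^4}$, and the quartic term contributes $N\cdot\BigO{r_N^4}=\BigO{N^{-1+4\eta}}\to 0$ uniformly. I would then rescale $\theta = u/\sqrt N$, replace the integrand by $\exp\bigl(-\tfrac12\sum_j p_j u_j^2 - iu\cdot\bfx/\sqrt N\bigr)$, and extend the domain to $\RR^L$ at a cost that is exponentially small. The resulting Gaussian Fourier integral equals
\begin{equation}
    \frac{1}{(2\pi N)^{L/2}\sqrt{\prod_j p_j}}\exp\Bigl(-\frac1{2N}\sum_j \frac{\bfx_j^2}{p_j}\Bigr),
\end{equation}
up to an $o(N^{-L/2})$ error. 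Dominated convergence, with the bound $\abs{\varphi(u/\sqrt N)}^N\le e^{-c'\norm{u}^2}$ on the rescaled ball, controls both the replacement and the extension, uniformly in $\bfx$.

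On $B_\pi$ I would substitute $\theta=\pi\mathbf 1+\eta'$. Then $\varphi^N=(-1)^N\varphi(\eta')^N$ and $e^{-i\theta\cdot\bfx}=(-1)^{\sum_j\bfx_j}e^{-i\eta'\cdot\bfx}$, so this region reproduces the same Gaussian term multiplied by $(-1)^{N-\sum_j\bfx_j}$.

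Adding the two neighbourhood contributions gives the prefactor $1+(-1)^{N-\sum_j \bfx_j}$. This equals $2$ exactly when $N-\sum_j\abs{\bfx_j}$ is even, and $0$ otherwise, since $\bfx_j$ and $\abs{\bfx_j}$ have the same parity. The remaining case $\sum_j\abs{\bfx_j}>N$ has probability exactly $0$. There the Gaussian term is at most $\exp(-N/2)$, because $\sum_j \bfx_j^2/p_j \ge (\sum_j\abs{\bfx_j})^2 > N^2$ by Cauchy--Schwarz with $\sum_j p_j=1$. That term is $o(N^{-L/2})$, so it can be absorbed into the error and the prefactor can be written as $a_N(\bfx)$ from Equation~\eqref{eqn:reach parity def}.

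I expect the main obstacle to be making the error in the $B_0$ approximation $o(N^{-L/2})$ uniformly in $\bfx$. The oscillating factor $e^{-i\theta\cdot\bfx}$ does not help, so I would bound the absolute difference of integrands, which makes the estimate automatically independent of $\bfx$. I would write $\abs{\varphi^N - e^{-N Q(\theta)}}\le e^{-N Q(\theta)/2}\,\BigO{N\norm{\theta}^4}$ with $Q(\theta)=\tfrac12\sum_j p_j\theta_j^2$. Integrated over $B_0$, this is $\BigO{N^{-L/2}\cdot N^{-1}}$, which is enough.

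A final caveat: if some $p_j=0$, the walk lives on a lower-dimensional sublattice and the formula as stated is degenerate. I would therefore state the positivity assumption $p_j>0$ explicitly.
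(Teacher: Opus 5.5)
The paper gives no proof of this lemma and simply cites Spitzer. Your proposal is the standard characteristic-function/Fourier-inversion local limit argument behind that citation, correctly adapted to the period-two case: the second peak of $\abs{\varphi}$ at $\pi\mathbf{1}$ produces the factor $1+(-1)^{N-\sum_j \bfx_j}$, which equals $a_N(\bfx)$ up to an error of order $e^{-N/2}$. Your error bounds are uniform in $\bfx$, so the paper's odd hypothesis $\sum_j \bfx_j^2 = N$ is not actually needed, and the assumptions you state explicitly ($p_j>0$, $\sum_j p_j = 1$, i.i.d.\ increments under $Q$) are exactly the ones the statement leaves implicit.
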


\section{Thermal-drift channel}\label{sec:tdc}

Let $\sigma$ be an $n$-qubit Pauli operator and let $\tau \in \RR_+$. We define the unnormalized imaginary-time evolution (ITE) maps at time $\tau$ in the two opposite directions by
\begin{equation}
    \pauli{\uparrow}{\sigma}(\rho) = e^{- \tau \sigma/2}\, \rho\, e^{- \tau \sigma/2}
    \textrm{\quad and \quad }
    \pauli{\downarrow}{\sigma}(\rho) = e^{+ \tau \sigma/2}\, \rho\, e^{+ \tau \sigma/2}
.\end{equation}
Throughout this section, we use ``$\uparrow\downarrow$'' to discuss both directions simultaneously.
The goal is to construct a circuit that applies one of these two maps at random to an input state $\rho$ on the main system and records the applied direction in an ancilla register.

To formalize this task, we introduce a $2n$-qubit ancilla system $AB$ and the $n$-qubit main system $S$. Based on the identity that 
\begin{equation}
    \trace{\pauli{\uparrow}{\sigma}(\rho) + \pauli{\downarrow}{\sigma}(\rho)}
    = \trace{(e^{- \tau \sigma} + e^{\tau \sigma}) \rho} = (e^{\tau} + e^{-\tau}) \trace{\rho} = 2\cosh(\tau)
,\end{equation}
we define the thermal-drift channel $\pauliselect{\sigma}(\rho): \densityspace{S} \to \densityspace{ABS}$ satisfying
\begin{equation}~\label{eqn:pauli select}
    \trace[AB]{ \Pi^{\uparrow\downarrow}_{AB} \pauliselect{\sigma}(\rho) }
    = \frac{1}{2\cosh(\tau)} \pauli{\uparrow\downarrow}{\sigma}(\rho)
,\end{equation}
where $\Pi^{\pm}$ are mutually orthogonal projectors on $AB$ such that $\Pi^\uparrow + \Pi^\downarrow \prec I_{AB}$.
Measuring the ancilla system $AB$ in the projection-valued measure extending $\set{\Pi^\uparrow, \Pi^\downarrow}$ reveals which branch of Equation~\eqref{eqn:pauli select} occurred; the corresponding post-measurement state on $S$ is obtained by tracing out $AB$ and normalizing in the usual way.

The map satisfying Equation~\eqref{eqn:pauli select} is completely positive and trace-preserving under the chosen normalization, and thus admits a unitary dilation. The next theorem provides an explicit implementation with $\BigO{n}$ elementary gates, up to a heralded failure outcome $\Pi^\circlearrowleft$.

\begin{theorem}~\label{thm:approx pauli select}
    Suppose $\tau > 0$ and $\sigma$ is an $n$-qubit Pauli operator that has no local identity term.
    Then there exists a projection-valued measure $\set{\Pi^\uparrow, \Pi^\downarrow, \Pi^\circlearrowleft}$ on system $AB$, a $\tau$-dependent ansatz $U_\tau$ and two fixed ansatzes $T$, $V$, such that the map $\pauliselect{\approx}: \densityspace{S} \to \densityspace{ABS}$ constructed in Figure~\ref{fig:pauli select approx} (dashed area) satisfies for all $\rho \in \densityspace{S}$,
\begin{equation}
    \trace[AB]{\Pi^\circlearrowleft_{AB} \pauliselect{\approx}(\rho) } = p \rho \quad\textrm{and}\quad
    \trace[AB]{\Pi^{\uparrow\downarrow}_{AB} \pauliselect{\approx}(\rho) } = (1 - p) \trace[AB]{\Pi^{\uparrow\downarrow}_{AB} \pauliselect{\sigma}(\rho) }
,\end{equation}
    where $p^{-1} = {2\cosh^2(\tau/2)}$, $\trace{\Pi^\uparrow} = \trace{\Pi^\downarrow}$ and $\pauliselect{\sigma}$ is given as in Equation~\eqref{eqn:pauli select}. Moreover, implementation of $\pauliselect{\approx}$ takes $\BigO{n}$ single-qubit gates and CNOT gates; the projection-valued measure can be done in the computational basis.
\end{theorem}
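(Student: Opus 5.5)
The plan is to build $\pauliselect{\approx}$ as an explicit Kraus decomposition and read off the three branches. The key algebraic fact is that $\sigma^2 = I$ for any Pauli operator, so
\begin{equation}
e^{\mp\tau\sigma/2} = \cosh(\tau/2)\, I \mp \sinh(\tau/2)\, \sigma .
\end{equation}
Each unnormalized ITE map is therefore conjugation by a two-term linear combination of unitaries, which a standard LCU gadget implements with one control qubit. I would then add a heralded ``do nothing'' branch of weight $p$ so that the three Kraus operators form a complete set. This matches the stated form exactly: $\trace[AB]{\Pi^\circlearrowleft_{AB}\pauliselect{\approx}(\rho)} = p\rho$, and the remaining weight $1-p$ is split between $\uparrow$ and $\downarrow$ in the proportions prescribed by Equation~\eqref{eqn:pauli select}.

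\textbf{Step 1 (Kraus operators).} Set $c = \cosh(\tau/2)$ and $s = \sinh(\tau/2)$, and define
\begin{equation}
K_\circlearrowleft = \sqrt{p}\, I, \qquad K_{\uparrow\downarrow} = \sqrt{\tfrac{1-p}{2\cosh\tau}}\,\bigl(c I \mp s\sigma\bigr).
\end{equation}
Using $\sigma^2 = I$, the cross terms cancel in $K_\uparrow^\dagger K_\uparrow + K_\downarrow^\dagger K_\downarrow$, which equals $\tfrac{1-p}{\cosh\tau}(c^2+s^2) I = (1-p) I$. Adding $K_\circlearrowleft^\dagger K_\circlearrowleft = pI$ gives the identity, so the map is trace preserving. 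Moreover $K_{\uparrow\downarrow}\rho K_{\uparrow\downarrow}^\dagger = \tfrac{1-p}{2\cosh\tau}\pauli{\uparrow\downarrow}{\sigma}(\rho)$, which is exactly the claimed $\uparrow\downarrow$ branch relation. This holds for any $p\in[0,1)$; the specific value $p^{-1}=2\cosh^2(\tau/2)$ I would recover from the natural amplitude choice in the circuit. The intended choice is to prepare the ancilla in the unnormalized superposition with weights $c$ and $s$ and absorb the normalization into a leftover outcome. Since $2\cosh^2(\tau/2) = 1+\cosh\tau$, this gives $1-p = \cosh\tau/(1+\cosh\tau)$. I would fix the rotation angles of $U_\tau$ to produce precisely these weights.

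\textbf{Step 2 (circuit).} Take $V$ to be a fixed Clifford circuit of $\BigO{n}$ single-qubit gates and CNOTs with $V\sigma V^\dagger = Z_1$; this is possible because $\sigma$ has no identity factor, so single-qubit basis changes followed by a CNOT ladder suffice. Let $U_\tau$ be a product of $R_y$ rotations on two ancilla qubits. It prepares
\begin{equation}
\sqrt{p}\ket{0}_A\ket{0}_B + \sqrt{1-p}\,\ket{1}_A\,\tfrac{c\ket{0}_B + s\ket{1}_B}{\sqrt{\cosh\tau}} .
\end{equation}
The circuit then applies $V$, a controlled-controlled-$Z$ between $A$, $B$ and system qubit 1, and $V^\dagger$, which together realise $\ket{1}\!\bra{1}_A\ox\ket{1}\!\bra{1}_B\ox\sigma$ plus identity elsewhere. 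Finally $T$ is a Hadamard on $B$, and the ancillas are measured in the computational basis with
\begin{equation}
\Pi^\uparrow = \ket{1}\!\bra{1}_A\ox\ket{1}\!\bra{1}_B,\quad \Pi^\downarrow = \ket{1}\!\bra{1}_A\ox\ket{0}\!\bra{0}_B,\quad \Pi^\circlearrowleft = \ket{0}\!\bra{0}_A\ox I_B .
\end{equation}
The labels may need swapping to match the sign convention, with $\uparrow$ corresponding to $e^{-\tau\sigma/2}$. Projecting $B$ onto $\ket{\pm}$ yields $(cI\pm s\sigma)/\sqrt{2}$ times the prefactors, reproducing $K_{\uparrow\downarrow}$, while the $A=0$ branch returns $\sqrt{p}\,\rho^{1/2}$-type action, i.e.\ exactly $p\rho$ after tracing out. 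The condition $\trace{\Pi^\uparrow}=\trace{\Pi^\downarrow}$ is immediate. Padding $AB$ to the stated $2n$ qubits with idle ancillas, or using them to uncompute parity instead of the CNOT ladder, does not change anything.

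\textbf{Step 3 (main obstacle: consistency of normalization and gate count).} The remaining work is bookkeeping. I need to check that the reduced branch states equal $(1-p)$ times the right-hand side of Equation~\eqref{eqn:pauli select} with the paper's $\tfrac{1}{2\cosh\tau}$ normalization. I also need to match the precise $p$ stated, by choosing the rotation angles so that the failure amplitude squared is $1/(1+\cosh\tau)$. Finally, the doubly controlled $Z$ must decompose into $\BigO{1}$ CNOTs and single-qubit gates, using the standard Toffoli decomposition conjugated by Hadamards. With $V$ costing $\BigO{n}$, the total gate count is $\BigO{n}$.
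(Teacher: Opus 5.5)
Your argument is correct, but it uses a genuinely different construction from the paper's.

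\textbf{Your route.} You build the instrument as a linear combination of unitaries. Because $\sigma^2=I$, you have $e^{\mp\tau\sigma/2}=\cosh(\tau/2)I\mp\sinh(\tau/2)\sigma$. So a single doubly controlled $\sigma$ (a Clifford-conjugated $\mathrm{CCZ}$) followed by a Hadamard on the flag gives Kraus operators $\sqrt{(1-p)/(2\cosh\tau)}\,e^{\mp\tau\sigma/2}$ and $\sqrt{p/2}\,I$. The following all check out:
\begin{itemize}
\item the completeness relation;
\item the branch weights, noting $(1-p)/(2\cosh\tau)=1/(4\cosh^2(\tau/2))$;
\item the sign convention, where $\Pi^\uparrow$ gives $e^{-\tau\sigma/2}$ as in the Supplemental definition.
\end{itemize}

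\textbf{The paper's route.} The paper instead uses the SWAP trick of Lemma~\ref{lem:swap enc}. The sandwich $U_\tau^\dagger\,\SWAP_{BS}\,U_\tau$ yields Kraus operators $G_{jk}=\trace[A]{U_\tau\ketbra{00}{jk}U_\tau^\dagger}$. The state $U_\tau\ket{00}$ is designed to purify the diagonal state $e^{-\tau Z^{\ox n}/2}/\bigl(2^{n}\cosh(\tau/2)\bigr)$, so $G_{00}$ is the imaginary-time operator itself. The other outcomes are Pauli-twisted copies, which $V$ undoes and sorts by the parity of $k$ into $\Pi^\uparrow$ and $\Pi^\downarrow$. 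The $j_n=1$ outcomes collapse to $-2^{-n}\sin(2\theta)I$, which is why $p=\tfrac{1}{2}\sin^2(2\theta)$ is forced there rather than chosen. Finally, $T$ rotates $Z^{\ox n}$ into $\sigma$.

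\textbf{Comparison.} Your route is simpler and cheaper in three ways:
\begin{itemize}
\item it needs two ancilla qubits instead of $2n$;
\item it does not need the no-local-identity hypothesis, since a CNOT ladder maps any non-identity Pauli to a single $Z$ on its support;
\item it exposes $p$ as a free parameter.
\end{itemize}
On the last point: with $p=0$ (drop qubit $A$), the operators $e^{\pm\tau\sigma/2}/\sqrt{2\cosh\tau}$ already form a complete instrument. So $\pauliselect{\sigma}$ in Theorem~\ref{thm:pauli select} could be implemented deterministically, without the repeat-until-success loop and its $\log(1/\delta)$ factor. What the paper's route buys is fidelity to the template the statement literally refers to: the circuit of Figure~\ref{fig:pauli select approx}, with $T^\dagger,T$ on $S$ and $V$ on $ABS$ around the $U_\tau$--$\SWAP_{BS}$--$U_\tau^\dagger$ sandwich.

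\textbf{Caveat on the figure.} Your circuit does not instantiate that template: $T$ acts on $B$, $V$ acts only on $S$, and there is no $U_\tau^\dagger$ or $\SWAP_{BS}$. Nor can it simply be dropped into the template. Any preparation acting on $A$ alone commutes with $\SWAP_{BS}$ and is cancelled by $U_\tau^\dagger$, so $\tau$-dependence there must enter through $A$--$B$ entanglement, which is exactly the paper's mechanism. Read strictly, you have proved the instrument-level content, not the claim about the figure's particular circuit, and you should say so explicitly. That content is the branch structure, a computational-basis PVM with $\trace{\Pi^\uparrow}=\trace{\Pi^\downarrow}$, and $\BigO{n}$ gates.

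\textbf{Two small fixes.}
\begin{itemize}
\item A product of single-qubit $R_y$ rotations cannot prepare the entangled ancilla state you wrote. Either use a controlled $R_y$ on $B$, or note that the product state $(\sqrt{p}\ket{0}+\sqrt{1-p}\ket{1})_A\ox(\cosh(\tau/2)\ket{0}+\sinh(\tau/2)\ket{1})_B/\sqrt{\cosh\tau}$ works equally well. This is because $\Pi^\circlearrowleft=\ketbra{0}{0}_A\ox I_B$ traces out $B$ on the $A=0$ branch.
\item In your entangled version, the ``$\sqrt{p}\,\rho^{1/2}$-type action'' should read: the $A=0$ outcomes have Kraus operators $\sqrt{p/2}\,I$, one for each outcome of $B$.
\end{itemize}
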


\begin{figure}[H]
    \centering
    \includegraphics[width=\linewidth]{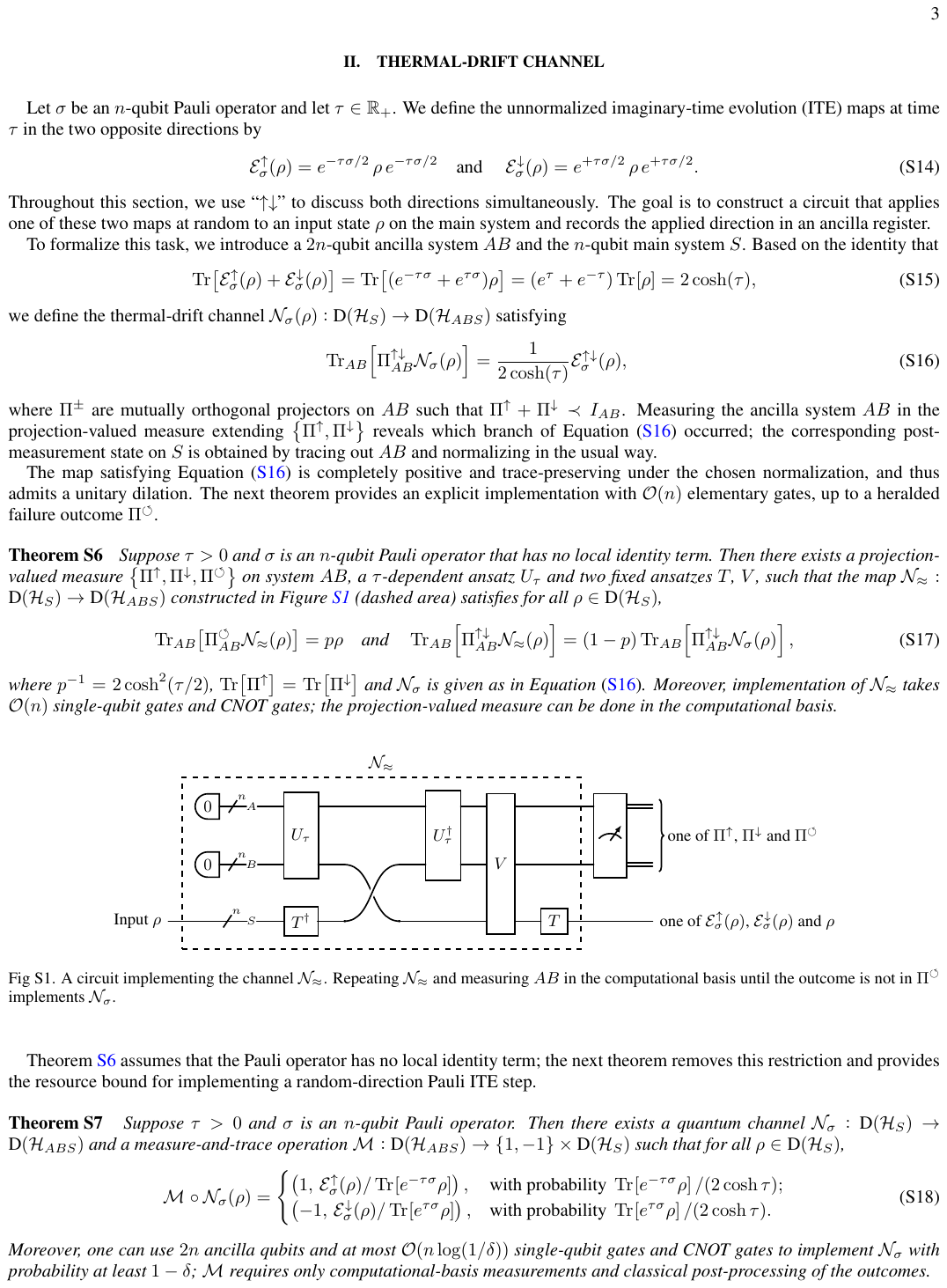}
% \begin{quantikz}[transparent]
%     {} & \inputD{0}\gategroup[3,steps=7,style={inner sep=4pt,dashed,label={above:{$\pauliselect{\approx}$}}}]{} & \qwbundle{n}_A & \gate[2]{U_\tau} & {} & \gate[2]{U_\tau^\dagger} & \gate[3]{V} & {} & \meter[2]{} & \setwiretype{c}\rstick[2]{one of $\Pi^\uparrow$, $\Pi^\downarrow$ and $\Pi^\circlearrowleft$}  \\
%     {} & \inputD{0} & \qwbundle{n}_B & {} & \gate[2,style={draw=none}]{\permute{2,1}} & {} & {} & {} & {} & \setwiretype{c} \\
%     \lstick{Input $\rho$} & {} & \qwbundle{n}_S & \gate{T^\dagger} & {} & {} & {} & \gate{T} & {} & \rstick{one of $\pauli{\uparrow}{\sigma}(\rho)$, $\pauli{\downarrow}{\sigma}(\rho)$ and $\rho$} 
% \end{quantikz}
    \caption{A circuit implementing the channel $\pauliselect{\approx}$. Repeating $\pauliselect{\approx}$ and measuring $AB$ in the computational basis until the outcome is not in $\Pi^\circlearrowleft$ implements $\pauliselect{\sigma}$.}~\label{fig:pauli select approx}
\end{figure}

Theorem~\ref{thm:approx pauli select} assumes that the Pauli operator has no local identity term; the next theorem removes this restriction and provides the resource bound for implementing a random-direction Pauli ITE step.

\begin{theorem}~\label{thm:pauli select}
    Suppose $\tau > 0$ and $\sigma$ is an $n$-qubit Pauli operator. Then there exists a quantum channel $\pauliselect{\sigma}: \densityspace{S} \to \densityspace{ABS}$ and a measure-and-trace operation $\measureselect{}: \densityspace{ABS} \to \set{1, -1} \times \densityspace{S}$ such that for all $\rho \in \densityspace{S}$,
\begin{equation}
    \measureselect{} \circ \pauliselect{\sigma}(\rho) 
    = \begin{cases}
        \left(1,\, \pauli{\uparrow}{\sigma}(\rho)/\trace{e^{-\tau\sigma}\rho}\right), & \textup{with probability } \trace{e^{-\tau\sigma} \rho} / (2\cosh \tau); \\
        \left(-1,\, \pauli{\downarrow}{\sigma}(\rho)/\trace{e^{\tau\sigma}\rho}\right), & \textup{with probability }  \trace{e^{\tau\sigma} \rho} / (2\cosh \tau).
    \end{cases}
\end{equation}
    Moreover, one can use $2n$ ancilla qubits and at most $\BigO{n \log (1/\delta)}$ single-qubit gates and CNOT gates to implement $\pauliselect{\sigma}$ with probability at least $1 - \delta$; $\measureselect{}$ requires only computational-basis measurements and classical post-processing of the outcomes.
\end{theorem}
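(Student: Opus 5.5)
We reduce Theorem~\ref{thm:pauli select} to Theorem~\ref{thm:approx pauli select} in two steps: first remove the local-identity restriction, then turn the heralded-failure map $\pauliselect{\approx}$ into an exact channel by repeat-until-success.

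\emph{Removing the local identities.} Write $\sigma = \sigma' \ox I_{\bar{R}}$, where $R$ is the set of qubits on which $\sigma$ acts nontrivially and $\sigma'$ has no local identity term. Because $e^{\pm\tau\sigma/2} = e^{\pm\tau\sigma'/2}\ox I_{\bar R}$, the maps $\pauli{\uparrow\downarrow}{\sigma}$ act as $\pauli{\uparrow\downarrow}{\sigma'}\ox\id_{\bar R}$. Theorem~\ref{thm:approx pauli select} is linear in $\rho$ and holds for all inputs, so it extends to inputs on $R\bar R$ by linearity. Hence the circuit of Figure~\ref{fig:pauli select approx}, built for $\sigma'$ on $R$ (using only $2|R|\le 2n$ ancillas) and idle on $\bar R$, realizes the same identities for $\sigma$. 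The trivial case $\sigma = I$ is degenerate: then $\trace{e^{\mp\tau}\rho}/(2\cosh\tau)$ gives the two probabilities, and the branch can be implemented by flipping a classically biased coin. Only one single-qubit rotation is needed for it.

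\emph{Repeat-until-success.} Apply $\pauliselect{\approx}$ and measure $AB$ in the computational basis. By Theorem~\ref{thm:approx pauli select}, the outcome $\Pi^\circlearrowleft$ occurs with probability exactly $p$, independent of $\rho$, and leaves the system state $\rho$ unchanged. We can therefore discard the ancillas, reset them to $\ket{0}$, and retry. Summing the geometric series, the final unnormalized branch is
\begin{equation}
\sum_{r\ge0} p^r (1-p)\,\trace[AB]{\Pi^{\uparrow\downarrow}_{AB}\pauliselect{\sigma}(\rho)} = \frac{1}{2\cosh\tau}\pauli{\uparrow\downarrow}{\sigma}(\rho).
\end{equation}
Normalizing gives probability $\trace{e^{\mp\tau\sigma}\rho}/(2\cosh\tau)$ and post-measurement state $\pauli{\uparrow\downarrow}{\sigma}(\rho)/\trace{e^{\mp\tau\sigma}\rho}$. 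This is exactly the claimed action of $\measureselect{}\circ\pauliselect{\sigma}$, where $\measureselect{}$ maps outcomes in $\Pi^\uparrow$ to $1$ and those in $\Pi^\downarrow$ to $-1$ by classical post-processing.

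\emph{Cost.} Since $p = 1/(2\cosh^2(\tau/2)) \le 1/2$, truncating after $r = \lceil \log_2(1/\delta)\rceil$ rounds fails with probability $p^r \le \delta$. Each round costs $\BigO{n}$ single-qubit and CNOT gates by Theorem~\ref{thm:approx pauli select}, plus $\BigO{n}$ gates for the ancilla reset, so the total is $\BigO{n\log(1/\delta)}$. The ancilla count stays $2n$ because the ancillas are reused across rounds.

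\emph{Main obstacle.} The subtle point is justifying the retry: the failure branch must return the system exactly to $\rho$ rather than to an entangled or perturbed state. This is guaranteed by the statement $\trace[AB]{\Pi^\circlearrowleft\pauliselect{\approx}(\rho)} = p\rho$ for all $\rho$, applied with the purification/linearity argument so that correlations with any reference system are preserved. That makes the conditional map on $\Pi^\circlearrowleft$ the identity channel, so the geometric-series computation is exact.
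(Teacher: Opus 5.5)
Your proposal is correct and follows essentially the same route as the paper. You reduce to Theorem~\ref{thm:approx pauli select} on the non-identity support of $\sigma$, and you use repeat-until-success with the $\rho$-independent fallback probability $p = 1/(2\cosh^2(\tau/2)) \le 1/2$, whose branch returns exactly $\rho$, to get $\BigO{n\log(1/\delta)}$ gates. You then read off the branch probabilities and normalized states from $\trace[AB]{\Pi^{\uparrow\downarrow}_{AB}\pauliselect{\sigma}(\rho)} = \pauli{\uparrow\downarrow}{\sigma}(\rho)/(2\cosh\tau)$.

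The differences are minor. You act directly on the support of $\sigma$ instead of conjugating by a qubit permutation $P$. You also treat the degenerate case $\sigma = I$ separately, which the paper's reduction does not explicitly address.
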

\begin{proof}
    We first consider the case where $\sigma$ has no local identity term. Theorem~\ref{thm:approx pauli select} provides a projection-valued measure $\set{\Pi^\uparrow,\Pi^\downarrow,\Pi^\circlearrowleft}$ and a channel $\pauliselect{\approx}$ such that
\begin{equation}
     \trace[AB]{\Pi^\circlearrowleft_{AB} \pauliselect{\approx}(\rho) } = \frac{1}{2\cosh^2(\tau/2)} \rho \quad\textrm{and}\quad
    \trace[AB]{\Pi^{\uparrow\downarrow}_{AB} \pauliselect{\approx}(\rho) } = \left(1 - \frac{1}{2\cosh^2(\tau/2)}\right) \trace[AB]{\Pi^{\uparrow\downarrow}_{AB} \pauliselect{\sigma}(\rho) }
\end{equation}
    Measuring $\pauliselect{\approx}(\rho)$ on system $AB$ with respect to $\set{\Pi^\circlearrowleft, I_{AB}-\Pi^\circlearrowleft}$ partitions the output into a fallback branch to $\rho$ and a success branch to $\pauliselect{\sigma}(\rho)$.
    Define $\measureselect{\circlearrowleft}$ to be the operation that measures $AB$ and, upon obtaining $\Pi^\circlearrowleft$, traces out $AB$ and outputs the remaining state on $S$; define $\measureselect{\uparrow\downarrow}$ analogously for the complementary event $I_{AB}-\Pi^\circlearrowleft$.
    For an integer $k\geq 1$, consider the repeat-until-success map
\begin{equation}
    \measureselect{\uparrow\downarrow} \circ \pauliselect{\approx} + 
    \measureselect{\uparrow\downarrow} \circ \pauliselect{\approx} \circ \measureselect{\circlearrowleft} \circ \pauliselect{\approx} + 
    \ldots + 
    \measureselect{\uparrow\downarrow} \circ \pauliselect{\approx} \circ (\measureselect{\circlearrowleft} \circ \pauliselect{\approx})^{\circ k}
.\end{equation}
    Since the fallback probability in each invocation is at most $1/2$, the probability of fallback in all $k+1$ attempts is at most $2^{-(k+1)}$. Choosing $k=\BigO{\log(1/\delta)}$ ensures failure probability at most $\delta$.
    Each invocation of $\pauliselect{\approx}$ uses $\BigO{n}$ elementary gates by Theorem~\ref{thm:approx pauli select}, so the total gate count is $\BigO{n\log(1/\delta)}$.

    Conditioned on the success event $I_{AB}-\Pi^\circlearrowleft$, the output is proportional to $\pauliselect{\sigma}(\rho)$ as defined in Equation~\eqref{eqn:pauli select}. We now define $\measureselect{}$ to measure $AB$ with respect to $\set{\Pi^\uparrow,\Pi^\downarrow}$, trace out $AB$, and normalize the post-measurement state on $S$. 
    Using $\trace{\Pi^\uparrow}=\trace{\Pi^\downarrow}=2^{2n-2}$ and the fact that $\pauli{\uparrow}{\sigma}(\rho)$ and $\pauli{\downarrow}{\sigma}(\rho)$ have traces $\trace{e^{-\tau\sigma}\rho}$ and $\trace{e^{\tau\sigma}\rho}$, respectively, we obtain
\begin{align}
    \prob{m=1}
    &= \left(1 - \frac{1}{2\cosh^2(\tau/2)}\right)\trace{\Pi^\uparrow_{AB} \pauliselect{\sigma}(\rho)}
     = \frac{\trace{e^{-\tau\sigma}\rho}}{2\cosh\tau},\\
    \prob{m=-1}
    &= \left(1 - \frac{1}{2\cosh^2(\tau/2)}\right) \trace{\Pi^\downarrow_{AB} \pauliselect{\sigma}(\rho)}
     = \frac{\trace{e^{\tau\sigma}\rho}}{2\cosh\tau}
.\end{align}
    Conditioning on each event yields the normalized post-measurement states
    $\pauli{\uparrow}{\sigma}(\rho)/\trace{e^{-\tau\sigma}\rho}$ and
    $\pauli{\downarrow}{\sigma}(\rho)/\trace{e^{\tau\sigma}\rho}$, as stated.

    We now extend the construction to a general $\sigma$ that may contain local identity terms. Let $\sigma'$ be the tensor product of the non-identity local Pauli terms of $\sigma$, and let $P$ be an $n$-qubit permutation gate such that $\sigma = P^\dag (\sigma' \ox I) P$. For any $\rho\in \densityspace{S}$,
\begin{align}
    P^\dag(\pauli{\uparrow\downarrow}{\sigma'} \ox \idop)(P \rho P^\dag)P 
    &= P^\dag (e^{\mp \tau \sigma' /2} \ox I) P \cdot \rho \cdot P^\dag (e^{\mp \tau \sigma' /2} \ox I) P \\
    &= e^{\mp \tau \sigma/2}\, \rho\, e^{\mp \tau \sigma/2}
    = \pauli{\uparrow\downarrow}{\sigma} (\rho)
.\end{align}
    Applying the above implementation to $\sigma'$ (on the nontrivial subsystem) and conjugating by $P$ therefore realizes the required instrument for $\sigma$. The permutation $P$ can be implemented with at most $\BigO{n}$ CNOT gates, so the overall gate complexity remains $\BigO{n\log(1/\delta)}$.
\end{proof}
\vspace{1em}

The remainder of this section proves Theorem~\ref{thm:approx pauli select} by explicitly constructing $U_\tau$, $T$, and $V$.
We start from a SWAP-based construction that realizes a family of Kraus operators from a unitary on a larger space.

\begin{lemma}~\label{lem:swap enc}
    Suppose $U$ is a unitary acting on a bipartite system $AB$.
    Then for $G_{jk} = \trace[A]{U \ketbra{00}{jk} U^\dagger}$ and an additional system $B'$ such that $\abs{B'} = \abs{B}$, the following circuit of systems $ABB'$ holds
\begin{figure}[H]
    \centering
    \includegraphics[width=\linewidth]{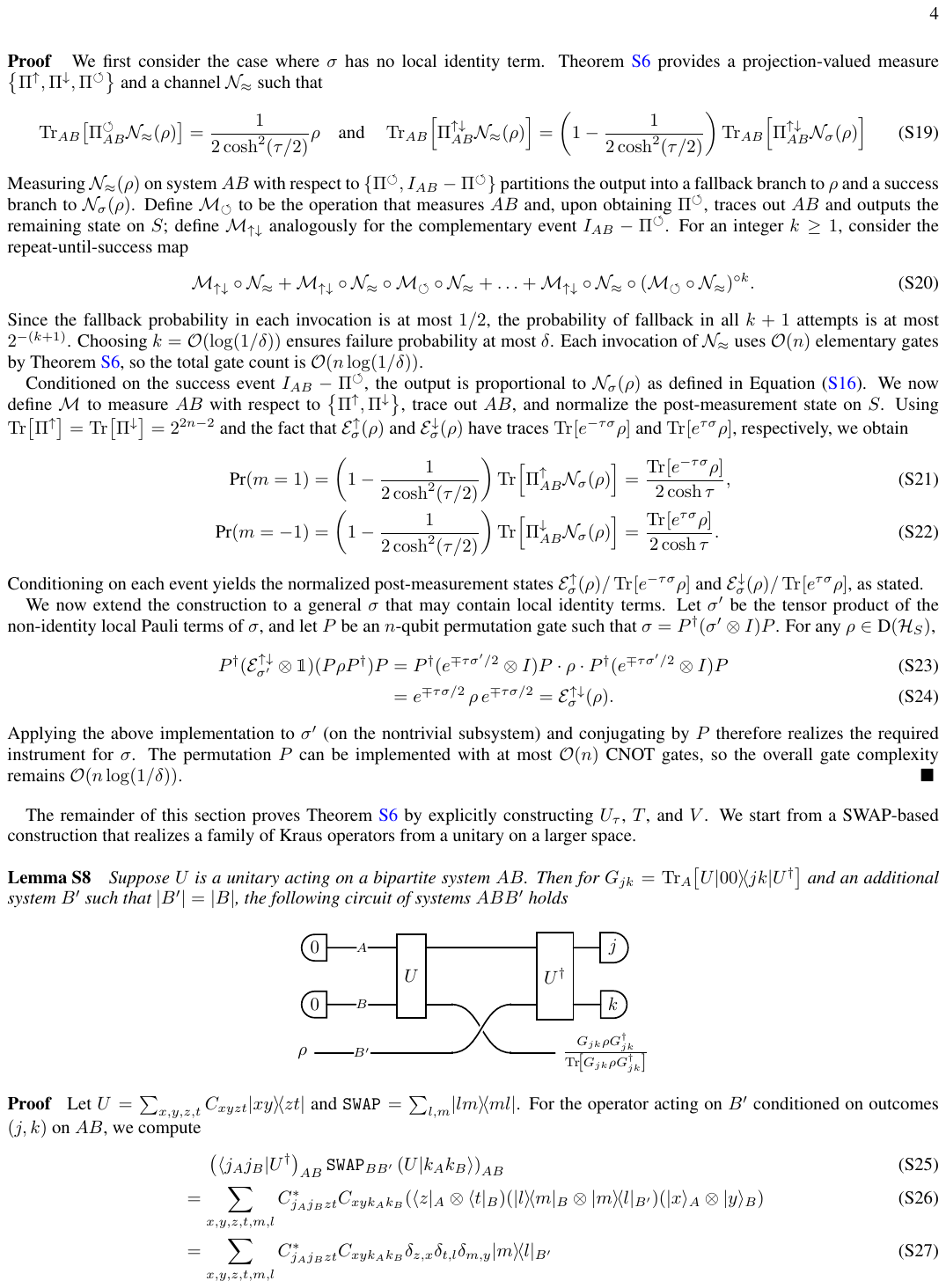}
\end{figure}
% \[
% \begin{quantikz}[transparent]
% \inputD{0} & {}_A & \gate[2]{U} & {} & \gate[2]{U^\dagger} & \meterD{j\vphantom{0}}  \\
% \inputD{0} & {}_B & {} & \gate[2,style={draw=none}]{\permute{2,1}} & {} & \meterD{k\vphantom{0}} \\
% \lstick{$\rho$} & {}_{B'} & {} & {} & \rstick{$\frac{G_{jk} \rho G_{jk}^\dag}{\trace{G_{jk} \rho G_{jk}^\dag}}$} 
% \end{quantikz}
% \]
\end{lemma}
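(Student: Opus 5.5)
The plan is to compute the post-measurement state on $B'$ directly at the level of vectors, and then extend to mixed inputs by linearity. By linearity of every operation in the circuit, and since the unnormalized post-measurement operator is a linear (Kraus-type) function of the input, it suffices to take a pure input $\rho = \ketbra{\psi}{\psi}$ on $B'$. For a mixed input $\rho = \sum_i q_i \ketbra{\psi_i}{\psi_i}$, the unnormalized outputs add up to $G_{jk}\rho G_{jk}^\dagger$, and normalizing gives the claimed state.

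The key steps, in order, are as follows.

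\textbf{Step 1.} Expand the first unitary in the computational basis of $AB$:
\begin{equation}
U\ket{00}_{AB} = \sum_{a,b} u_{ab}\ket{a}_A\ket{b}_B, \qquad u_{ab} = \bra{ab}U\ket{00}.
\end{equation}

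\textbf{Step 2.} Apply the SWAP between $B$ and $B'$. This gives the state $\sum_{a,b} u_{ab}\ket{a}_A\ket{\psi}_B\ket{b}_{B'}$.

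\textbf{Step 3.} Apply $U^\dagger$ on $AB$ and project $AB$ onto $\ket{jk}$. The unnormalized state left on $B'$ is
\begin{equation}
\ket{\phi_{jk}} = \sum_{a,b} u_{ab}\, \bra{jk}U^\dagger\big(\ket{a}\ket{\psi}\big)\, \ket{b}.
\end{equation}

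\textbf{Step 4.} Identify this vector with $G_{jk}\ket{\psi}$. Write the partial trace as $G_{jk} = \sum_a (\bra{a}\ox I)\, U\ketbra{00}{jk}U^\dagger\, (\ket{a}\ox I)$. This is an operator on $B$, which we identify with $B'$ since $\abs{B'}=\abs{B}$. Applying it to $\ket{\psi}$ gives
\begin{equation}
\sum_a (\bra{a}\ox I)U\ket{00}\cdot \bra{jk}U^\dagger(\ket{a}\ox\ket{\psi}) = \sum_{a,b} u_{ab}\,\bra{jk}U^\dagger\ket{a\psi}\,\ket{b},
\end{equation}
which is exactly $\ket{\phi_{jk}}$.

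\textbf{Step 5.} Conclude. The outcome $(j,k)$ occurs with probability $\norm{G_{jk}\ket{\psi}}^2 = \trace{G_{jk}\rho G_{jk}^\dagger}$, and the normalized conditional state on $B'$ is $G_{jk}\rho G_{jk}^\dagger/\trace{G_{jk}\rho G_{jk}^\dagger}$. As a consistency check, $\sum_{jk} G_{jk}^\dagger G_{jk} = I$ holds automatically, because the whole circuit is a unitary followed by a complete measurement.

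I expect the only real obstacle to be bookkeeping. Two points need care: that the partial trace over $A$ in the definition of $G_{jk}$ produces precisely the sum over the intermediate index $a$ appearing in Step 3, and that the swapped register $B'$ is correctly identified with $B$ when reading $G_{jk}$ as an operator on $B'$. I would fix a basis convention once at the start and carry it through both computations so that the two expressions can be compared term by term.
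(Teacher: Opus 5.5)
Your proof is correct and follows essentially the same route as the paper. Both arguments directly compute the operator $\bra{jk}_{AB}U^\dagger \SWAP_{BB'} U\ket{00}_{AB}$ acting on $B'$, identify it with $G_{jk}=\tr_A[U\ketbra{00}{jk}U^\dagger]$, and then pass from pure to mixed inputs by linearity. The only difference is bookkeeping: the paper expands all of $U$ in matrix coefficients and writes $\SWAP$ as $\sum_{l,m}\ketbra{lm}{ml}$, whereas you expand only $U\ket{00}$ and work at the vector level, which is slightly cleaner.
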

\begin{proof}
    Let $U = \sum_{x, y, z, t} C_{xyzt} \ketbra{xy}{zt}$ and $\SWAP = \sum_{l, m} \ketbra{lm}{ml}$. For the operator acting on $B'$ conditioned on outcomes $(j,k)$ on $AB$, we compute
\begin{align}
    &\quad\,\, \left(\bra{j_A j_B} U^\dag\right)_{AB} \SWAP_{BB'} \left(U\ket{k_A k_B}\right)_{AB}  \\
    &= \sum_{x, y, z, t, m, l} C_{j_A j_Bzt}^* C_{xy k_A k_B} (\bra{z}_A \ox \bra{t}_B) (\ketbra{l}{m}_B \ox \ketbra{m}{l}_{B'}) (\ket{x}_A \ox \ket{y}_B)   \\
    &= \sum_{x, y, z, t, m, l} C_{j_A j_Bzt}^* C_{xy k_A k_B} \delta_{z,x} \delta_{t,l} \delta_{m,y} \ketbra{m}{l}_{B'}\\
    &= \sum_{x, y, z, t} C_{j_A j_B z t}^* C_{x y k_A k_B} \delta_{z,x} \ketbra{y}{t} \\
    &= \trace[A]{\sum_{x, y, z, t} C_{j_A j_Bzt}^* C_{x y k_A k_B} \ketbra{xy}{zt}_{AB} }
    = \trace[A]{U\ketbra{k_A k_B}{j_A j_B} U^\dag} 
.\end{align}
    For any pure state $\ket{\psi}$ on $B'$, substituting this identity into the circuit shows that
\begin{align}
    U^\dag_{AB} \SWAP_{BB'} U_{AB} (\ket{00} \ox \ket{\psi})
    &= \sum_{j, k} \ketbra{jk}{jk}_{AB} U^\dag_{AB} \SWAP_{BB'} U_{AB} (\ket{00} \ox \ket{\psi}) \\
    &= \sum_{j, k} \ket{jk}_{AB} \left(\bra{jk} U^\dag\right)_{AB} \SWAP_{BB'} \left(U\ket{00}\right)_{AB} \ket{\psi}_{B'} \\
    &= \sum_{j, k} \ket{jk}_{AB} G_{jk} \ket{\psi}_{B'} 
    = \sum_{j, k} \ket{jk} \ox G_{jk} \ket{\psi}
.\end{align}
    When the system $AB$ is measured to be $jk$, the unnormalized post-measurement state on $B'$ is $G_{jk}\ket{\psi}$, and hence the normalized post-measurement state is $G_{jk}\ket{\psi}/\norm{G_{jk}\ket{\psi}}$.
    By linearity, the same conclusion holds for a mixed input $\rho$, yielding the claimed output state $G_{jk}\rho G_{jk}^\dag/\trace{G_{jk}\rho G_{jk}^\dag}$.
\end{proof}
\vspace{1em}

We now prove Theorem~\ref{thm:approx pauli select}. All three systems $A, B, S$ are $n$-qubit. In the following analysis, $A_i$ denotes the $i$-th qubit in system $A$ (counting from top to bottom), and similarly for $B$ and $S$. 

\renewcommand\thetheorem{\ref{thm:approx pauli select}}
\begin{theorem}[Complete version]
    Suppose $\tau > 0$ and $\sigma$ is an $n$-qubit Pauli operator that has no local identity term.
    Consider the following construction of a projection-valued measure on the system $AB$,
\begin{equation}~\label{eqn:proj select}
    \Pi^\circlearrowleft = \ketbra{1}{1}_{A_n},\quad
    \Pi^\uparrow = \ketbra{0}{0}_{A_n} \ox \ketbra{0}{0}_{B_n},\quad
    \Pi^\downarrow = \ketbra{0}{0}_{A_n} \ox \ketbra{1}{1}_{B_n}
,\end{equation}
    and of circuit ansatzes on system $ABS$,
\begin{align}
    U_\tau &= \left(\prod_{i=1}^n\CNOT_{A_i B_i}\right) \left(\prod_{i=1}^{n-1}\CNOT_{A_i A_n}\right) \left(H^{\ox (n-1)} \ox R_y(2\theta)\right)_A, \\
    V &= \left(\prod_{i=1}^{n} \CNOT_{B_i B_n} \right)\left( \prod_{i=1}^n \CNOT_{B_i S_i} \CZ_{A_i S_i} \right)\left(\prod_{i=1}^{n-1} \CNOT_{A_n A_i} \right), \\
    T & \textrm{\,\, satisfies\quad} TZ^{\ox n} T^\dag = \sigma
,\end{align}
    where $\theta = \arccos\sqrt{e^{-\tau/2} / \bigl(2\cosh(\tau/2)\bigr)}$.  
    Then the map $\pauliselect{\approx}$ constructed in Figure~\ref{fig:pauli select approx} (dashed area) satisfies 
\begin{equation}
    \trace[AB]{\Pi^\circlearrowleft_{AB} \pauliselect{\approx}(\rho) } = \frac{1}{2\cosh^2(\tau/2)} \rho \quad\textrm{and}\quad
    \trace[AB]{\Pi^{\uparrow\downarrow}_{AB} \pauliselect{\approx}(\rho) } = \frac{1}{4\cosh^2(\tau/2)} \pauli{\uparrow\downarrow}{\sigma}(\rho)
.\end{equation}
\end{theorem}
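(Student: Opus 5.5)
The plan is to reduce everything to Lemma~\ref{lem:swap enc} and then track how the fixed Clifford $V$ regroups the resulting Kraus operators. Since $T$ is Clifford with $TZ^{\ox n}T^\dag=\sigma$, we have $T e^{\mp\tau Z^{\ox n}/2}T^\dag = e^{\mp\tau\sigma/2}$. Conjugating the $S$-register by $T^\dag$ before and $T$ after therefore reduces the claim to the case $\sigma = Z^{\ox n}$, which I assume from now on. By Lemma~\ref{lem:swap enc}, with $B'=S$, the block $U_\tau^\dag\,\SWAP_{BS}\,U_\tau$ acting on $\ket{00}_{AB}\ox\rho$ produces $\sum_{j,k}\ket{jk}_{AB}\ox G_{jk}\ket{\psi}$, where $G_{jk}=\tr_A[U_\tau\ketbra{00}{jk}U_\tau^\dag]$. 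So the first task is to compute the $G_{jk}$ explicitly.

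First I compute $U_\tau\ket{0}_A\ket{0}_B$. The layer $H^{\ox(n-1)}\ox R_y(2\theta)$ gives a uniform superposition on $A_1\ldots A_{n-1}$ and $\cos\theta\ket0+\sin\theta\ket1$ on $A_n$. The CNOTs into $A_n$ then write the parity into $A_n$, and the CNOTs $A_i\to B_i$ copy $A$ into $B$. The result is
\[
U_\tau\ket{00}=\sum_x \alpha_x\ket{x}_A\ket{x}_B,
\]
where $\alpha_x=2^{-(n-1)/2}\cos\theta$ if $\parity{x}=0$ and $\alpha_x=2^{-(n-1)/2}\sin\theta$ otherwise. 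Doing the same computation for a general input $\ket{j}_A\ket{k}_B$ gives a superposition $\sum_x \gamma^{(j)}_x\ket{x}\ket{x\oplus k}$. Here the Hadamards contribute signs $(-1)^{j_{1:n}\cdot x}$, and $j_n$ selects either the pair $(\cos\theta,\sin\theta)$ or the pair $(-\sin\theta,\cos\theta)$ according to parity. Tracing out $A$ should then give
\[
G_{jk}=2^{-(n-1)}\,D_{j_n}\,Z^{j}X^{k}
\]
up to ordering and phases, where $D_0=\cos^2\theta\,\Pi_{\rm even}+\sin^2\theta\,\Pi_{\rm odd}$, $D_1=\cos\theta\sin\theta\,(\Pi_{\rm odd}-\Pi_{\rm even})$, and $\Pi_{\rm even/odd}=(I\pm Z^{\ox n})/2$. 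With the stated choice of $\theta$ we have $\cos^2\theta=e^{-\tau/2}/(2\cosh(\tau/2))$ and $\sin^2\theta=e^{\tau/2}/(2\cosh(\tau/2))$. Hence $D_0\propto e^{-\tau Z^{\ox n}/2}/(2\cosh(\tau/2))$, an imaginary-time step, and $D_1=\tfrac{1}{2\cosh(\tau/2)}\cdot$(a Pauli-type unitary up to sign), which is the fallback branch.

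Second, I analyse $V$. It consists of the controlled corrections $\CNOT_{B_iS_i}$ and $\CZ_{A_iS_i}$, which remove the Pauli byproduct $Z^jX^k$ on $S$; commuting them through $D_{j_n}$ is harmless because $X$-strings can flip parity. The CNOTs into $B_n$ then compute $\parity{k}$ into $B_n$. If $\parity{k}=1$, then $X^k$ anticommutes with $Z^{\ox n}$ and turns $e^{-\tau Z^{\ox n}/2}$ into $e^{+\tau Z^{\ox n}/2}$, so $B_n$ records the direction $\uparrow/\downarrow$. The CNOTs $A_n\to A_i$ disentangle the remaining bits. The net result is a coherent relabelling of outcomes. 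Summing $\lvert G_{jk}\rangle$-contributions over all outcomes compatible with each projector in~\eqref{eqn:proj select} should give:
\begin{itemize}
\item $A_n=1$: total weight $2^{2n-2}\cdot 2^{-2(n-1)}\cdot 2\cos^2\theta\sin^2\theta = 1/(2\cosh^2(\tau/2))$, acting as the identity on $\rho$.
\item $A_n=0,\ B_n=0/1$: total weight $\tfrac14\cosh^{-2}(\tau/2)$ times $e^{\mp\tau Z^{\ox n}/2}\rho\, e^{\mp\tau Z^{\ox n}/2}$.
\end{itemize}
As a consistency check, the traces add up to one via the identity $\tr[\pauli{\uparrow}{}+\pauli{\downarrow}{}] = 2\cosh\tau$.

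The main obstacle is the bookkeeping in the second step. I must verify that $V$'s corrections exactly cancel the byproduct $Z^jX^k$ without leaving $j,k$-dependent signs that would make the Kraus branches add coherently rather than incoherently. I also need to check that the $j_n=1$ branch really reduces to a scalar times a unitary, so that its contribution is $p\rho$. I would handle this by writing $V$'s action in the Heisenberg picture on Pauli operators, using that every gate in $V$ is Clifford. The gate count is then immediate: $U_\tau$, $V$ and $T$ each consist of $\BigO{n}$ single-qubit and two-qubit Clifford gates plus one $R_y$. Moreover, $T$ is a product of single-qubit basis changes, because $\sigma$ has no identity factors.
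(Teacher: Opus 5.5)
\textbf{Verdict: same route as the paper, with a genuine gap at the fallback branch.} Your route is the paper's: apply Lemma~\ref{lem:swap enc} to get $G_{jk}$, push the $S$-register through $V$, then count the $2^{2n-1}$ pairs with $j_n=1$ and the $2^{2n-2}$ pairs in each class $(j_n,\parity{k})=(0,0),(0,1)$. Several pieces are correct: your $U_\tau\ket{00}$, your identifications $D_0=e^{-\tau Z^{\ox n}/2}/(2\cosh(\tau/2))$ and $D_1=\cos\theta\sin\theta(\Pi_{\mathrm{odd}}-\Pi_{\mathrm{even}})=-Z^{\ox n}/(2\cosh(\tau/2))$, and your final weights. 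The even/odd-projector bookkeeping is also cleaner than the paper's $Q_{j_nk_n}$.

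\textbf{Where it fails.} The check you propose for $j_n=1$ is that the Kraus operator is a scalar times a unitary. That does not give $p\rho$: a scalar times $W$ gives $pW\rho W^\dagger$, so you must show it is a scalar times the identity. Under your description of $V$ it is not. Suppose the $\CZ_{A_iS_i}$ merely cancel the sign string of $G_{jk}$ and the $\CNOT_{A_nA_i}$ only ``disentangle''. Then the fallback Kraus operator is $\propto X^kD_1X^k=\pm Z^{\ox n}/(2\cosh(\tau/2))$, and the branch outputs $pZ^{\ox n}\rho Z^{\ox n}$, i.e.\ $p\,\sigma\rho\sigma$ after conjugation by $T$.

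\textbf{The missing idea.} The gates $\prod_{i<n}\CNOT_{A_nA_i}$ act \emph{first} in $V$ and replace the bit on $A_i$ by $j_i\oplus j_n$. The $\CZ$ layer therefore imprints $(Z^{j_1}\ox\cdots\ox Z^{j_{n-1}}\ox I)(Z^{\ox n})^{j_n}$ on $S$, not just the sign string. Write $G_{jk}$ exactly: no ``up to phases'' is needed, and the Hadamard signs involve only $j_{1:n}$, with no $Z_n^{j_n}$:
\[
G_{jk}=2^{-(n-1)}D_{j_n}\,(Z^{j_1}\ox\cdots\ox Z^{j_{n-1}}\ox I)\,X^{k}.
\]
The diagonal factors commute, so $E_{jk}=2^{-(n-1)}X^{k}(Z^{\ox n})^{j_n}D_{j_n}X^{k}$.
\begin{itemize}
\item For $j_n=1$, the extra $Z^{\ox n}$ cancels the one inside $D_1$, giving $E_{jk}=-2^{-n}\sin(2\theta)\,I$, exactly as in the paper.
\item For $j_n=0$, the $X^k$ byproduct is not removed but becomes a conjugation. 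This gives $e^{\mp\tau Z^{\ox n}/2}/(2^{n-1}\cdot 2\cosh(\tau/2))$ according to $\parity{k}=0,1$, which is the direction flip.
\end{itemize}

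\textbf{On coherent addition.} Your worry that branches might add coherently is unfounded. The gates of $V$ inside $AB$ permute computational-basis states, and the $S$-targeted gates are classically controlled. Hence distinct $(j,k)$ stay orthogonal, and $\tr_{AB}$ sums $E_{jk}\rho E_{jk}^\dagger$ incoherently. Overall signs drop out; only a residual $(j,k)$-dependent operator could spoil the result, and the computation above shows there is none.
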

\renewcommand{\thetheorem}{S\arabic{proposition}}
\begin{proof}
    The proof is decomposed into two parts: we first \textbf{(1)} analyze what $G_{jk}$ is by taking $U_\tau$ into Lemma~\ref{lem:swap enc}, and show that \textbf{(2)} $V$ can decode, map and group these  $G_{jk}$ to be either $e^{-\tau \sigma/2}$, $e^{\tau \sigma/2}$ or the identity matrix. Taking the whole picture gives the statement.

\textbf{(1)} For the first part, one can derive that for all $\ket{j} = \ket{j_1\cdots j_n}$ in system $A$,
\begin{align}
    &\quad\,\, \left(\prod_{i=1}^{n-1}\CNOT_{A_i A_n}\right) \left(H^{\ox (n-1)} \ox R_y(2\theta)\right) \ket{j} \\
    &= \left(\prod_{i=1}^{n-1}\CNOT_{A_i A_n}\right) \left(H^{\ox (n-1)} \ket{j_{1:n}} \ox R_y(2\theta) \ket{j_n}\right) \\
    &= \frac{1}{\sqrt{2^{n - 1}}} \left(\prod_{i=1}^{n-1}\CNOT_{A_i A_n}\right) \sum_{x \in \set{0, 1}^{n - 1}} (-1)^{x \cdot j_{1:n}} \ket{x} \ox \left(\cos (\theta) \ket{j_n} + (-1)^{j_n} \sin(\theta) X \ket{j_n} \right) \\
    &= \frac{1}{\sqrt{2^{n - 1}}} \sum_{x \in \set{0, 1}^{n - 1}} (-1)^{x \cdot j_{1:n}} \ket{x} \ox \Big(\cos (\theta) \ket{j_n \oplus \parity{x}} + (-1)^{j_n} \sin(\theta) \ket{j_n \oplus \parity{x} \oplus 1} \Big)
,\end{align}
    where $j_{1:n} = j_1 \cdots j_{n-1}$, $\oplus$ is the bitwise addition and $\parity{x}$ is the bitwise parity function defined in Equation~\eqref{eqn:parity def}. Taking the last piece of $U_\tau$ gives
\begin{align}
    U_\tau \ket{jk} 
    &= \left(\prod_{i=1}^n\CNOT_{A_i B_i}\right) \left(\prod_{i=1}^{n-1}\CNOT_{A_i A_n}\right) \left(H^{\ox (n-1)} \ox R_y(2\theta)\right) (\ket{j}_A \ox \ket{k}_B) \\
    &= \left(\prod_{i=1}^n\CNOT_{A_i B_i}\right) \frac{1}{\sqrt{2^{n - 1}}} \sum_{x \in \set{0, 1}^{n - 1}} (-1)^{x \cdot j_{1:n}} \ket{x}_{A_{1:n}} \ox \Big(\cos (\theta) \ket{j_n'}_{A_n} +  (-1)^{j_n} \sin(\theta) \ket{j_n' \oplus 1}_{A_n} \Big) \ox \ket{k}_B  \\
    &= \sum_{x \in \set{0, 1}^{n - 1}}  \frac{(-1)^{x \cdot j_{1:n}}}{\sqrt{2^{n - 1}}} \ket{x}_{A_{1:n}} \ket{k_{1:n} \oplus x}_{B_{1:n}} \Big(\cos (\theta) \ket{j_n'}_{A_n} \ket{k_n'}_{B_n} + (-1)^{j_n} \sin(\theta) \ket{j_n' \oplus 1}_{A_n} \ket{k_n' \oplus 1}_{B_n} \Big)
\end{align}
    for $j_n' = j_n \oplus \parity{x}$ and $k_n' = k_n \oplus j_n'$. When $j$ and $k$ are both zero, we have
\begin{equation}
    U_\tau \ket{00}
    = \frac{1}{\sqrt{2^{n - 1}}} \sum_{x \in \set{0, 1}^{n - 1}} \ket{x}_{A_{1:n}} \ket{x}_{B_{1:n}} \Big(\cos (\theta) \ket{\parity{x}}_{A_n} \ket{\parity{x}}_{B_n} + \sin(\theta) \ket{\parity{x} \oplus 1}_{A_n} \ket{\parity{x} \oplus 1}_{B_n} \Big)
.\end{equation}
    Then we have
\begin{align}
    U_\tau \ketbra{00}{jk} U_\tau^\dag 
    = \frac{1}{2^{n - 1}} \sum_{x \in \set{0, 1}^{n - 1}} &(-1)^{x \cdot j_{1:n}}  \ketbra{x}{x}_{A_{1:n}} \ketbra{x}{k_{1:n} \oplus x}_{B_{1:n}} \cdot \\
    & \Big(\cos (\theta) \ket{\parity{x}}_{A_n} \ket{\parity{x}}_{B_n} + \sin(\theta) \ket{\parity{x} \oplus 1}_{A_n} \ket{\parity{x} \oplus 1}_{B_n} \Big) \cdot \\
    & \Big(\cos (\theta) \bra{j_n'}_{A_n} \bra{k_n'}_{B_n} + (-1)^{j_n} \sin(\theta) \bra{j_n' \oplus 1}_{A_n} \bra{k_n' \oplus 1}_{B_n} \Big).
\end{align}
    $G_{jk}$ in Lemma~\ref{lem:swap enc} is given as
\begin{equation}
    G_{jk} 
    = \trace[A]{U_\tau \ketbra{00}{jk} U_\tau^\dag} 
    = \frac{1}{2^{n - 1}} \sum_{x \in \set{0, 1}^{n - 1}} (-1)^{x \cdot j_{1:n}} \ketbra{x}{k_{1:n} \oplus x} \ox Q_{j_n k_n}
,\end{equation}
    where $Q_{j_n k_n}$ is constructed as
\begin{align}~\label{eqn:Q construction}
    Q_{j_n k_n} &= \delta_{\parity{x},j_n'} \left( \cos^2(\theta) \ketbra{\parity{x}}{k_n'} + (-1)^{j_n} \sin^2(\theta) X \ketbra{\parity{x}}{k_n'} X \right) + \\
    &\quad\frac{1}{2} \delta_{\parity{x},(j_n' \oplus 1)} \sin(2\theta) \left( X \ketbra{\parity{x}}{k_n'} + (-1)^{j_n} \ketbra{\parity{x}}{k_n'} X \right)
.\end{align}

\textbf{(2)} For the second part, without loss of generality, suppose $\rho = \ketbra{\psi}{\psi}$ is a pure state. 
At this point, by Lemma~\ref{lem:swap enc}, the output state $\ket{\psi'}$ of the following circuit
\begin{figure}[H]
    \centering
    \includegraphics[width=\linewidth]{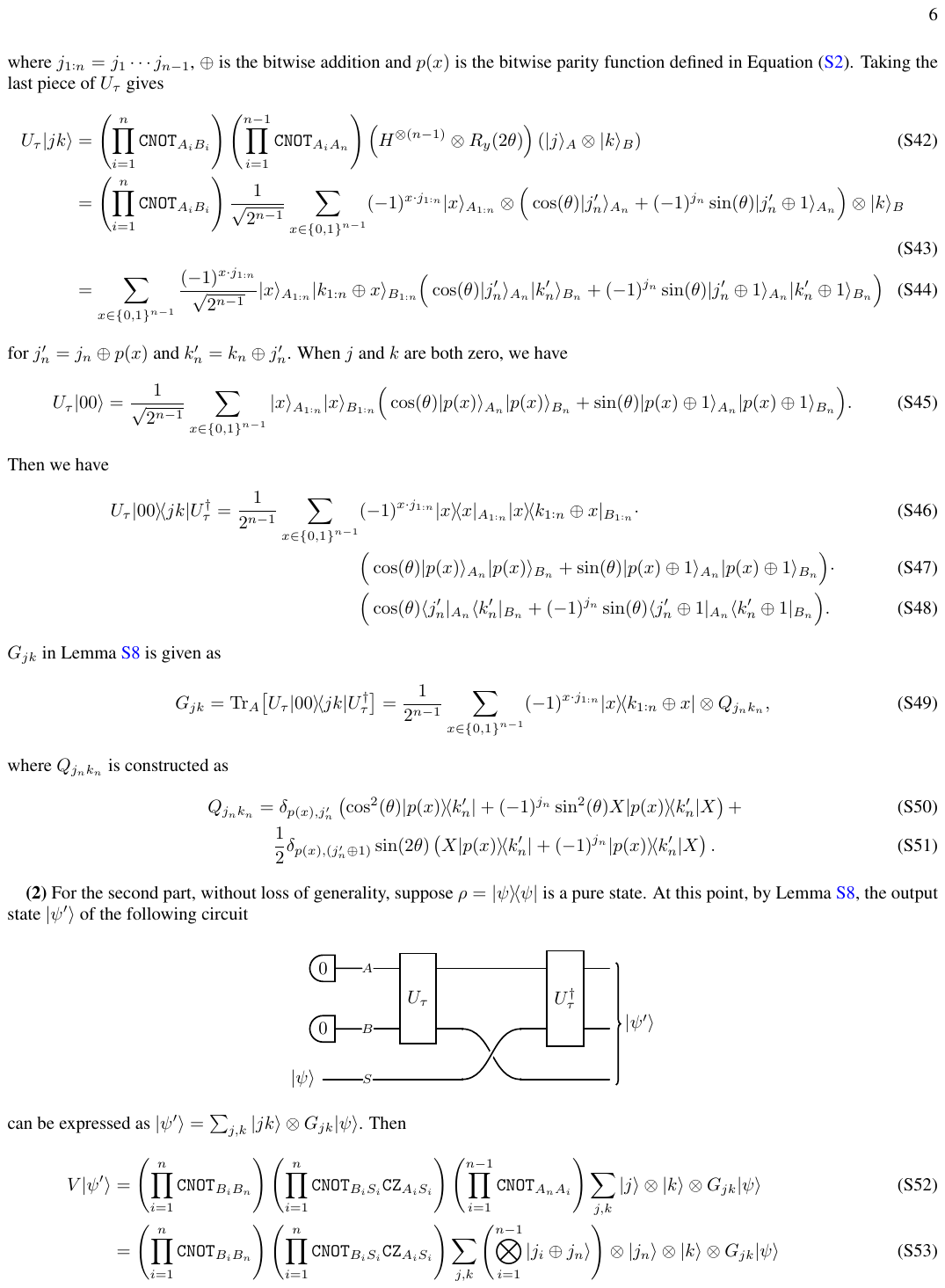}
\end{figure}
% \[
% \begin{quantikz}[transparent]
% \inputD{0} & {}_A & \gate[2]{U_\tau} & {} & \gate[2]{U_\tau^\dagger} & \rstick[3]{$\ket{\psi'}$}   \\
% \inputD{0} & {}_B & {} & \gate[2,style={draw=none}]{\permute{2,1}} & {}& {} \\
% \lstick{$\ket{\psi}$} & {}_{S} & {} & {} & {} & {} 
% \end{quantikz}
% \]
can be expressed as $\ket{\psi'} = \sum_{j, k} \ket{jk} \ox G_{jk} \ket{\psi}$. Then
\begin{align}
    V\ket{\psi'} 
    &= \left(\prod_{i=1}^{n} \CNOT_{B_i B_n} \right) \left( \prod_{i=1}^n \CNOT_{B_i S_i} \CZ_{A_i S_i} \right) \left(\prod_{i=1}^{n-1} \CNOT_{A_n A_i} \right) \sum_{j, k} \ket{j} \ox \ket{k} \ox G_{jk} \ket{\psi} \\
    &= \left(\prod_{i=1}^{n} \CNOT_{B_i B_n} \right) \left( \prod_{i=1}^n \CNOT_{B_i S_i} \CZ_{A_i S_i} \right) \sum_{j, k} \left( \bigotimes_{i=1}^{n-1}\ket{j_i \oplus j_n} \right) \ox \ket{j_n} \ox \ket{k} \ox G_{jk} \ket{\psi} \\
    &= \left(\prod_{i=1}^{n} \CNOT_{B_i B_n} \right) \left( \prod_{i=1}^n \CNOT_{B_i S_i} \right) \left( \prod_{i=1}^n \CZ_{A_i S_i} \right) \sum_{j, k} \left( \bigotimes_{i=1}^{n-1}\ket{j_i \oplus j_n} \right) \ox \ket{j_n} \ox \ket{k} \ox G_{jk} \ket{\psi} \\
    &= \left(\prod_{i=1}^{n} \CNOT_{B_i B_n} \right) \sum_{j, k} \left( \bigotimes_{i=1}^{n-1}\ket{j_i \oplus j_n} \right) \ox \ket{j_n} \ox \ket{k} \ox E_{jk} \ket{\psi} \\
    &= \sum_{j, k} \left( \bigotimes_{i=1}^{n-1}\ket{j_i \oplus j_n} \right) \ox \ket{j_n} \ox \ket{k_{1:n}} \ox \ket{\parity{k}} \ox E_{jk} \ket{\psi}
\end{align}
    for $E_{jk}$ constructed as
\begin{equation}
    E_{jk} = X^{(k)} \left(\prod_{i = 1}^{n-1} Z^{j_i + j_n}_{S_i}\right) Z^{j_n}_{S_n} G_{jk} 
,\end{equation}
    where $X^{(k)}$ is short for $\bigotimes_{i=1}^n X^{k_i}$.
    Now we analyze what $E_{jk}$ will be for three cases: $j_n = 1$, $(j_n, \parity{k}) = (0, 0)$ and $(j_n, \parity{k}) = (0, 1)$.

    If $j_n = 1$, then $j_n' = j_n \oplus \parity{x} = \parity{x} \oplus 1$, $k_n' = k_n \oplus j_n' = k_n \oplus \parity{x} \oplus 1$ and $Q_{j_n k_n}$ in Equation~\eqref{eqn:Q construction} satisfies
\begin{equation}
    Q_{j_n k_n} = \frac{1}{2} \sin(2\theta) \left( X\ketbra{\parity{x}}{k_n \oplus \parity{x}} X - \ketbra{\parity{x}}{k_n \oplus \parity{x}} \right)
\end{equation}
    and hence
\begin{equation}
    G_{jk} 
    = \frac{1}{2^n} \sin(2\theta) \sum_{x \in \set{0, 1}^{n - 1}} (-1)^{x \cdot j_{1:n}} \ketbra{x}{k_{1:n} \oplus x} \ox  \left( X\ketbra{\parity{x}}{k_n \oplus \parity{x}} X - \ketbra{\parity{x}}{k_n \oplus \parity{x}} \right)
\end{equation}
    and hence
\begin{align}
    E_{jk}
    &= \frac{\sin(2\theta)}{2^n} X^{(k)} \left(\prod_{i = 1}^{n-1} Z^{j_i +1}_{S_i}\right) Z_{S_n} \sum_{x \in \set{0, 1}^{n - 1}} (-1)^{x \cdot j_{1:n}} \ketbra{x}{k_{1:n} \oplus x} \ox  \left( X\ketbra{\parity{x}}{k_n \oplus \parity{x}} X - \ketbra{\parity{x}}{k_n \oplus \parity{x}} \right) \\
    &= \frac{\sin(2\theta)}{2^n} X^{(k)} \sum_{x \in \set{0, 1}^{n - 1}} (-1)^{x} \ketbra{x}{k_{1:n} \oplus x} \ox \left( (-1)^{\parity{x} + 1} \ketbra{\parity{x} \oplus 1}{k_n \oplus \parity{x} \oplus 1} - (-1)^{\parity{x}} \ketbra{\parity{x}}{k_n \oplus \parity{x}} \right) \\
    &= \frac{\sin(2\theta)}{2^n} \sum_{x \in \set{0, 1}^{n - 1}} (-1)^{x} \ketbra{x \oplus k_{1:n}}{k_{1:n} \oplus x} \ox (-1)^{\parity{x} + 1} \left( \ketbra{\parity{x} \oplus k_n \oplus 1}{k_n \oplus \parity{x} \oplus 1} + \ketbra{\parity{x} \oplus k_n}{k_n \oplus \parity{x}} \right) \\
    &= - \frac{\sin(2\theta)}{2^n} \sum_{x \in \set{0, 1}^{n - 1}} \ketbra{x \oplus k_{1:n}}{k_{1:n} \oplus x} \ox I 
    = - \frac{1}{2^n} \sin(2\theta) I_S
.\end{align}

    If $j_n = 0$, then $j_n' = j_n \oplus \parity{x} = \parity{x}$, $k_n' = k_n \oplus j_n' = k_n \oplus \parity{x}$ and hence
\begin{align}
    Q_{j_n k_n}
    &= \cos^2(\theta) \ketbra{\parity{x}}{k_n \oplus \parity{x}} + \sin^2(\theta) X \ketbra{\parity{x}}{k_n \oplus \parity{x}} X \\
    \implies G_{jk} 
    &= \frac{1}{2^{n - 1}} \sum_{x \in \set{0, 1}^{n - 1}} (-1)^{x \cdot j_{1:n}} \ketbra{x}{k_{1:n} \oplus x} \ox \left( \cos^2(\theta) \ketbra{\parity{x}}{k_n \oplus \parity{x}} +  \sin^2(\theta) X \ketbra{\parity{x}}{k_n \oplus \parity{x}} X \right)
,\end{align}
    which gives
\begin{align}
    E_{jk} 
    &= X^{(k)} Z^{(j_{1:n})}_{S_{1:n}} G_{jk} \\
    &= \frac{1}{2^{n - 1}} X^{(k)} \sum_{x \in \set{0, 1}^{n - 1}} \ketbra{x}{k_{1:n} \oplus x} \ox \left( \cos^2(\theta) \ketbra{\parity{x}}{k_n \oplus \parity{x}} + \sin^2(\theta) X \ketbra{\parity{x}}{k_n \oplus \parity{x}} X \right) \\
    &= \frac{1}{2^{n - 1}} X^{(k)} \left[ \sum_{x \in \set{0, 1}^{n - 1}} \ketbra{x}{x} \ox \left( \cos^2(\theta) \ketbra{\parity{x}}{\parity{x}} + \sin^2(\theta) X \ketbra{\parity{x}}{\parity{x}} X \right) \right] X^{(k)}
.\end{align}
    Denote $C = 2\cosh(\tau/2) = e^{-\tau/2} + e^{\tau/2} $. Since $\cos^2(\theta) = e^{-\tau/2} / C$ and $\sin^2(\theta) = e^{\tau/2} / C$, the expression surrounded by the square bracket is simplified as
\begin{align}
    &\quad\,\, \sum_{x \in \set{0, 1}^{n - 1}} \ketbra{x}{x} \ox \left( \cos^2(\theta) \ketbra{\parity{x}}{\parity{x}} + \sin^2(\theta) X \ketbra{\parity{x}}{\parity{x}} X \right) \\
    &= \frac{1}{C} \sum_{x \in \set{0, 1}^{n - 1}} \ketbra{x}{x} \ox \left( e^{-\tau/2} \ketbra{\parity{x}}{\parity{x}} + e^{\tau/2} \ketbra{\parity{x} \oplus 1}{\parity{x} \oplus 1} \right)\\
    &= \frac{1}{C} \sum_{x \in \set{0, 1}^{n - 1}} \ketbra{x}{x} \ox \exp(-\frac{\tau}{2} (-1)^{\parity{x}} Z)
    =\frac{1}{C} e^{-\tau Z^{\ox n}/2}
,\end{align}
    where the last equality comes from the fact that $Z^{\ox (n - 1)} \ket{x} = (-1)^{\parity{x}} \ket{x}$. Then
\begin{align}
    E_{jk} 
    &= \frac{1}{2^{n - 1}C} X^{(k)} e^{-\tau Z^{\ox n}/2} X^{(k)} \\
    &= \frac{1}{2^{n - 1}C} \begin{cases}
        e^{-\tau Z^{\ox n}/2}, &\textrm{if } \parity{k} = 0, \\
        e^{\tau Z^{\ox n}/2}, &\textrm{if } \parity{k} = 1.
    \end{cases}
\end{align}
    As a result, omitting subsystems $A_{1:n}$ and $B_{1:n}$, since $T Z^{\ox n} T^\dag = \sigma$, we have
\begin{align}
    T_S V T_S^\dag \ket{\psi'} 
    =& \sum_{j, k} \ldots \ox \ket{j_n}_{A_n} \ox \ket{\parity{k}}_{B_n} \ox T E_{jk} T^\dag \ket{\psi} \\
    =& -\sum_{j_n = 1} \ldots \ox \ket{1}_{A_n} \ox \ket{\parity{k}}_{B_n} \ox \frac{1}{2^n} \sin(2\theta) \ket{\psi} + \\
    & \sum_{(j_n, \parity{k}) = (0, 0)} \ldots \ox \ket{0}_{A_n} \ox \ket{0}_{B_n} \ox \frac{1}{2^{n - 1}C} e^{-\tau \sigma/2}\ket{\psi} + \\
    & \sum_{(j_n, \parity{k}) = (0, 1)} \ldots \ox \ket{0}_{A_n} \ox \ket{1}_{B_n} \ox \frac{1}{2^{n - 1}C} e^{\tau \sigma/2}\ket{\psi}
.\end{align}
    There are $2^{2n-1}$, $2^{2n-2}$, $2^{2n-2}$ pairs of $(j, k)$ for satisfying $j_n = 1$, $(j_n, \parity{k}) = (0, 0)$ and $(j_n, \parity{k}) = (0, 1)$, respectively. Finally, taking $\Pi^\circlearrowleft$, $\Pi^\uparrow$, $\Pi^\downarrow$ in Equation~\eqref{eqn:proj select} gives
\begin{align}
    \trace[AB]{\Pi^\circlearrowleft_{AB} \pauliselect{\approx}(\ketbra{\psi}{\psi}) }
    &= 2^{2n-1} \cdot \frac{1}{2^{2n}} \sin^2(2\theta) \ketbra{\psi}{\psi} = \frac{1}{2\cosh^2(\tau/2)} \ketbra{\psi}{\psi},  \\
    \trace[AB]{\Pi^\uparrow_{AB} \pauliselect{\approx}(\ketbra{\psi}{\psi}) } 
    &= 2^{2n-2} \cdot \frac{1}{2^{2n - 2}C^2} e^{-\tau \sigma/2}\ketbra{\psi}{\psi}e^{-\tau \sigma/2} = \frac{1}{4\cosh^2(\tau/2)} \pauli{\uparrow}{\sigma}(\ketbra{\psi}{\psi}), \\
    \trace[AB]{\Pi^\downarrow_{AB} \pauliselect{\approx}(\ketbra{\psi}{\psi}) }
    &= 2^{2n-2} \cdot \frac{1}{2^{2n - 2}C^2} e^{\tau \sigma/2}\ketbra{\psi}{\psi}e^{\tau \sigma/2} = \frac{1}{4\cosh^2(\tau/2)} \pauli{\downarrow}{\sigma}(\ketbra{\psi}{\psi})
.\end{align}
    The statement holds by generalizing $\ketbra{\psi}{\psi}$ to $\rho$ by linearity.
\end{proof}

\section{Sampling thermal states and their labels}~\label{sec:sample proof}

Let $\beta > 0$ and let $\Sigma = \set{\sigma_j}_{j=1}^L$ be a set of $L$ $n$-qubit Pauli operators with a boundary vector ${\bmh} \in \RR_{+}^{L}$.
The goal of random thermal state generation is to sample a datum $(\gibbs{\beta}{H}, H)$ from the dataset
\begin{equation}
\dataset = \setcond{ \left(\gibbs{\beta}{H}, H\right) }{ H = \sum_{j=1}^L c_j \sigma_j,\; \abs{c_j} \leq \bmh_j,\; c_j \in \RR,\; \sigma_j \in \Sigma}
\end{equation}
where $\gibbs{\beta}{H} = e^{-\beta H} / \trace{e^{-\beta H}}$ is the thermal state of the Hamiltonian $H$ at temperature $\beta^{-1}$.
To avoid sampling continuous coefficients directly, we sample from a quantized surrogate. For $N \geq 1$ define
\begin{equation}
    \datasetN = \setcond{ \left(\gibbs{\beta}{H}, H\right) }{ H = \sum_{j=1}^L c_j \sigma_j,\; \abs{c_j} \leq \bmh_j,\; c_j/\bmh_j \in (1/N)\ZZ,\; \sigma_j \in \Sigma}
\end{equation}
so that $\bigcup_{N\geq 1} \datasetN$ is dense in $S(\beta,\Sigma,\bmh)$ with respect to the sup-norm on the coefficient vector. 
The following algorithm describes how to sample a data from $\datasetN$.

\begin{algorithm}[H]\label{alg:sample thermal}
\caption{Pseudo code for our thermal state sampler}
\SetKwInOut{Input}{Input}
\SetKwInOut{Output}{Output}
\Input{Inverse temperature $\beta$, a set $\Sigma$ of $n$-qubit Pauli operators with a boundary vector ${\bmh}$, a number of steps $N$}
\Output{A datum $(\gibbs{\beta}{H}, H) \in \datasetN$}

$\tau \gets \hsum \beta / N$, $\rho \gets I^{\ox n}/2^n$\;

$\mathrm{SAMPLE} \gets $ a classical function that returns a value $j$ with weight $p_j = \bmh_j / \hsum$\;

\For{$k = 1, \ldots, N$}{
    $j_k \gets \mathrm{SAMPLE}()$\;

    $\measureselect{j_k}, \pauliselect{\sigma_{j_k}} \gets $ Theorem~\ref{thm:pauli select} wrt. $\sigma_{j_k}$.

    $(m_k, \rho)  \gets (\measureselect{j_k} \circ \pauliselect{\sigma_{j_k}})(\rho)$\;
}

$c_j \gets \hsum \sum_{k:\, j_k = j}  m_k / N$, $H \gets \sum_{j=1}^L c_j \sigma_j$\;

\Return{$(H, \rho)$}
\end{algorithm}

This section develops two technical components used to analyze Algorithm~\ref{alg:sample thermal}. First, we prove the trace-norm error bound between the algorithmic output state and the target thermal state (Theorem~\ref{thm:error appendix}). Second, we set up the random-walk viewpoint for the induced Hamiltonian coefficients, which underlies the sampling-probability analysis in Theorem~\ref{thm:distribution appendix}.

\subsection{Error analysis}~\label{sec:error}

Fix the step number $N$ and a sequence of indices $(j_k)_{k=1}^N \subset [L]$ selecting Pauli operators from $\Sigma = \set{\sigma_j}_{j=1}^L$, as in Algorithm~\ref{alg:sample thermal}. Let $\rho_k$ denote the (normalized) system state after $k$ steps, and define the step size $\tau = \hsum \beta /N$. Then $(\rho_k)_{k=0}^N$ evolves according to the Markovian update rule
\begin{equation}~\label{eqn:alg step rho}
    \rho_0 = I/2^n,
    \quad
    \rho_k 
    = \measureselect{j_k} \circ \pauliselect{j_k} (\rho_{k-1}) = \frac{e^{B_k} \,\rho_{k-1}\, e^{B_k}}{\trace{e^{B_k} \,\rho_{k-1}\, e^{B_k}}}
,\end{equation}
where $\measureselect{j_k}, \pauliselect{j_k}$ are specified by Theorem~\ref{thm:pauli select} for $\sigma_{j_k}$. The corresponding random variable $B_k$ takes the form
\begin{equation}
    B_k =\begin{cases}
        \tau\sigma_{j_k}/2, &\textrm{ with probability } {\trace{e^{\tau \sigma_{j_k}} \rho_{k-1}}}/{2\cosh\tau}, \\
        -\tau\sigma_{j_k}/2, &\textrm{ with probability } {\trace{e^{-\tau \sigma_{j_k}} \rho_{k-1}}}/{2\cosh\tau}.
    \end{cases} 
\end{equation}
Iterating Equation~\eqref{eqn:alg step rho} shows that the final output state can be written as
\begin{equation}
    \rho_N 
    = \frac{\big(\prod_{k=N}^1 e^{B_k}\big) \rho_{0} \big(\prod_{k=1}^N e^{B_k}\big)}{\trace{\text{ same above}}} 
    = \frac{\big(\prod_{k=N}^2 e^{B_k}\big) e^{2B_1} \big(\prod_{k=2}^N e^{B_k}\big)}{\trace{\text{ same above}}}
,\end{equation}
where $\prod_{k=N}^2$ denotes the ordered product from $N$ down to $2$. The corresponding target thermal state is
\begin{equation}~\label{eqn:expect output}
    \gibbs{\beta}{H} = \frac{e^{-\beta H}}{\trace{e^{-\beta H}}}
    \textrm{\quad for \quad}
    H = -\frac{2}{\beta}\sum_{k=1}^N B_k
.\end{equation}
In this subsection, we prove Theorem~\ref{thm:error appendix}, which shows that, with probability at least $1-\delta$ and for sufficiently large $N$,
\begin{equation}
    \tnorm{\rho_N - \gibbs{\beta}{H}} \leq \BigO{\big(n + \log(N/\delta)\big)^{3/2} \hsum^{3} \beta^3 N^{-3/2} }
.\end{equation}
Two lemmas simplify the proof. The first lemma compares the palindromic product in $\rho_N$ with the exponential of the accumulated generator.

\begin{lemma}~\label{lem:BCH multi expansion}
   Let $(B_k)_{k=1}^N$ be a sequence of matrices. Denote $S_k = \sum_{j=1}^k B_j$. Suppose $\max_{1\leq j\leq N}\inorm{B_j}=\tau\ll 1$. Then
\begin{equation}
    \log\left(\big(\prod_{j=N}^2 e^{B_j}\big) e^{2B_1} \big(\prod_{j=2}^N e^{B_j}\big)\right)
    = 2S_N +\Delta_N + \cR_N 
,\end{equation}
    where $\Delta_k = -\frac{1}{6}\sum_{j=2}^k \comm{B_j + 2S_{j-1}}{\comm{B_j}{2S_{j-1}}}$ denotes the terms of order $3$ and $\cR_N$ denotes the remainder terms.
    We have
    \begin{equation}
         \inorm{\cR_{N}} =\BigO{\sum_{r=5, r \textrm{ is odd}}^\infty \frac{1}{r!}\sum_{k=2}^{N} \inorm{\operatorname{ad}^{r-1}_{2S_{k-1}}(B_k)}}.
    \end{equation}
\end{lemma}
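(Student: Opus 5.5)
The natural route is induction on the number of factors, peeling off one outer pair of exponentials at a time with the symmetric BCH formula of Lemma~\ref{lem:BCH}. Define the partial palindromic products
\begin{equation}
    P_1 = e^{2B_1}, \qquad P_k = e^{B_k} P_{k-1} e^{B_k} \quad (k \geq 2),
\end{equation}
so that the left-hand side of the statement is $\log P_N$. The inductive hypothesis is $\log P_{k-1} = 2S_{k-1} + \Delta_{k-1} + \cR_{k-1}$, with $\Delta_1 = \cR_1 = 0$. The step then applies Lemma~\ref{lem:BCH} with $X = B_k$ and $Y = \log P_{k-1}$. Since $e^X e^Y e^X$ is palindromic, its logarithm contains only odd-degree Lie monomials; this is why only odd $r$ appear in the stated remainder.

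\textbf{Inductive step.} The BCH step gives
\begin{equation}
    \log P_k = 2B_k + \log P_{k-1} - \tfrac{1}{6}\comm{B_k + \log P_{k-1}}{\comm{B_k}{\log P_{k-1}}} + (\text{odd terms of degree} \geq 5).
\end{equation}
\begin{enumerate*}[label=(\roman*)]
\item The linear terms telescope to $2S_k$.
\item In the cubic term, substitute $\log P_{k-1} = 2S_{k-1} + (\text{higher order})$. This yields exactly the $j=k$ summand $-\frac16\comm{B_k + 2S_{k-1}}{\comm{B_k}{2S_{k-1}}}$ of $\Delta_k$.
\item Commutators involving $\Delta_{k-1}$ or $\cR_{k-1}$ inside the cubic term are pushed into $\cR_k$.
\end{enumerate*}
The degree $\geq 5$ terms require care, because $\norm{2S_{k-1}}$ is not small; it can be of order $k\tau$. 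The expansion must therefore be organized in powers of $B_k$, not in powers of $\norm{X}+\norm{Y}$.

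\textbf{Handling the large generator.} Here I would use the exact one-sided formula. Writing $e^{B}e^{Y}e^{B} = e^{Y'}$, one has $Y' = Y + 2\,\frac{\operatorname{ad}_Y}{\tanh(\operatorname{ad}_Y/2)}\,\frac{1}{2}B + \BigO{B^2}$, and the analogous terms at higher order in $B$. Keeping only the terms linear in $B_k$ gives the series
\begin{equation}
    \sum_r c_r \operatorname{ad}^{r-1}_{Y}(B_k), \qquad c_r \sim \frac{1}{r!}.
\end{equation}
By palindromic symmetry only the terms with $r$ odd survive. The $r=3$ term is already in $\Delta_k$, and the $r \geq 5$ terms supply the stated remainder. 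Replacing $Y$ by $2S_{k-1}$ inside $\operatorname{ad}^{r-1}_Y$ costs commutators with $\Delta_{k-1}$ and $\cR_{k-1}$. These are higher order and are absorbed into the $\BigO{\cdot}$, using $\tau \ll 1$. Terms of second and higher order in $B_k$ are likewise of order $\tau^2$ times nested commutators, and I would absorb them similarly. Summing the per-step remainders over $k = 2, \ldots, N$ gives the claimed bound on $\inorm{\cR_N}$.

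\textbf{Main obstacle.} The hard step is the control in the previous paragraph. The standard BCH remainder $\BigO{(\norm{X}+\norm{Y})^5}$ is useless when $\norm{Y}$ is large. Convergence of the linear-in-$B_k$ series also needs $\norm{\operatorname{ad}_{2S_{k-1}}} < 2\pi$, or must be argued for the specific Pauli structure. My first attempt would be to treat the expansion formally, keeping only terms linear in $B_k$ per step, and to justify dropping the cross terms by the hypothesis $\tau \ll 1$. This matches the ``$\BigO{}$'' phrasing of the lemma.
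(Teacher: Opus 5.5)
Your induction is the paper's own. You use palindromic products, the symmetric BCH step with $X=B_k$ and $Y=\log P_{k-1}$, odd degrees only because $\log(e^Xe^Ye^X)=-\log(e^{-X}e^{-Y}e^{-X})$, telescoping linear terms, and the $j=k$ summand of $\Delta_k$ read off from the cubic term. Your resummation $\operatorname{ad}_Y\coth(\operatorname{ad}_Y/2)(B_k)$ gives a sharper account than the paper's of where the $\operatorname{ad}^{r-1}_{2S_{k-1}}(B_k)$ terms come from. You are also right that the $\BigO{(\inorm{X}+\inorm{Y})^5}$ remainder of Lemma~\ref{lem:BCH} is useless when $Y$ is large; the paper applies it anyway. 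There are two corrections, and the second is a genuine gap.

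First, the coefficients of that series are $2\mathrm{b}_{r-1}/(r-1)!$, with $\mathrm{b}_m$ the Bernoulli numbers. They decay like $4(2\pi)^{1-r}$, not like $1/r!$; the paper's ``magnitude $2/r!$'' has the same slip. This matches your radius-$2\pi$ remark and only costs a constant once $\inorm{S_{k-1}}$ is small.

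Second, the absorption step fails. Terms of higher order in $B_k$, such as $\frac{7}{360}\operatorname{ad}^4_{B_k}(\log P_{k-1})$, are not of higher order. Neither are the corrections $\frac16\bigl(\operatorname{ad}^2_{\log P_{k-1}}-\operatorname{ad}^2_{2S_{k-1}}\bigr)(B_k)$ produced by $\Delta_{k-1}$ and $\cR_{k-1}$. Both have the same total degree five as the term you keep. The assumption $\tau\ll1$ gives no way to bound them by the norm of the single bracket $\operatorname{ad}^4_{2S_{k-1}}(B_k)$, which is tiny whenever $S_{k-1}$ nearly commutes with $B_k$.

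Concretely, take $N=2$, $B_1=\operatorname{diag}(\tau,\tau-\eta)$ and $B_2=\tau\sigma_x$ with $0<\eta\ll\tau\ll1$, where $\sigma_x,\sigma_z$ are the Pauli matrices. Then $\inorm{B_1}=\inorm{B_2}=\inorm{S_1}=\tau$. Here $\cR_2$ is exactly the degree-$\geq5$ part of $\log(e^{\tau\sigma_x}e^{\eta\sigma_z}e^{\tau\sigma_x})$. The exact formula $\frac{\operatorname{ad}_X}{\sinh\operatorname{ad}_X}(Y)$ for the part linear in the inner generator gives
\[
  \cR_2=\tfrac{7}{360}\operatorname{ad}^4_{B_2}(2S_1)+\BigO{\tau^6\eta+\tau^3\eta^2}=\tfrac{14}{45}\tau^4\eta\,\sigma_z+\BigO{\tau^6\eta+\tau^3\eta^2},
\]
whereas $\sum_{r\geq5\text{ odd}}\frac{1}{r!}\inorm{\operatorname{ad}^{r-1}_{2S_1}(B_2)}=\tau\sum_{r\geq5\text{ odd}}\frac{(2\eta)^{r-1}}{r!}\approx\frac{2}{15}\tau\eta^4$. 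The ratio is about $\frac73(\tau/\eta)^3$, so no constant works.

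The paper's proof has the same hole. Its claim that the recursion ``preserves the dominance'' is refuted by this example, even under its side assumption $\inorm{B_k}=\BigO{\inorm{S_{k-1}}}$. So the stated form of the remainder is wrong; you did not miss an idea that the paper supplies.

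What your induction does establish, for $\inorm{S_{k-1}}$ small enough that BCH converges, has two pieces:
\begin{itemize}
\item a bound by the full degree-$r$ part, e.g.\ $C_r\sum_k\inorm{B_k}\bigl(\inorm{B_k}+2\inorm{S_{k-1}}\bigr)^{r-1}$;
\item cross terms of size $\inorm{B_k}\bigl(\inorm{B_k}+\inorm{S_{k-1}}\bigr)\inorm{\Delta_{k-1}+\cR_{k-1}}$.
\end{itemize}
The first piece is exactly what the triangle-inequality estimate in the proof of the precision theorem uses. The cross terms would have to be carried through that proof as well.
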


\begin{proof}
For Hermitian $X$ and $Y$, since $(e^X e^Y e^X)^{-1} = e^{-X} e^{-Y} e^{-X}$, we have
\begin{align}
    \log \left(e^X e^Y e^X \right) + \log \left(e^{-X} e^{-Y} e^{-X} \right)=0,
\end{align}
which implies $f(X,Y)=\log \left(e^X e^Y e^X \right)=-f(-X,-Y)$.
Hence, the BCH expansion of $\log \left(e^X e^Y e^X \right)$ contains no even-order terms.

We proceed by induction using the palindromic products
\begin{equation}
    M_k \coloneqq \Big(\prod_{j=k}^2 e^{B_j}\Big) e^{2B_1} \Big(\prod_{j=2}^k e^{B_j}\Big),
    \quad
    Z_k \coloneqq \log(M_k)
.\end{equation}
We have $M_1 = e^{2B_1}$, and for $k\geq 2$, $M_k = e^{B_k} M_{k-1} e^{B_k} = e^{B_k} e^{Z_{k-1}} e^{B_k}$.

For $k = 1$, we have $Z_1 = \log(e^{2B_1})= 2B_1= 2S_1$, so the claim holds.

Now fix $k \geq 2$ and assume inductively that $Z_{k-1} = 2S_{k-1} + \Delta_{k-1} + \cR_{k-1}$, where $\Delta_{k-1}$ collects the cubic terms and $\cR_{k-1}$ collects the remaining odd-order terms of degree at least $5$. Applying Lemma~\ref{lem:BCH} to $e^{B_k} e^{Z_{k-1}} e^{B_k}$ gives
\begin{align}
    Z_k & = \log \left(e^{B_k} e^{Z_{k-1}} e^{B_k}\right) \\
    &=\log \left(e^{B_k} e^{2S_{k-1} + \Delta_{k-1} + \cR_{k-1}} e^{B_k}\right)\\
    & = 2S_k + \Delta_{k-1}+ \cR_{k-1} -\frac{1}{6} \comm{B_k+2S_{k-1} + \Delta_{k-1} + \cR_{k-1}}{\comm{B_k}{2S_{k-1} + \Delta_{k-1} + \cR_{k-1}}}.
\end{align}
Separating the cubic contribution from higher-order terms yields
\begin{align}
    \Delta_k &= \Delta_{k-1}-\frac{1}{6} \comm{B_k+2S_{k-1}}{\comm{B_k}{2S_{k-1}}},\\
    \cR_{k} & = \cR_{k-1} -\frac{1}{6}\comm{B_k+2S_{k-1}}{\comm{B_k}{\Delta_{k-1} + \cR_{k-1}}} - \frac{1}{6}\comm{\Delta_{k-1} + \cR_{k-1}}{\comm{B_k}{2S_{k-1} + \Delta_{k-1} + \cR_{k-1}}}.
\end{align}
Iterating the recursion for $\Delta_k$ gives
\begin{align}
    \Delta_N &=\Delta_{N-1} -\frac{1}{6} \comm{B_N+2S_{N-1}}{\comm{B_N}{2S_{N-1}}}\\
    &=-\frac{1}{6}\sum_{k=2}^{N}\comm{B_k+2S_{k-1}}{\comm{B_k}{2S_{k-1}} }.
\end{align}

It remains to bound the remainder terms in $\cR_N$. Write $\cR_k=\sum_{r\geq 5,\, r \text{ odd}} \cR_{k,r}$, where $\cR_{k,r}$ denotes the homogeneous terms of total order $r$, and assume $\inorm{B_{k}}=\BigO{\inorm{S_{k-1}}}$ for $k\geq 2$.
From the recursive definition of $\cR_k$, the order-$r$ component obeys
\begin{align}
    \cR_{k, r} = \cR_{k-1, r} +\frac{1}{6}\comm{B_k + 2S_{k-1}}{\comm{\cR_{k-1, r-2}}{B_k}} +\frac{1}{6} \sum_{j=3}^{r-4}\comm{\cR_{k-1, j}}{\comm{\cR_{k-1, r-1-j}}{B_k}}.
\end{align}
We claim that, for each odd $r\geq 5$, the increment $\cR_{k,r}-\cR_{k-1,r}$ is dominated (up to constants depending only on $r$) by terms of the form $\operatorname{ad}^{r-1}_{2S_{k-1}}(B_k)$. The claim is immediate for $r=5$, and the recursion preserves the dominance for higher odd orders.

Finally, since the coefficient of $\operatorname{ad}^{r-1}_X(Y)$ in the BCH expansion of $\log (e^X e^Y e^X)$ has magnitude $2/r!$, there exists a constant $C_r$ (depending only on $r$) such that, for all odd $r\geq 5$,
\begin{align}
    \inorm{\cR_{N, r}} &\leq \frac{C_r}{r!} \sum_{k=2}^{N} \inorm{\operatorname{ad}^{r-1}_{2S_{k-1}}(B_k)}.
\end{align}
Summing over odd $r\geq 5$ gives the stated bound on $\inorm{\cR_N}$.
\end{proof}
\vspace{1em}

The second lemma controls the effect of normalizing matrix exponentials.

\begin{lemma}~\label{lem:error bound normalize}
    Let $A$, $E$ be two Hermitian matrices. Then
\begin{equation}
    \tnorm{\frac{e^{A + E}}{\trace{e^{A + E}}} - \frac{e^A}{\trace{e^A}}} \leq 2 \inorm{E}
.\end{equation}
\end{lemma}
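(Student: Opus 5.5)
Interpolate along the segment from $A$ to $A+E$ and integrate the derivative, using Lemma~\ref{lem:frochet} and Lemma~\ref{lem:int norm inequality}. Define, for $t\in[0,1]$,
\begin{equation}
    \rho(t) = \frac{e^{A+tE}}{\trace{e^{A+tE}}}, \qquad Z(t) = \trace{e^{A+tE}} .
\end{equation}
Then $\rho(1)-\rho(0)=\integral{0}{1}{t}{\ddx{t}\rho(t)}$, and the triangle inequality for the trace norm gives
\begin{equation}
    \tnorm{\rho(1)-\rho(0)} \leq \sup_{t\in[0,1]}\tnorm{\ddx{t}\rho(t)} .
\end{equation}
It therefore suffices to show $\tnorm{\ddx{t}\rho(t)}\leq 2\inorm{E}$ for every $t$.

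\textbf{Step 1 (derivative).} By Lemma~\ref{lem:frochet}, $\ddx{t}e^{A+tE}=\integral{0}{1}{s}{e^{s(A+tE)}Ee^{(1-s)(A+tE)}}$. Taking the trace and using cyclicity, $Z'(t)=\trace{E e^{A+tE}}$. The quotient rule then gives
\begin{equation}
    \ddx{t}\rho(t) = \integral{0}{1}{s}{\rho(t)^s E \rho(t)^{1-s}} - \trace{E\rho(t)}\,\rho(t),
\end{equation}
where the normalization $Z(t)$ has been absorbed by writing $e^{s(A+tE)}/Z(t)^s=\rho(t)^s$.

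\textbf{Step 2 (bound each term).} For the first term, apply Lemma~\ref{lem:int norm inequality} with $R=\rho(t)$ (positive semidefinite, unit trace) and $X=E$. This bounds its trace norm by $\inorm{E}\,\tnorm{\rho(t)}=\inorm{E}$. For the second term, $|\trace{E\rho(t)}|\leq\inorm{E}$ by Hölder's inequality, and $\tnorm{\rho(t)}=1$, so its trace norm is at most $\inorm{E}$. Adding the two bounds gives $2\inorm{E}$, and integrating over $t$ finishes the proof.

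The main point needing care is Step 1: checking that the factors of $Z(t)$ combine exactly into $\rho(t)^s$ and $\rho(t)^{1-s}$, so that Lemma~\ref{lem:int norm inequality} applies with a normalized $R$. This works because $Z(t)^{-1}=Z(t)^{-s}Z(t)^{-(1-s)}$, and the positive scalar $Z(t)^{-s}$ can be written as $\exp(-s\log Z(t))$, which commutes with everything. The remaining estimates are routine. One could sharpen the constant from $2$ to $1$ by subtracting the mean $\trace{E\rho(t)}$ from $E$ before applying Lemma~\ref{lem:int norm inequality}, but that is not needed for the stated bound.
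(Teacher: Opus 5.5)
Your proof is correct and is essentially the paper's own argument. Both interpolate $\rho(t)=e^{A+tE}/\trace{e^{A+tE}}$ and use Lemma~\ref{lem:frochet} to write $\ddx{t}\rho(t)=F(t)-\trace{E\rho(t)}\,\rho(t)$ with $F(t)=\integral{0}{1}{s}{\rho(t)^s E\rho(t)^{1-s}}$. Each of the two terms is then bounded by $\inorm{E}$, the first via Lemma~\ref{lem:int norm inequality} and the second by H\"older, and the bound is integrated over $t$.

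Your closing aside, however, is wrong as stated. Subtracting $\trace{E\rho(t)}$ before applying Lemma~\ref{lem:int norm inequality} gives the bound $\inorm{E-\trace{E\rho(t)}I}$. This can be nearly $2\inorm{E}$ when $\rho(t)$ concentrates on an extreme eigenvector of $E$ (e.g.\ $E=Z$, $A=MZ$, $M\to\infty$), so that route does not give the constant $1$. The constant $1$ is in fact true, but it needs a different argument. For example, Pinsker bounds $\tnorm{\rho(1)-\rho(0)}^2$ by the symmetrized relative entropy, which equals $\trace{(\rho(0)-\rho(1))(-E)}\leq\tnorm{\rho(1)-\rho(0)}\inorm{E}$.
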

\begin{proof}
    Consider $t \in [0, 1]$, $A(t) = A + tE$ and $\rho(t) = e^{A(t)}/\cZ(t)$ with $\cZ(t) = \trace{e^{A(t)}}$.
    By Lemma~\ref{lem:frochet}, we have
\begin{equation}~\label{eqn:drhot-dt}
    \ddx{t} e^{A(t)} = \integral{0}{1}{s}{ e^{sA(t)} E e^{(1-s)A(t)} }
.\end{equation}
    Differentiating $\rho(t)=e^{A(t)}/\cZ(t)$ gives
\begin{equation}
   \ddx{t} \rho(t) 
   = \ddx{t}\left(e^{A(t)} \cdot \frac{1}{\cZ(t)}\right)
   = F(t) - \rho(t) \trace{F(t)}
,\end{equation}
where
\begin{equation}
    F(t)
    = \frac{1}{\cZ(t)} \ddx{t} e^{A(t)}
    = \integral{0}{1}{s}{ \rho(t)^s E \rho(t)^{1-s} }
.\end{equation}
By Lemma~\ref{lem:int norm inequality}, $\tnorm{F(t)} \leq \inorm{E}$.
In addition,
\begin{equation}
    \trace{F(t)} 
    = \integral{0}{1}{s}{\trace{\rho(t)^s E \rho(t)^{1-s}}}
    = \integral{0}{1}{s}{\trace{\rho(t) E }}
    = \trace{\rho(t)E} \leq \inorm{E}
.\end{equation}
Therefore $\tnorm{\ddx{t}\rho(t)} \leq 2 \inorm{E}$ for all $t\in[0,1]$, and integrating over $t$ yields
\begin{equation}
    \tnorm{\rho(1)-\rho(0)} \leq 2\inorm{E}
,\end{equation}
as claimed.
\end{proof}
\vspace{1em}

The main error bound is stated and proved below. The argument is most informative when indices repeat frequently (i.e., $N \gg \max(L,n)$), where the palindromic product can be compared sharply with an exponential via higher-order BCH control, yielding a tighter bound than a second-order Trotter decomposition~\cite{childs2018toward}.

\renewcommand\thetheorem{1}
\begin{theorem}[Complete version]~\label{thm:error appendix}
    Let $N = \BigOmega{n\hsum^2\beta^2}$ and $\delta > 0$. Then $\rho_N$ in Equation~\eqref{eqn:alg step rho} and $\gibbs{\beta}{H}$ in Equation~\eqref{eqn:expect output} satisfy
\begin{equation}
    \prob{\tnorm{\rho_N - \gibbs{\beta}{H}} > \BigO{K_0^{3/2} N^{3/2} \tau^3}} \leq \delta
\end{equation}
    for $K_0 = (n+2)\log 2 + \log (N/\delta)$.
\end{theorem}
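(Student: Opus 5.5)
We first reduce the trace-distance bound to an operator-norm bound on an effective Hamiltonian error. Write $\rho_N = M_N/\trace{M_N}$, where $M_N$ is the palindromic product of Lemma~\ref{lem:BCH multi expansion} with $\rho_0\propto I$ absorbed. By that lemma, $\log M_N = 2S_N + \Delta_N + \cR_N$. The target satisfies $-\beta H = 2S_N$ by Equation~\eqref{eqn:expect output}. Lemma~\ref{lem:error bound normalize}, applied with $A = 2S_N$ and $E = \Delta_N + \cR_N$, then gives
\begin{equation}
\tnorm{\rho_N - \gibbs{\beta}{H}} \leq 2\inorm{\Delta_N} + 2\inorm{\cR_N}.
\end{equation}
It therefore suffices to show that, with probability at least $1-\delta$, both $\inorm{\Delta_N}$ and $\inorm{\cR_N}$ are $\BigO{K_0^{3/2}N^{3/2}\tau^3}$.

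The key step is a high-probability bound $\inorm{S_k} = \BigO{\tau\sqrt{K_0 k}}$, holding simultaneously for all $k\le N$. A naive bound $\inorm{S_k}\le k\tau/2$ would only give $N^3\tau^3$, the Trotter rate. The difficulty is that $S_k$ is not a martingale, since the sign probabilities depend on $\rho_{k-1}$.

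\begin{itemize}
\item Decompose $B_k = D_k + \expectcond{B_k}{\cF_{k-1}}$, where $D_k$ is a Hermitian martingale difference with $\inorm{D_k}\le\tau$ and $\inorm{\expectcond{D_k^2}{\cF_{k-1}}}\le\tau^2$.
\item Lemma~\ref{lem:freedman}, with $d=2^n$ and $\sigma^2 = k\tau^2$, together with a union bound over $k\le N$ (this is where $\log(N/\delta)$ and $(n+2)\log 2$ enter $K_0$), gives $\inorm{\sum_{i\le k} D_i} = \BigO{\tau\sqrt{K_0 k}}$ for all $k$ with probability at least $1-\delta$.
\item The drift satisfies $\expectcond{B_k}{\cF_{k-1}} = -\tfrac{\tau}{2}\tanh(\tau)\,\trace{\sigma_{j_k}\rho_{k-1}}\,\sigma_{j_k}$ up to sign conventions. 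Its norm is $\BigO{\tau^2}$, so the drift sums to $\BigO{N\tau^2}$.
\item Under $N=\BigOmega{n\hsum^2\beta^2}$, i.e. $N\tau^2 = \BigO{1/n}$, we have $N\tau^2 \lesssim \tau\sqrt{K_0N}$. The drift is therefore dominated by the fluctuation term.
\end{itemize}

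Choosing this scaling of $N$ is the main obstacle. Making the drift genuinely subdominant may require the more careful observation that the drift pushes $S_k$ toward the thermal direction, or simply absorbing it using the $\Omega(\cdot)$ assumption. I would try the crude absorption first.

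On the good event, each term of $\Delta_N$ satisfies $\inorm{\comm{B_k+2S_{k-1}}{\comm{B_k}{2S_{k-1}}}}\le \BigO{\tau\inorm{S_{k-1}}^2 + \tau^2\inorm{S_{k-1}}}$. Summing $\tau\cdot\tau^2K_0 k$ over $k\le N$ gives $\BigO{K_0\tau^3N^2}$, which is larger than the target. To reach $K_0^{3/2}N^{3/2}\tau^3$, I would exploit cancellation: rewrite the sum of $\comm{S_{k-1}}{\comm{B_k}{S_{k-1}}}$ using the martingale parts $D_k$. Conditioned on $\cF_{k-1}$, these summands form a second matrix martingale difference sequence with increments bounded by $\tau\inorm{S_{k-1}}^2 \lesssim \tau^3K_0 k$. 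Applying Lemma~\ref{lem:freedman} again, its quadratic variation is $\sum_k \tau^6K_0^2k^2\sim \tau^6K_0^2N^3$, so the sum is $\BigO{\tau^3K_0^{3/2}N^{3/2}}$ after another factor of $\sqrt{K_0}$. The predictable (drift) part is handled as above. For the remainder, Lemma~\ref{lem:BCH multi expansion} gives $\inorm{\cR_N}\lesssim\sum_{r\ge5}\frac{1}{r!}\sum_k \tau\,(4\inorm{S_{k-1}})^{r-1}$. With $\inorm{S_{k-1}}\lesssim \tau\sqrt{K_0N}=\BigO{1}$ by the assumption on $N$, this series is dominated by its $r=5$ term, $\BigO{N^3\tau^5K_0^2}$. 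That is $\BigO{K_0^{3/2}N^{3/2}\tau^3}$ once $N\tau^2\sqrt{K_0}\lesssim1$, which I would again ensure through the hypothesis on $N$, possibly with an extra logarithmic factor.

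Combining the three bounds on one event of probability at least $1-\delta$ (splitting $\delta$ among the Freedman applications) yields the claim.
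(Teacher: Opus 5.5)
Everything through your treatment of $\Delta_N$ matches the paper's proof. That includes the reduction via Lemma~\ref{lem:error bound normalize}, the drift/martingale split of $S_k$ with crude absorption of the drift, and the second application of Lemma~\ref{lem:freedman} to the martingale part of $\sum_k \comm{S_{k-1}}{\comm{B_k}{S_{k-1}}}$ (the paper's $Q_k$). The crude absorption is no obstacle, since $k\tau^2 \le \tau\sqrt{k}\,\tau\sqrt{N} \lesssim \tau\sqrt{k/n}$.

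The gap is in the remainder. You bound the order-$5$ part of $\cR_N$ by the triangle inequality, obtaining $\BigO{N^3\tau^5K_0^2}$. You then say this is $\BigO{K_0^{3/2}N^{3/2}\tau^3}$ once $N\tau^2\sqrt{K_0}\lesssim 1$. But the ratio of the two is $N^{3/2}\tau^2\sqrt{K_0} = \hsum^2\beta^2\sqrt{K_0/N}$, not $N\tau^2\sqrt{K_0}$. So what you actually need is $N \gtrsim \hsum^4\beta^4K_0$. That is quartic in $\hsum\beta$ and is not implied by $N = \BigOmega{n\hsum^2\beta^2}$ when $\hsum\beta \gg \sqrt{n}$. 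In the window $n\hsum^2\beta^2 \lesssim N \lesssim \hsum^4\beta^4K_0$, your remainder bound exceeds the claimed precision, which is nontrivial there (about $(\hsum\beta)^{-3}$ near the upper end). As written, you prove the theorem only under the stronger hypothesis $N = \BigOmega{\hsum^4\beta^4K_0}$. The triangle inequality loses a factor $\sqrt{N/K_0}$ against martingale concentration. That is exactly the loss you diagnosed and avoided for $\Delta_N$, and it recurs at order $5$.

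The paper closes this by applying Lemma~\ref{lem:freedman} to the low orders of the remainder as well, and using the triangle inequality only for $r \ge r^\star$. The order-$r$ terms that matter are linear in $B_k$ with $\cF_{k-1}$-measurable coefficients. The model case is $\operatorname{ad}^{r-1}_{2S_{k-1}}(B_k)$; nested terms in which some $S_{k-1}$ is replaced by $\Delta_{k-1}$ obey the same bounds. Split $B_k$ into its martingale part and its $\BigO{\tau^2}$ drift. Freedman with increments $\lesssim \tau(\tau\sqrt{NK_0})^{r-1}$ and quadratic variation $\lesssim N\tau^2(\tau\sqrt{NK_0})^{2r-2}$ then gives $\BigO{(\tau\sqrt{NK_0})^{r}}$. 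This is within $\BigO{\tau^3(NK_0)^{3/2}}$ because $\tau^2NK_0 = \BigO{1}$, the same smallness you already use to sum the series. Terms with two or more factors of $B_k$ carry an extra factor $\tau^2N \lesssim 1/n$ and are harmless under the triangle inequality. For orders with $(\tau\sqrt{NK_0})^{r-3} \lesssim K_0/\sqrt{N}$ the crude bound suffices. The extra $\BigO{r^\star}$ Freedman applications are absorbed into the union bound defining $K_0$.
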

\renewcommand{\thetheorem}{S\arabic{proposition}}
\begin{proof}
    Denote $S_k = \sum_{i=1}^k B_i$. By Lemma~\ref{lem:BCH multi expansion},
\begin{equation}
    E\coloneqq 2S_N - \log\left(\big(\prod_{k=N}^2 e^{B_k}\big) e^{2B_1} \big(\prod_{k=2}^N e^{B_k}\big)\right)
    = \frac{1}{3}\left(E_1 - 2E_2 \right) - R_N
\end{equation}
    for $E_1 = \sum_{k=2}^N\comm{B_k}{\comm{B_k}{S_{k-1}}}$, $E_2 = \sum_{k=2}^N\comm{S_{k-1}}{\comm{B_k}{S_{k-1}}}$ and some remainder term $R_N$. Lemma~\ref{lem:error bound normalize} reduces the trace-norm comparison to a spectral-norm bound on $E$:
\begin{align}
    \tnorm{\rho_N - \gibbs{\beta}{H}} 
    = \tnorm{\frac{e^{\log\big(\prod_{k=N}^2 e^{B_k}\big) e^{2B_1} \big(\prod_{k=2}^N e^{B_k}\big)}}{\trace{\textrm{same above}}} - \frac{e^{E + \log\big(\prod_{k=N}^2 e^{B_k}\big) e^{2B_1} \big(\prod_{k=2}^N e^{B_k}\big)}}{\trace{\textrm{same above}}}}
    \leq 2\inorm{E}
.\end{align}
    It therefore suffices to upper bound $\inorm{E_1}$, $\inorm{E_2}$, and $\inorm{R_N}$ with high probability, and then combine these bounds.

    We begin by controlling $\inorm{S_k}$. Let $\cF_k = (B_i)_{i=1}^k$ be the natural filtration, define the drift $D_k = \expectcond{B_k}{\cF_{k-1}}$, and let $M_k = B_k - D_k$. Then $S_k = \sum_{i=1}^k D_i + \sum_{i=1}^k M_i$. For the drift term, we have
\begin{align}
    \inorm{D_k}
    &= \inorm{\expectcond{B_k}{\cF_{k-1}}}
    = \norm{\left[ \frac{\trace{e^{\tau \sigma_{j_k}} \rho_{k-1}}}{2\cosh\tau} - \frac{\trace{e^{-\tau \sigma_{j_k}}\rho_{k-1}}}{2\cosh\tau} \right]\cdot \frac{\tau \sigma_{j_k}}{2}  }_\infty \\
    &\leq \abs{\trace{e^{\tau \sigma_{j_k}} \rho_{k-1}} - \trace{e^{-\tau \sigma_{j_k}}\rho_{k-1}} } \cdot \frac{\tau}{4\cosh\tau} \\
    &= \abs{(e^\tau - e^{-\tau}) \trace{\sigma_{j_k} \rho_{k-1}}} \cdot \frac{\tau}{2(e^\tau + e^{-\tau})} \\
    &\leq \frac{1}{2}\tau \tanh\tau = \tau^2 + \BigO{\tau^4} < 0.1\tau
,\end{align}
    for $\tau \ll 1$. The sequence $(M_k)_k$ is a martingale difference sequence of Hermitian matrices with respect to $(\cF_k)_k$, since $\expectcond{M_k}{\cF_{k-1}} = 0$.
    In particular, $\inorm{M_k} \leq \inorm{B_k} + \inorm{D_k} \leq \tau + 0.1\tau = 1.1\tau$.
    Applying Lemma~\ref{lem:freedman}, for any $k$ and $t_k > 0$,
\begin{equation}
    \prob{\inorm{\sum_{i=1}^k M_i} > t_k \textup{ and } \inorm{\sum_{i=1}^k \expectcond{M_i^2}{\cF_{i-1}}} \leq \frac{k\tau^2}{4}} \leq 2^{n + 1} \exp\!\big(-\frac{t_k^2/2}{k\tau^2/4 + 1.1\tau t_k/3}\big)
.\end{equation}
    We also have
\begin{align}
    \sum_{i=1}^k \expectcond{M_i^2}{\cF_{i-1}} 
    &= \sum_{i=1}^k \expectcond{B_i^2}{\cF_{i-1}} - D_i^2 = \sum_{i=1}^k \frac{\tau^2}{4} I - D_i^2 \\
    \implies \inorm{\sum_{i=1}^k \expectcond{M_i^2}{\cF_{i-1}}} 
    &\leq \sum_{i=1}^k \inorm{\frac{\tau^2}{4} I - D_i^2}
    \leq \sum_{i=1}^k \inorm{\frac{\tau^2}{4} I}
    =  \frac{k}{4} \tau^2
,\end{align}
    so choosing $K=(n+1)\log 2+\log \delta^{-1}$ and
    $t_k = \tau \sqrt{kK/2} + 2.2\tau K/3$ implies $(t_k/\tau)^2 / (k/2 + \frac{2.2}{3} t_k /\tau) \geq K$, and therefore
\begin{align}~\label{eqn:freedman M sum}
    \prob{\inorm{\sum_{i=1}^k M_i} > t_k } 
    &\leq 2^{n + 1} \exp\!\big( -\frac{t_k^2}{k\tau^2/2 + 2.2\tau t_k/3} \big)
    = 2^{n + 1} \exp\!\big( -\frac{(t_k/\tau)^2}{k/2 + 2.2/3\cdot (t_k/\tau)} \big) \\
    &\leq 2^{n + 1} e^{-K} = \delta
.\end{align}
    Combining the drift and martingale parts yields
\begin{equation}
    \inorm{S_k} \leq \inorm{\sum_{i=1}^k M_i} + \sum_{i=1}^k \inorm{D_i}
    \implies \prob{\inorm{S_k} > t_k + k\tau^2 } \leq \delta
.\end{equation}

    We next bound $\inorm{E_1}$. Since $B_k \sim \set{\tau\sigma_{j_k}/2, -\tau\sigma_{j_k}/2}$,
\begin{align}
    \inorm{E_1} 
    &= \norm{\sum_{k=2}^N \comm{B_k}{\comm{B_k}{S_{k-1}}}}_\infty \\
    &\leq \sum_{k=2}^N \norm{\comm{B_k}{\comm{B_k}{S_{k-1}}}}_\infty
    = \sum_{k=2}^N \inorm{ B_k^2 S_{k-1} + S_{k-1} B_k^2 - 2 B_k S_{k-1} B_k } \\
    &= \frac{\tau^2}{2} \sum_{k=2}^N \inorm{ S_{k-1} - \sigma_{j_k} S_{k-1} \sigma_{j_k} }
    \leq \tau^2 \sum_{k=2}^N \inorm{S_{k-1}} \\
    \implies &\prob{\inorm{E_1} > \tau^2 \sum_{k=2}^N (t_k + k \tau^2)} \leq \delta
.\end{align}

    The term $E_2$ is handled in a similar way. Assume $\inorm{S_k} \leq t_k + k\tau^2$ (we account for this event in the final probability bound). Define $Q_k = \comm{S_k}{\comm{M_{k + 1}}{S_k}}$, so that
\begin{equation}
    \comm{S_{k - 1}}{\comm{B_k}{S_{k-1}}} = Q_{k - 1} + \comm{S_{k-1}}{\comm{D_k}{S_{k-1}}}
\end{equation}
    with $\inorm{\comm{S_{k-1}}{\comm{D_k}{S_{k-1}}}} \leq 4 \inorm{S_k}^2 \inorm{D_{k + 1}} = 4(t_k + k\tau^2)^2 \tau^2$.
    One can verify that $\expectcond{Q_k}{\cF_k} = 0$ and
\begin{align}
    &\inorm{Q_k} \leq 4 \inorm{S_k}^2 \inorm{M_{k + 1}} = 4.4 (t_k + k\tau^2)^2 \tau,  \\
    &\inorm{\expectcond{Q_k^2}{\cF_{k-1}}} \leq 4 \inorm{S_k}^4 \inorm{M_{k + 1}}^2 \leq 20 (t_k + k\tau^2)^4\tau^2,\\
    & \inorm{\sum_{k=1}^N\expectcond{Q_k^2}{\cF_{k-1}}}\leq 20N(t_N + N\tau^2)^4\tau^2
.\end{align}
    Lemma~\ref{lem:freedman} therefore implies that, for any $k$ and $s_k > 0$,
\begin{equation}
    \prob{\inorm{\sum_{i=1}^k Q_i} > s_k} 
    \leq 2^{n + 1} \exp\!\big(-\frac{s_k^2/2}{20 N(t_N + N\tau^2)^4\tau^2 + 4.4 (t_k + k\tau^2)^2 \tau s_k/3}\big)
.\end{equation}
    Taking
    $s_k = \sqrt{40KN}(t_N + N\tau^2)^2\tau + 8.8(t_k + k\tau^2)^2K \tau/3$
    simplifies the bound to
\begin{equation}
    \prob{\inorm{\sum_{i=1}^k Q_i} > s_k} \leq 2^{n+1}e^{-K} \leq \delta
.\end{equation}
    Consequently,
\begin{align}
    \inorm{E_2} &= \inorm{\sum_{k=2}^N \comm{S_{k - 1}}{\comm{B_k}{S_{k-1}}}} 
    \leq \inorm{\sum_{k=2}^N Q_{k - 1}} + \sum_{k=2}^N\inorm{\comm{S_{k-1}}{\comm{D_k}{S_{k-1}}}} \\
    \implies & \prob{\inorm{E_2} > s_N + 4\sum_{k=2}^N (t_k + k\tau^2)^2 \tau^2} \leq \delta
.\end{align}

    We now combine the bounds. Since $N \gg n$, for all $k\leq N$ we have
    $t_k + k \tau^2\leq t_N + N\tau^2=\BigO{\tau \sqrt{NK}} + N\tau^2=\BigO{\tau \sqrt{NK}}$.
    Therefore, $\inorm{E_1}\leq \tau^2 \sum_{k=2}^N (t_k + k \tau^2) = \BigO{N\tau^3\sqrt{NK}}$ and
    $\inorm{E_2} \leq s_N + 4\sum_{k=2}^N (t_k + k\tau^2)^2 \tau^2 = \BigO{\tau^3 K^{3/2} N^{3/2}}$,
    which implies $\inorm{E_1-2E_2} = \BigO{\tau^3 K^{3/2} N^{3/2}}$.

    It remains to bound the remainder term. By Lemma~\ref{lem:BCH multi expansion},
\begin{equation}
     \inorm{\cR_{N}} =\BigO{\sum_{r=5, r \textrm{ is odd}}^\infty \frac{1}{r!}\sum_{k=2}^{N} \inorm{\operatorname{ad}^{r-1}_{2S_{k-1}}(B_k)}}.
\end{equation}
For an odd order $r\geq 5$, applying Lemma~\ref{lem:freedman} yields
\begin{equation}
    \frac{1}{r!}\sum_{k=2}^{N} \inorm{\operatorname{ad}^{r-1}_{2S_{k-1}}(B_k)} = \frac{2^{r}}{r!}\BigO{(\sqrt{NK} \tau)^r} = o(\tau^3 K^{3/2} N^{3/2}).
\end{equation}
Alternatively, the triangle inequality gives
\begin{equation}
    \frac{1}{r!}\sum_{k=2}^{N} \inorm{\operatorname{ad}^{r-1}_{2S_{k-1}}(B_k)} = \frac{2^{2r}}{r!}\BigO{N\tau(\sqrt{NK} \tau)^{r-1}}. 
\end{equation}
For $r^\star = \left\lceil \frac{\log(K\tau^2 \sqrt{NK})}{\log(\tau\sqrt{NK})}\right\rceil$, this implies $\inorm{R_N}=\BigO{\tau^3 K^{3/2} N^{3/2}}$.
We therefore apply Lemma~\ref{lem:freedman} for $r < r^\star$ and use the triangle inequality for $r \geq r^\star$; the remaining tail ($r>r^\star$) decays exponentially and can be neglected at the stated precision.

    Substituting these bounds into $\tnorm{\rho_N - \gibbs{\beta}{H}} \leq 2\inorm{E}$ yields
\begin{equation}
    \prob{\tnorm{\rho_N - \gibbs{\beta}{H}} \leq \BigO{K^{3/2} N^{3/2} \tau^3}} \geq 1-\delta.
\end{equation}

    It remains to account for the overall failure probability. The proof applies Lemma~\ref{lem:freedman} $(r^\star -1)/2 + N$ times. Setting $\delta'=\delta/((r^\star -1)/2 + N)$ and taking a union bound ensures that the total failure probability is at most $\delta$.
    Under this rescaling,
    $K = (n+1)\log 2 + \log \bigl(((r^\star -1)/2 + N )/\delta\bigr)$.
    Given $N\gg n$ and $\tau\sqrt{NK} \ll 1$, we may assume $N\gg K$ and obtain
\begin{align}
    r^\star \leq 2 + \frac{\log (K\tau)}{\log(\tau\sqrt{NK})} \ll N,
\end{align}
which implies
$K \leq (n+2)\log 2 + \log (N/\delta) \ll N$.
Taking $K_0 = (n+2)\log 2 + \log (N/\delta)$ therefore yields the stated bound.
\end{proof}

\subsection{Sample distribution}

Recall that Algorithm~\ref{alg:sample thermal} generates a path $\cP = \big((j_1,m_1), \ldots, (j_N,m_N)\big) \in [L] \times \set{1, -1}$.
In Section~\ref{sec:error}, we assumed $\cP$ is fixed so that the evolution sequence can be analyzed without randomness. This section uses a random walk to incorporate the randomness of $j_k$ and $m_k$.

Consider how the sampled Hamiltonian evolves as the path develops. Stopping the algorithm at step $k \in [N]$ yields the intermediate Hamiltonian $\walkH{k} = \sum_{t=1}^k m_t \sigma_{j_t}$. The sampling process can thus be described as a sequence $(\walkH{k})_k$, with $\walkH{N}$ being the final output. Equivalently, this sequence is a random walk on a lattice:
\begin{equation}~\label{eqn:walk def}
    \walk{k} = \sum_{t=1}^k m_t e_{j_t} \in \set{-N, \ldots, N}^{\times L},
    \quad
    \walk{0} = 0
,\end{equation}
where $e_{j_t}$ is the standard vector in the lattice corresponding to $\sigma_{j_t}$,
and every forward step satisfies
\begin{equation}
    \prob{\walk{k} - \walk{k-1} = \pm e_j} = \prob{m_k = \pm 1} \cdot \bmh_j /\hsum
.\end{equation}
The sign of $m_k$ is determined by the quantum measurement in Theorem~\ref{thm:pauli select}, which is difficult to analyze directly.
We therefore introduce a reference probability measure $Q$ with $\probQ{m_k = \pm 1} = 1/2$, giving 
\begin{equation}~\label{eqn:reference prob}
    \probQ{\walk{k} - \walk{k-1} = \pm e_j} = \bmh_j /2\hsum
\end{equation}
corresponding to $N \to \infty$. The relation between the reference probability measure and the true probability measure is stated by the likelihood ratio $L_N(\cP) = \prob{\cP} / \probQ{\cP}$ in the following statement.

\begin{lemma}~\label{lem:independ on path}
    If $N = \BigOmega{n\hsum^2\beta^2}$, then for $H = H^{(N)}$,
\begin{equation}
    \log L_N(\cP)
  =\log\!\left(\frac{1}{2^n}\trace{e^{-\beta H}}\right)
   +\BigO{\frac{\hsum^2\beta^2}{N}}.
\end{equation}
\end{lemma}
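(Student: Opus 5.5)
The plan is to compute the likelihood ratio exactly as a telescoping product of normalization constants, and then compare that product with the partition function using the machinery of Lemma~\ref{lem:BCH multi expansion} and Theorem~\ref{thm:error appendix}. Under both $\Pr$ and $Q$ the index $j_k$ is drawn with probability $\bmh_j/\hsum$, so these factors cancel in $L_N(\cP)$. Only the sign probabilities remain. By Theorem~\ref{thm:pauli select}, $\prob{m_k \mid \cF_{k-1}} = \trace{e^{2B_k}\rho_{k-1}}/(2\cosh\tau)$ with $B_k = -m_k\tau\sigma_{j_k}/2$, while $\probQ{m_k}=1/2$. Writing $\tilde\rho_k = e^{B_k}\cdots e^{2B_1}\cdots e^{B_k}/2^n$ for the unnormalized state, we have $\rho_{k-1}=\tilde\rho_{k-1}/\trace{\tilde\rho_{k-1}}$ and $\trace{e^{2B_k}\rho_{k-1}} = \trace{\tilde\rho_k}/\trace{\tilde\rho_{k-1}}$. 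The product over $k$ therefore telescopes, with $\trace{\tilde\rho_0}=1$, to
\begin{equation}
    L_N(\cP) = \frac{1}{\cosh^N\tau}\cdot\frac{1}{2^n}\trace{\Big(\prod_{k=N}^2 e^{B_k}\Big)e^{2B_1}\Big(\prod_{k=2}^N e^{B_k}\Big)}.
\end{equation}
This identity is exact and path-wise.

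Two corrections then need to be controlled. The first is $N\log\cosh\tau = N\tau^2/2 + \BigO{N\tau^4}$, which equals $\hsum^2\beta^2/(2N)+\dots$ since $\tau=\hsum\beta/N$. This is exactly of the claimed order $\BigO{\hsum^2\beta^2/N}$. The second is the ratio between the trace of the palindromic product and $\trace{e^{2S_N}}=\trace{e^{-\beta H}}$, recalling $H=-\frac{2}{\beta}S_N$. By Lemma~\ref{lem:BCH multi expansion}, the log of the product equals $2S_N - E$ with $E$ as in the proof of Theorem~\ref{thm:error appendix}. Golden–Thompson-type monotonicity, or simply $\abs{\log\trace{e^{A-E}}-\log\trace{e^A}}\le\inorm{E}$ (which follows from $\ddx{t}\log\trace{e^{A+tE}}=\trace{\rho(t)E}$ via Lemma~\ref{lem:frochet}), reduces this correction to $\inorm{E}$.

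The main obstacle is bounding $\inorm{E}$ at the claimed order. Theorem~\ref{thm:error appendix} gives $\inorm{E}=\BigTO{K^{3/2}N^{3/2}\tau^3} = \BigTO{K^{3/2}\hsum^3\beta^3N^{-3/2}}$ with high probability. Under the hypothesis $N=\BigOmega{n\hsum^2\beta^2}$, and with $K\sim n$ up to logarithms, this gives $\inorm{E}\lesssim \hsum^2\beta^2/N\cdot(K^{3/2}\hsum\beta/\sqrt N)$. The factor in parentheses is $\BigO{1}$ only if $N \gtrsim K^3\hsum^2\beta^2$, so the powers of $n$ and logarithms must be absorbed.

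I would first try to avoid the concentration bound altogether. The idea is to use the cruder deterministic BCH estimate that the cubic terms satisfy $\sum_k\inorm{[B_k,[B_k,S_{k-1}]]}$ plus the $E_2$ part, combined with the fact that the telescoped correction $N\tau^2/2$ already dominates. If that fails, I would state the bound as holding with high probability, in the regime implicitly assumed by the hypothesis, and treat the logarithmic and $n$-dependent factors as absorbed into the $\BigOmega{\cdot}$ constant.

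Finally, I would combine the two corrections: $\log L_N = \log(2^{-n}\trace{e^{-\beta H}}) - N\log\cosh\tau + \BigO{\inorm{E}}$, and both error terms are $\BigO{\hsum^2\beta^2/N}$. The notable feature of the result is that the leading-order dependence of $L_N$ on the path is only through its endpoint $H$, which is what Theorem~\ref{thm:distribution} needs.
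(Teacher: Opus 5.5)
Your exact identity $L_N(\cP)=\cosh^{-N}\!\tau\cdot 2^{-n}\trace{M_N}$, with $M_N$ the palindromic product, is correct, and the $\cosh$ correction is indeed $\BigO{\hsum^2\beta^2/N}$. The genuine gap is the remaining term $\log\trace{M_N}-\log\trace{e^{-\beta H}}$, which you can only bound by $\inorm{E}$. Theorem~\ref{thm:error appendix} gives $\inorm{E}=\BigO{K^{3/2}\hsum^3\beta^3N^{-3/2}}$ only with probability $1-\delta$. Under the stated hypothesis $N=\BigOmega{n\hsum^2\beta^2}$, this is $\hsum^2\beta^2/N$ times a factor of order $K^{3/2}/\sqrt{n}\approx n$, up to logarithms. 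Powers of $K\approx n+\log(N/\delta)$ cannot be absorbed into the constant in $\BigOmega{\cdot}$.

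Your deterministic fallback cannot work, because the statement is false path-wise. Take $n=1$, $\Sigma=\{X,Z\}$, and the path whose four consecutive blocks of $N/4$ steps have $B_k=-\tau X/2,\,-\tau Z/2,\,+\tau X/2,\,+\tau Z/2$. Then $S_N=0$, so $H=0$ and the claimed right-hand side is $\BigO{\hsum^2\beta^2/N}$. On the other hand $M_N=C^\dagger C$ with $C=e^{-bX}e^{-bZ}e^{bX}e^{bZ}$ and $b=\hsum\beta/8$. Here $C$ is real with $\det C=1$ and $C_{11}-C_{22}=-2\sinh^2(b)\sinh(2b)\neq 0$, so it is not orthogonal. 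Hence $\trace{C^\dagger C}>2$, and $\log L_N$ converges to a positive constant as $N\to\infty$.

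What your route actually proves is a high-probability statement: with probability at least $1-\delta$ over the path, $\log L_N=\log\bigl(2^{-n}\trace{e^{-\beta H}}\bigr)+\BigO{\hsum^2\beta^2/N+K^{3/2}\hsum^3\beta^3N^{-3/2}}$. This matches the lemma only once $N\gtrsim K^3\hsum^2\beta^2$.

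The paper takes a different route. It writes $\prob{m_t=\pm1\mid j_t}=\frac12\bigl(1\mp\tanh\tau\,M_{j_t}(\walk{t-1})\bigr)$ with $M_j$ evaluated on the exact Gibbs state $\rho_{\walk{t-1}}$. It then linearizes $\log L_N$, identifies $-\tau M_j=\partial_{\bfx_j}\log\cZ$, and telescopes $\log\cZ$ by a first-order Taylor expansion with $\BigO{\tau^2}$ remainder per step. That yields the clean $\BigO{\hsum^2\beta^2/N}$ path-wise, but only because the true outcome probabilities, which depend on the actual state $\rho_{t-1}$, are replaced by Gibbs-state ones. Your identity shows that the accumulated effect of this replacement is exactly the term $\log\trace{M_N}-\log\trace{e^{-\beta H}}$ you could not control. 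Your approach is therefore the more careful one, but neither argument establishes the lemma as literally stated.
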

\begin{proof}
For convenience, denote $\rho_\bfx$ as the thermal state of Hamiltonian with respect to $\bfx$ and $M_j(\bfx) = \trace{\sigma_j \rho_{\bfx}}$
By Theorem~\ref{thm:pauli select}, the outcome distribution at step $t$ given the Pauli operator $\sigma_{j_t}$ can be written as 
\begin{equation}
    \prob{m_t=\pm1 \mid j_t} = \frac{1}{2}\Bigl(1\mp \tanh\tau \trace{\sigma_{j_t} M_j(\walk{t-1}) }\Bigr)
\end{equation}
Since $\probQ{m_t = \pm1 \mid j_t} = 1/2$ and the probabilities for sampling $j_t$ for both measures are the same, the likelihood ratio takes the product form
\begin{equation}
     L_N(\cP) 
     = 2^N \prod_{t=1}^N  \prob{m_t=\pm1 \mid j_t}
     = \prod_{t=1}^N  \Bigl(1 - m_t \tanh(\tau) M_{j_t}(\walk{t-1})\Bigr)
.\end{equation}
Using $\log(1+y)=y+\BigO{y^2}$ and $\tanh\tau=\tau+\BigO{\tau^3}$ gives
\begin{equation}\label{eqn:logLN-expand}
  \log L_N(j,m)
  = -\tau\sum_{t=1}^N m_t M_{j_t}(\walk{t-1})+\BigO{N\tau^2}
.\end{equation}
Denote the parition function of $H^{(t)}$ at the $t$-th step as
\begin{equation}
    \cZ(\walk{t}) 
    = \trace{\exp\!\left(-\beta H^{(t)} \right)}
    = \trace{\exp\!\left(-\tau \sum_{j=1}^L \walk{t}_j \sigma_j \right)} 
.\end{equation}
Observe that differentiating $\log \cZ(\bfx)$ with respect to $\bfx_j$ gives
\begin{equation}
  \partial_{\bfx_j} \log\cZ(\bfx)
  =\frac{1}{\cZ(\bfx)}\trace{\left(-\tau\sigma_j\right) \exp\!\left(-\tau\sum_{j=1}^L \bfx_j\sigma_j\right)}
  =-\tau M_j(\bfx)
.\end{equation}
Therefore, since $\walk{t}-\walk{t-1}=m_t e_{j_t}$,
\begin{equation}
  -\tau m_t M_{j_t}(\walk{t-1})
  =m_t \partial_{\walk{t-1}} \log\cZ(\walk{t-1})
  =\nabla_\bfx \log\cZ(\walk{t-1})\cdot\bigl(\walk{t}-\walk{t-1}\bigr)
.\end{equation}

Because $\cZ(\bfx)$ depends on $\bfx$ through $\tau \bfx$ and $\tau\ll 1$, a Taylor
expansion of $\log\cZ$ along one lattice step gives
\begin{equation}
  \log\cZ(\walk{t})-\log\cZ(\walk{t-1})
  =\nabla_\bfx \log\cZ(\walk{t-1})\cdot\bigl(\walk{t}-\walk{t-1}\bigr)
   +\BigO{\tau^2}
.\end{equation}
Summing over $t=1,\dots,N$ and substituting into Equation~\eqref{eqn:logLN-expand} yields
\begin{equation}
  \log L_N(j,m)
  =\log\cZ(\walk{N})-\log\cZ(0)+\BigO{N\tau^2}.
\end{equation}
Finally, $\cZ(0)=\trace{I}=2^n$ and
$\cZ(\walk{N})=\trace{\exp(-\tau\sum_j \bfx_j\sigma_j)}=\trace{e^{-\beta H^{(N)}}}$, so
the statement follows from $N\tau^2=\hsum^2\beta^2/N$.
\end{proof}
\vspace{1em}

We now state the theoretical result for the sample distribution.

\renewcommand\thetheorem{2}
\begin{theorem}[Complete version]~\label{thm:distribution appendix}
    Suppose $n \gg 1$ and $N = \BigOmega{n\hsum^2\beta^2}$. Let $H = \sum_{j=1}^L c_j \sigma_j$ and $p_j = c_j /\hsum$.
  When $\sum_j c_j^2 \leq N \hsum^2$, the datum $(\gibbs{\beta}{H}, H)$ can be sampled by Algorithm~\ref{alg:sample thermal} with probability proportional to $\frac{1}{2^{n}} \trace{e^{-\beta H}} \cdot D(\bfx)$,
  where $D$ is given as
\begin{equation}
    D(\bfx) = \frac{a_N(\bfx)}{(2\pi N)^{L/2}\sqrt{\prod_{j=1}^L p_j}}
    \exp\left(-\frac{1}{2N} \sum_j p_j \bfx_j^2 \right)
    + o\bigl(N^{-L/2}\bigr)
.\end{equation}
\end{theorem}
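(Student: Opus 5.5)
The plan is a change of measure from the true path law to the reference measure $Q$ of Equation~\eqref{eqn:reference prob}, followed by aggregation over paths with the same endpoint. The output label $H=H^{(N)}$ is a deterministic function of the endpoint $\walk{N}=\bfx$, with $\bfx_j = N c_j/\hsum$ up to the normalization used in Algorithm~\ref{alg:sample thermal}. Hence the probability of sampling $(\gibbs{\beta}{H},H)$ is the total true probability of all paths $\cP$ ending at $\bfx$:
\begin{equation*}
    \prob{\walk{N}=\bfx}=\sum_{\cP:\,\walk{N}=\bfx}\prob{\cP}=\sum_{\cP:\,\walk{N}=\bfx} L_N(\cP)\,\probQ{\cP}.
\end{equation*}

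\textbf{Step 1: the likelihood ratio depends only on the endpoint.} Lemma~\ref{lem:independ on path} gives, for $N=\BigOmega{n\hsum^2\beta^2}$, that $L_N(\cP)=\frac{1}{2^n}\trace{e^{-\beta H}}\,(1+\BigO{\hsum^2\beta^2/N})$ uniformly over paths. The right-hand side depends on $\cP$ only through $\walk{N}$. Since the error is uniform, it factors out of the sum:
\begin{equation*}
    \prob{\walk{N}=\bfx}=\frac{1}{2^n}\trace{e^{-\beta H}}\,\probQ{\walk{N}=\bfx}\,\bigl(1+o(1)\bigr).
\end{equation*}
The multiplicative $1+o(1)$ factor is absorbed into "proportional to". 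I would also check that the $\BigO{\tau^2}$ Taylor error inside Lemma~\ref{lem:independ on path} is uniform in the endpoint. It is, because $\tau\inorm{\sum_j \bfx_j\sigma_j}\le\tau N$ is bounded by $\beta\hsum$ and the Hessian of $\log\cZ$ in the variable $\tau\bfx$ is bounded by constants.

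\textbf{Step 2: the reference walk.} Under $Q$ the walk has i.i.d.\ increments $\pm e_j$, each with probability $\bmh_j/(2\hsum)$. Lemma~\ref{lem:reference distribution}, applied with weights $p_j=\bmh_j/\hsum$, then yields the local central limit form. This form is the normal density times the reachability parity $a_N(\bfx)$, with error $o(N^{-L/2})$, and it is exactly $D(\bfx)$. The parity factor accounts for the fact that only $\bfx$ with $\sum_j|\bfx_j|\le N$ and the correct parity are reachable. The condition $\sum_j c_j^2\le N\hsum^2$ in the statement plays the role of the hypothesis of Lemma~\ref{lem:reference distribution}, placing $\bfx$ in the diffusive window where the local limit theorem applies.

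\textbf{Step 3: combine.} Multiplying the results of Steps 1 and 2 gives the claimed form $\frac{1}{2^n}\trace{e^{-\beta H}}\,D(\bfx)$. The normalizing constant is $\sum_\bfx \frac{1}{2^n}\trace{e^{-\beta H_\bfx}}D(\bfx)$, and it is independent of $\bfx$.

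\textbf{Main obstacle.} The delicate step is Step 1 together with the additive $o(N^{-L/2})$ term. The multiplicative error from the likelihood ratio is harmless. The additive local limit error, however, can dominate the Gaussian term in the tails, so the "proportional to" statement is only meaningful where the Gaussian part is not $o(N^{-L/2})$. I would handle this by stating the result with the error term kept inside $D$, as the theorem does. That way only the uniformity of Lemma~\ref{lem:independ on path} across paths must be argued, and that uniformity follows because its error bound does not depend on $\cP$.
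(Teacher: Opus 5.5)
Your proposal is correct and follows the paper's proof essentially step for step: write $\prob{\cP}=L_N(\cP)\,\probQ{\cP}$, use Lemma~\ref{lem:independ on path} to pull the endpoint-only factor $\frac{1}{2^n}\trace{e^{-\beta H}}$ out of the sum over paths ending at $\bfx$, and substitute Lemma~\ref{lem:reference distribution} for $\probQ{\walk{N}=\bfx}$. Your explicit checks on the uniformity of the error and on the additive $o(N^{-L/2})$ term are refinements the paper leaves implicit. One caveat: identifying the lemma's output with the stated $D$ requires reading $p_j=\bmh_j/\hsum$ and putting $\bfx_j^2/p_j$ in the exponent, which is what you implicitly did, since the statement's $p_j=c_j/\hsum$ and $p_j\bfx_j^2$ are evidently typos.
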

\renewcommand{\thetheorem}{S\arabic{proposition}}
\begin{proof}
Fix $H=\sum_{j=1}^L c_j\sigma_j$ on the output grid, and let
$\bfx \in\ZZ^L$ be the corresponding endpoint vector $\bfx_j = c_j N / \hsum$. Denote $\cP_\bfx$ the set of paths whose endpoint equals $\bfx$.
Then the event that Algorithm~\ref{alg:sample thermal} returns $H$ is exactly the
endpoint event $\walk{N}= \bfx$.

By definition of the likelihood ratio and Lemma~\ref{lem:independ on path},
\begin{equation}
  \prob{\cP}
  = \probQ{\cP}\cdot \exp\left(
  \log\left(\frac{1}{2^n}\trace{e^{-\beta H}}\right)
  +\BigO{\frac{\hsum^2\beta^2}{N}}\right).
\end{equation}
When $\norm{\bfx}=\BigO{\sqrt{N}}$. Therefore, absorbing the $\bfx$-independent normalization into the proportionality constant, summing over all $\cP$ with same endpoint, we obtain
\begin{align}
    \prob{\walk{N}=\bfx}&= \exp\left(
  \log\left(\frac{1}{2^n}\trace{e^{-\beta H}}\right)
  +\BigO{\frac{\hsum^2\beta^2}{N}}\right)\cdot \sum_{\cP\in\cP_\bfx}\probQ{\cP}\notag\\ 
    &\propto \frac{1}{2^{n}}\tr\left[e^{-\beta H}\right]\cdot
  \probQ{\walk{N}=\bfx}
.\end{align}
Denote $D(\bfx) = \probQ{\walk{N}=\bfx}$. Substituting Lemma~\ref{lem:reference distribution} yields
the Gaussian approximation claimed in the theorem.
\end{proof}

\section{Application details}\label{sec:apps}

\subsection{Level statistics of quantum density operators}

This section outlines the framework for analyzing the spectral properties of the generated quantum thermal states, which we use as a benchmark for whether the sampler captures nontrivial chaotic correlations. Level statistics, commonly applied to Hamiltonian energy spectra to diagnose quantum chaos and ergodicity~\cite{wigner1967random,oganesyan2007localization}, extend naturally to density operators $\rho$ via the modular Hamiltonian. 

For a strictly positive density operator $\rho$, its spectral properties are closely related to those of the fictitious Hamiltonian governing the thermal distribution. We define the modular Hamiltonian $H$ by $\rho = \frac{e^{-H}}{\tr(e^{-H})}$. The eigenvalues of $\rho$, denoted by $\{p_i\}$, are directly related to the spectrum of $H$, denoted by $\{\xi_i\}$, via the transformation $\xi_i = -\ln(p_i) - \ln Z$. The level statistics of $\rho$ are thus analyzed through the unfolding of the "entanglement energies" or modular levels $\{\xi_i\}$.

To avoid the ambiguities associated with spectral unfolding in finite-size systems, we utilize the adjacent gap ratio statistics introduced by Oganesyan and Huse~\cite{oganesyan2007localization}. Let the sorted spectrum of the modular Hamiltonian be ordered such that $\xi_1 \le \xi_2 \le \dots \le \xi_D$, where $D$ is the dimension of the Hilbert space. We define the nearest-neighbor level spacing as:
\begin{equation}
    \delta_n = \xi_{n+1} - \xi_n \geq 0
.\end{equation}
The dimensionless level spacing ratio $r_n$ is defined as:
\begin{equation}
    r_n = \frac{\min(\delta_n, \delta_{n+1})}{\max(\delta_n, \delta_{n+1})} = \min\left( \tilde{r}_n, \frac{1}{\tilde{r}_n} \right) \in [0,1]
\end{equation}
where $\tilde{r}_n = \delta_{n+1} / \delta_n$. The distribution of $r_n$, denoted as $P(r)$, and its mean value $\langle r \rangle$, serve as robust indicators of the nature of the correlations in the quantum state. The following summarizes the known results of the average spacing ratios with respect to the specific RMT ensembles~\cite{atas2013distribution}:
\begin{itemize}
    \item \emph{Integrable (Poissonian) limit}: Systems with uncorrelated levels, typical of integrable models or localized phases, exhibit level spacings following a Poisson distribution with $P(r) = \frac{2}{(1+r)^2}$ and $\langle r \rangle_{\text{Poisson}} \approx 0.386$.
    \item \emph{Chaotic (Wigner--Dyson) limit}: Ergodic quantum systems exhibit level repulsion characteristic of Random Matrix Theory (RMT), with statistics depending on the symmetry class: $\langle r \rangle_{\text{GOE}} \approx 0.536$ for systems with time-reversal symmetry, and $\langle r \rangle_{\text{GUE}} \approx 0.603$ for systems without.
\end{itemize}
In our algorithm, $\langle r \rangle$ values approaching the GUE/GOE limits in the sampled thermal states indicate that the generated state captures the complexity and correlations of a thermalizing many-body system. Conversely, $\langle r \rangle \approx 0.39$ would suggest a failure to capture thermal correlations or an underlying integrability in the generator set.

\subsection{State classification by Hamiltonian properties}~\label{sec:state classification}

Let us specify the background first. 
Let $\Sigma$ and $\bmh$ be as defined in Section~\ref{sec:sample proof} that describe a set of Hamiltonians $\fH$.
For a label function $f: \fH \to \set{0, 1}$ that divide this set to group 0 and group 1, given access to an unknown thermal state $\gibbs{\beta}{H}$ (with known temperature $\beta$), the goal of state classification is to predict whether $f(H)$ is 0 or 1.
Denote the labelled set $D_y = \setcond{\rho}{(\rho, H) \in \dataset,\, f(H) = y}$ for $y = 0, 1$, and without loss of generality, suppose $\abs{D_0} = \abs{D_1}$.
Then, equivalently, we are looking for an $n$-to-1 channel $\cE$ that maximizes the average success probability of prediction
\begin{equation}~\label{eqn:target prob}
    P(\cE) = \frac{1}{2} \sum_{y = 0}^1 \prob{\textrm{predict }y \mid \rho \in D_y}
    = \frac{1}{2} \sum_{y = 0}^1 \integral{D_y}{}{ \rho}{ \trace{\cE(\rho) \cdot \ketbra{0}{0}}}
    = \frac{1}{2} \sum_{y = 0}^1 \trace{\cE(\rho_y) \cdot \ketbra{y}{y}}
\end{equation}
for $\rho_y = \integral{D_y}{}{ \rho}{ \rho }$ is the average mixture of the state set $D_y$, and $\rho_y = \frac{1}{\abs{D_y}} \sum_{\rho \in D_y} \rho$ when $D_y$ is finite. Theoretically, this average success probability of prediction is upper bounded by the Helstrom limit~\cite{helstrom1969quantum},
\begin{equation}~\label{eqn:helstrom}
    P(\cE) \leq \frac{1}{2} (1 + \tnorm{\rho_0  - \rho_1})
.\end{equation}
In later experiment, the Helstrom limit is used to evaluate the convergence performance.

The rest of this section outlines an example training algorithm on how to use our sampler as data source to find a protocol to tackle this task. In the standard expectation-value-based formulation, each loss evaluation typically requires many circuit repetitions on the same input state; here we instead use single-shot outcomes on freshly sampled batches from the stochastic data source.
We model the $n$-to-1 channel $\cE$ by an $(n+1)$-qubit parameterized quantum circuit~\cite{benedetti2019parameterized} $U(\theta)$ with one ancilla qubit. The circuit takes an $n$-qubit state $\rho$ as input and traces out the main system; we denote the resulting parameterized channel as $\cE_\theta$.
Then for a finite set of input state $\rho_1 , \ldots,\rho_d$ that approximates thermal states $\gibbs{\beta}{H_1}, \ldots, \gibbs{\beta}{H_d}$ generated by our sampler, respectively,
and their 0/1 label vector $\hat{y} = (f(H_1), \ldots, f(H_d)) \in \set{0, 1}^d$, the average prediction loss is defined as the stochastic function
\begin{equation}~\label{eqn:loss}
    \loss(\theta) = \big(1 - \frac{1}{d} y(\theta) \cdot \hat{y}\big) /2,
    \textrm{\quad where\,\,} \prob{y_i(\theta) = \hat{y}_i} = \trace{\cE_\theta(\gibbs{\beta}{H_i}) \cdot \ketbra{f(H_i)}{f(H_i)}}
\end{equation}
such that the expectation is the average failure probability of prediction,
\begin{equation}
    \expect{\loss(\theta)} 
    = 1 - \frac{1}{d}\sum_{i=1}^d\trace{\cE_\theta(\gibbs{\beta}{H_i}) \cdot \ketbra{f(H_i)}{f(H_i)}}
.\end{equation}
Then our objective is to find the optimal parameters $\theta^* = \arg \min_\theta \expect{\loss(\theta)}$.
Since $\loss(\theta)$ is stochastic, we estimate the gradient using a stochastic variant of the parameter-shift rule~\cite{mitarai2018quantum,schuld2019evaluating}. Unlike the standard formulation based on expectation-value estimation, each loss evaluation here is obtained from single-shot measurements on a freshly sampled batch.
For each parameter $\theta_j$ associated with a Pauli-rotation gate $U(\theta_j) = \exp(-i\theta_j P/2)$, the gradient component can be estimated by
\begin{equation}
    \nabla_j {\loss}(\theta) = \frac{\loss(\theta + \frac{\pi}{2} e_j) - \loss(\theta - \frac{\pi}{2} e_j)}{2},
\end{equation}
where $e_j$ is the unit vector in the $j$th coordinate. This estimator is unbiased in the presence of a stochastic data source, i.e., $\expect{\nabla {\loss}(\theta)} = \nabla_\theta \expect{\loss(\theta)}$.
The complete training procedure is summarized in Algorithm~\ref{alg:psr}. Here $\epsilon$ controls the sampling precision of our sampler, and $w$ is a weight function that controls the importance of each Hamiltonian.

\begin{algorithm}[H]\label{alg:psr}
\caption{Training a parameterized quantum circuit for state classification}
\SetKwInOut{Input}{Input}
\SetKwInOut{Output}{Output}
\Input{A label function $f$, inverse temperature $\beta$, a set $\Sigma$ of $n$-qubit Pauli operators with boundary vector ${\bmh}$, precision $\epsilon$, training steps $T$, batch size $d$, learning rate $\gamma$, weight function $w(H) \in \RR_+$}
\Output{A channel $\cE$ that approximately maximizes $P(\cE)$ in Equation~\eqref{eqn:target prob}}

$N \gets \hsum^2 \beta^2 \epsilon^{-2/3}$\;

$\mathrm{SAMPLE} \gets$ Algorithm~\ref{alg:sample thermal} with inputs $\beta$, $\Sigma$, $\bmh$, $N$\;

Choose a circuit ansatz $U(\theta)$ with random parameters $\theta$, and define $\cE_\theta \gets \trace[\textrm{sys}]{U(\theta) \left( \ketbra{0}{0}_{\textrm{anc}} \ox (\cdot)_{\textrm{sys}} \right) U(\theta)^\dagger}$\;

\For{$t = 1, \ldots, T$}{
    $g \gets 0^{\abs{\theta}}$\;
    \For{$j = 1, \ldots, \abs{\theta}$}{
        \For{$s \in \set{\mathord{+}1, \mathord{-}1}$}{
            \For{$i = 1, \ldots, d$}{
                $(\rho_i, H_i) \gets \mathrm{SAMPLE}()$\;
                $y_i \gets$ measure $\cE_{\theta + s \cdot \frac{\pi}{2} e_j}(\rho_i)$, $\hat{y}_i \gets f(H_i)$\;
            }
            $\loss_s \gets \big(1 - \sum_i w(H_i) \, y_i \hat{y}_i \big/ \sum_i w(H_i) \big) /2$\;
        }
        $g_j \gets (\loss_{+1} - \loss_{-1})/2$\;
    }
    $\theta \gets \theta - \gamma \cdot g$ \quad (or use Adam optimizer)\;
}
\Return{$\cE = \cE_\theta$}
\end{algorithm}

The convergence of Algorithm~\ref{alg:psr} depends on the variance of the parameter-shift gradient estimate. Since each iteration requires $2\abs{\theta}$ stochastic loss evaluations, both the evaluation budget and the batch size $d$ affect the gradient noise in the presence of data and measurement sampling. If $d$ is not large enough, the variance may be too high for the classifer to converge. The weight function $w$ adjusts the loss in Equation~\eqref{eqn:loss} to emphasize Hamiltonians whose importance is not too small, thereby improving the effective signal-to-noise ratio.

\section{Experimental setting}\label{sec:exp_setting}

\subsection*{Figure 2-3a} 
All experiments for Figure 2 and Figure 3a are performed on a $3 \times 3$ two-dimensional grid of $n = 9$ qubits, with sites labeled $1, \ldots, 9$ in row-major order. The nearest-neighbor pairs $\langle i, j \rangle$ consist of $12$ edges connecting horizontally and vertically adjacent sites. For the Heisenberg model, the sampled Hamiltonians take the form
\begin{equation}
    H_{\mathrm{Heis}} = \sum_{\langle i,j \rangle} \left( c_{ij}^{(X)} X_i X_j + c_{ij}^{(Y)} Y_i Y_j + c_{ij}^{(Z)} Z_i Z_j \right),
\end{equation}
where the coefficients $c_{ij}^{(\alpha)}$ are generated by the sampling algorithm and satisfy $|c_{ij}^{(\alpha)}| \leq h$ for a uniform bound $h > 0$. For the transverse-field Ising model, the sampled Hamiltonians take the form
\begin{equation}
    H_{\mathrm{Ising}} = \sum_{\langle i,j \rangle} c_{ij} Z_i Z_j + \sum_{i=1}^{n} g_i X_i,
\end{equation}
where the coefficients $c_{ij}$ and $g_i$ are similarly bounded. In both cases, $\hsum = \sum_j h_j$ denotes the sum of all coefficient bounds. Throughout, we set the step-number constant $C = \hsum^{2}/\epsilon_0^{2/3}$ with target precision $\epsilon_0 = 10^{-6}$.

For Figure~2, all three panels use the Heisenberg model with the maximally mixed state $I/d$ as the initial state. In panel~(a), we scan the inverse temperature from $\beta = 1$ to $\beta = 10$ and test three scaling exponents $k \in \set{1.5, 2.0, 2.5}$, which define the step number as $N = C \beta^{k}$. For each parameter pair $(\beta, N)$, we perform $100$ independent sampling runs and record the maximal trace-distance error $\epsilon$ among the samples. Panels~(b) and~(c) both fix $\beta = 2$. In panel~(b), we set $N = C \beta^{2}$ and generate $10\,000$ independent samples to obtain the empirical coefficient distribution; the theoretical curve is computed from the normal distribution predicted by Theorem~2, reweighted by the thermal partition function and estimated via Monte Carlo with $10\,000$ samples. In panel~(c), we scan the scaling exponent $k$ from $1$ to $3$ and perform $100$ independent runs for each $N = C \beta^{k}$. The inverse precision is defined as $1/\epsilon$, where $\epsilon$ is the maximal trace-distance error among the samples. The effective sample range is quantified by the operator norm $\inorm{H}$ of each sampled Hamiltonian; we report the sample mean and standard error over the thermal ensemble.

For Figure~3(a), we use the transverse-field Ising model to generate the output states. The initial state is a thermal state of a Heisenberg Hamiltonian, whose modular spectrum exhibits near-Poisson statistics in this finite-size setting. We fix $\beta = 2$ and $N = C \beta^{2}$, and perform $20$ independent sampling runs. For each output state $\rho$, we define the modular Hamiltonian $K = -\log \rho$ and compute the adjacent gap-ratio statistic $r$ from the spectrum of $K$. To handle near-degeneracies, eigenvalues within a relative tolerance of $10^{-6}$ are merged into a single representative level before evaluating the spacing ratios. The resulting $r$ values are aggregated over all samples to form the final histogram.

\subsection*{Figure 3b}

\begin{figure}[H]
    \centering
    \includegraphics[width=0.5\linewidth]{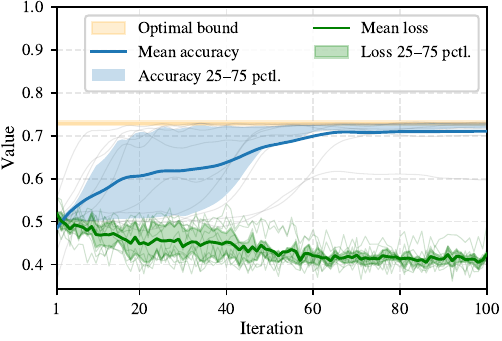}
    \caption{Training details for the state classification task, showing both accuracy and loss as functions of the training iteration.}
    \label{fig:classification}
\end{figure}

The complete figure of Figure 3b, including loss values, is shown in Figure~\ref{fig:classification}.
In this experiment, we follow the procedure described in Section~\ref{sec:state classification} to classify two-qubit Hamiltonians by the sign of their $X \ox X$ coefficient, using their thermal states at inverse temperature $\beta = 2$.
The Hamiltonians take the form $H = h_1 X \ox X + h_2 I \ox Z + h_3 Z\ox I$ with $\abs{h_1}, \abs{h_2}, \abs{h_3} \leq 1$. 
We further restrict to Hamiltonians whose $X \ox X$ coefficient satisfies $\abs{h_1} \geq 0.03$.
We run Algorithm~\ref{alg:psr} over $10$ independent trials with the following setting:
\begin{enumerate}[leftmargin=1em]
\setlength{\itemsep}{0pt}
\setlength{\parskip}{0pt}
    \item [-] $f: H \mapsto h_1 / \abs{h_1}$,
    \item [-] $\Sigma = \set{X \ox X, I \ox Z, Z\ox I}$ with $h = (1, 1, 1)$,
    \item [-] number of training step $T = 100$, number of training size $d = 1000$,
    \item [-] parameter-shift value $\pi/2$, learning rate $\gamma = 0.1$,
    \item [-] the weight function $w(H) = 0.05 + 0.95 \operatorname{Sigmoid}((\abs{h_1} - 0.01\lambda) / 0.0025\lambda)$.
\end{enumerate}
We choose the implementation of the variational quantum classifier to consist of a universal $2$-qubit layer (built from parameterized single-qubit universal gates $U$ and Pauli-rotation gates $R_z$, $R_y$) on the input state, together with two additional single-qubit universal gates on the ancilla before and after coupling it to the system, shown as follows.

\begin{figure}[H]
    \centering
    \includegraphics[width=\linewidth]{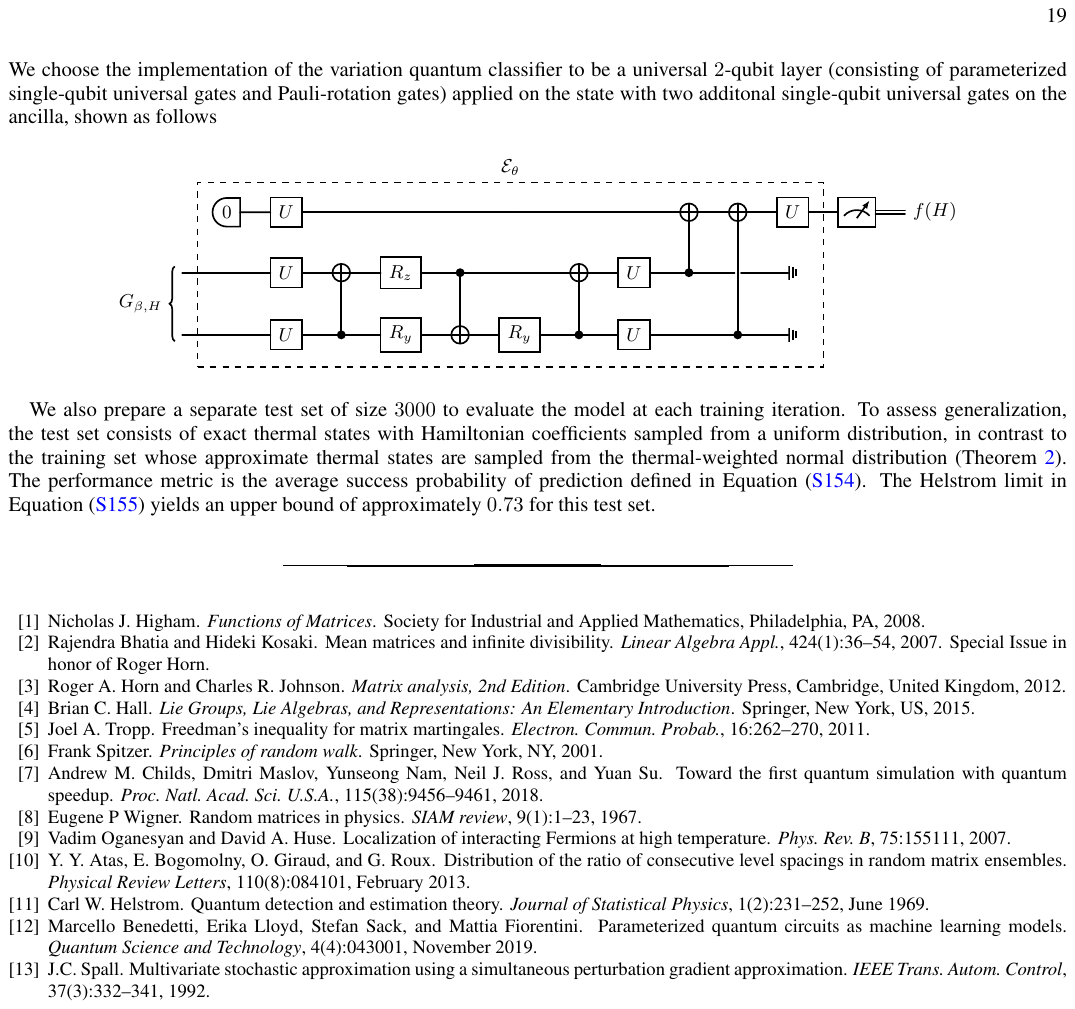}
% \begin{quantikz}[transparent]
% {} & \inputD{0}\gategroup[3,steps=11,style={inner sep=4pt,dashed,label={above:{$\cE_\theta$}}}]{} & \gate[1]{U} & {} & {} & {} & {} & {} & {} & \targ{} & \targ{} & \gate[1]{U} & \meter{} & \setwiretype{c}\rstick{$f(H)$} \\
% \lstick[2]{$\gibbs{\beta}{H}$} & {} & \gate[1]{U} & \targ{} & \gate[1]{R_{z}} & \ctrl[]{1} & {} & \targ{} & \gate[1]{U} & \ctrl[]{-1} & \push{\,\,} & \ground{}  \\
% {} & {} & \gate[1]{U} & \ctrl[]{-1} & \gate[1]{R_{y}} & \targ{} & \gate[1]{R_{y}} & \ctrl[]{-1} & \gate[1]{U} & {} & \ctrl[]{-2} & \ground{}
% \end{quantikz}
\end{figure}

For each independent trial, we also prepare a separate test set of size $3000$ to evaluate the model at each training iteration. For each test set, we compute the Helstrom limit in Equation~\eqref{eqn:helstrom}, which yields an upper bound of approximately $0.72$--$0.74$.
To assess generalization, the test set consists of exact thermal states with Hamiltonian coefficients $h_1$, $h_2$, $h_3$ sampled from a uniform distribution, in contrast to the training set whose approximate thermal states are sampled from the thermal-weighted normal distribution (Theorem~\ref{thm:distribution appendix}).
The performance metric is the average success probability of prediction defined in Equation~\eqref{eqn:target prob}.
 
\end{document}